\documentclass[12pt,reqno,a4paper]{amsart}
\usepackage[left=3cm, top=3.5cm, bottom=3.5cm, right=3cm]{geometry}
\usepackage{mathrsfs}
\usepackage{amsmath,amsxtra,amsfonts,amssymb, amsthm, amscd, epsfig}
\usepackage[dvipsnames,usenames]{color}
\usepackage{amssymb,amsmath,amsfonts,epsfig}
\usepackage{graphicx}

\usepackage{tikz}
\usepackage{hyperref}

\numberwithin{equation}{section}

\newtheorem{thm}{Theorem}[section]

\newtheorem{cor}[thm]{Corollary}

\newtheorem{lem}[thm]{Lemma}

\newtheorem{rem}[thm]{Remark}

\newcommand{\eqa}{\begin{eqnarray}}
\newcommand{\eeqa}{\end{eqnarray}}
\newcommand{\beq}{\begin{equation}}
\newcommand{\eeq}{\end{equation}}

\newcommand{\p}{\partial}

 \def\res{\mathop{\text{\rm Res}}}

\def \dsum{\displaystyle\sum}

\allowdisplaybreaks[1]

\begin{document}

\title[]
{Principal Hierarchies for Frobenius Manifolds with Rational and Trigonometric Superpotentials}

\author[]{Shilin Ma}

\address[]{Shilin Ma,  School of Mathematics and Statistics,  Xi’an Jiaotong University, Xi’an 710049,  P. R.
	China,}
\email{mashilin@xjtu.edu.cn}
\date{\today}

\begin{abstract}
In this paper, we construct the principal hierarchies for Frobenius manifolds with rational and trigonometric superpotentials, as well as their almost dualities. We demonstrate that in both cases, submanifolds with even superpotentials form natural Frobenius submanifolds, and their principal hierarchies can be obtained as restrictions of the principal hierarchies for the original Frobenius manifolds. Furthermore, we introduce a natural rank-1 extension for each of these Frobenius manifolds, providing solutions to the associated open WDVV equations. The principal hierarchy for each extension is also explicitly constructed.

\vskip 2ex

\end{abstract}

\maketitle 

\tableofcontents
\section{Introduction}

The concept of Frobenius manifolds, first introduced by Dubrovin in \cite{dub1998}, provides a geometric framework for capturing the associativity equations inherent in two-dimensional topological field theory (2D TFT). This concept is highly relevant across various areas of mathematical physics, including Gromov-Witten theory, singularity theory, and integrable systems, among others. Its significance is further highlighted by applications in works such as \cite{manin1999frobenius, hertling2002frobenius, dubrovin2001normal, givental2005simple, liu2015bcfg, dubrovin2016hodge} and their references.

Every Frobenius manifold is associated with an integrable hierarchy of hydrodynamic type, referred to as the principal hierarchy for the Frobenius manifold.  This hierarchy involves unknown functions depending on a single scalar spatial variable and various time variables. In the case of semisimple Frobenius manifold, this hierarchy can be deformed into a dispersive hierarchy, referred to as the Dubrovin-Zhang hierarchy. The tau function, determined by the string equation of the Dubrovin-Zhang hierarchy, provides the partition function for the corresponding 2D TFT. This intricate relationship between Frobenius manifold and 2D TFT has significantly advanced our understanding of their geometric and algebraic structures. For further details, see \cite{dubrovin2001normal, liu2022variational, liu2023variational}.

Although the abstract theory of Dubrovin and Zhang is well-established, explicitly constructing the principal hierarchy for a given Frobenius manifold presents a certain level of difficulty.  For relatively low-dimensional Frobenius manifold, one can directly derive the principal hierarchy by solving the PDE systems that govern its Hamiltonian densities.  Examples of this process are provided in \cite{dubrovin2001normal, Tu_2003, Tu_2007}. For higher-dimensional cases, although constructions have been provided, as in \cite{aoyama1996topological, raimondo2012frobenius, carlet2015principal}, these rely on the specific structures of the corresponding Frobenius manifolds and lack general applicability.

The main result of this paper is the explicit construction of the principal hierarchies for Frobenius manifolds with rational and trigonometric  superpotentials, respectively. Our approach is as follows: we first provide suitable representations for the cotangent spaces and derive the explicit formulas for the Hamiltonian structures associated with the flat metrics for these Frobenius manifolds. Then, we reformulate the PDE systems governing the Hamiltonian densities of the principal hierarchies into algebraic equations involving the superpotentials, which admit straightforward solutions. Using a similar approach, we have also constructed the principal hierarchies for the almost dualities \cite{dubrovin2004almost} of these Frobenius manifolds.

Another important class of Frobenius manifolds consists of those with even rational and trigonometric superpotentials. We will show that these manifolds are natural Frobenius submanifolds \cite{STRACHAN2001, STRACHAN2004} of the two types of Frobenius manifolds discussed above, and that their associated principal hierarchies are direct restrictions of those of the parent manifolds.

The open WDVV equations were first introduced in \cite{horev2012open} in the context of open Gromov-Witten theory. P. Rossi observed that a solution to the open WDVV equations is equivalent to a flat F-manifold, which serves as a rank-1 extension of the given Frobenius manifold. This extension was systematically studied in \cite{alcolado2017extended}.  The descendant potential and Virasoro constraints for a flat F-manifold were constructed in \cite{basalaev2019open} for genus-zero cases and in \cite{arsie2023semisimple} for higher genera. The dispersive deformation of the principal hierarchy for a flat F-manifold, as a generalization of the DR hierarchy for a Frobenius manifold, was studied in \cite{arsie2021flat}. Examples related to the open Gromov-Witten theory of a point and open r-spin theory were investigated in \cite{pandharipande2014intersection, buryak2015equivalence, buryak2016open} and \cite{Buryak:2018ypm,buryak2024open}, respectively. In the present paper, we show that there exist natural rank-1 extensions for Frobenius manifolds with rational and trigonometric superpotentials, and explicitly construct the principal hierarchies for these extensions.

We note that the construction presented in this paper has a certain level of generality. For instance, in \cite{ma2024infinite}, we employed a similar approach to construct the principal hierarchy for the infinite-dimensional Frobenius manifold underlying the genus-zero universal Whitham hierarchy.

	Let us now state our main results precisely.
	
	Given positive integers $n_{0},\cdots,n_{m}$, let $M^{cKP}$ denote the space of rational functions of the form:
	\begin{equation}\label{kdvsup}
		\lambda(z)=\frac{1}{n_{0}}z^{n_{0}}+a_{0,n_{0}-2}z^{n_{0}-2}+\cdots+a_{0,0}+\sum_{i=1}^{m}\sum_{j=1}^{n_{i}}a_{i,j}(z-a_{i,0})^{-j},
	\end{equation}
	where the set of coefficients $\{a_{0,i}\}_{i=0}^{n_{0}-2}\cup\{a_{1,i}\}_{i=0}^{n_{1}}\cup\cdots\cup \{a_{m,i}\}_{i=0}^{n_{m}}$ serves as local coordinates on the manifold $M^{cKP}$.
	The space $M^{cKP}$
	can be regarded as a special type of Hurwitz space and is thus equipped with a Frobenius manifold structure constructed by Dubrovin \cite{dub1998}.  When \( m=0 \), this structure coincides with that on the orbit space of the Coxeter group  of type \( A \) \cite{dubrovin1998geometry}.
	
	The flat coordinates on \( M^{cKP} \), denoted as \( \mathbf{t}^{cKP} = \{t_{0,j}\}_{j=1}^{n_0-1} \cup \{t_{1,j}\}_{j=0}^{n_1} \cup \cdots \cup \{t_{m,j}\}_{j=0}^{n_m} \), are given by the following expansions:
	\[
	z = 
	\begin{cases}
		t_{i,0} + t_{i,1}w_{i}(z)^{-1} + \cdots, & \text{as } z \to a_{i,0},\ i = 1, \cdots, m, \\
		w_{0}(z) - t_{0,1}w_{0}(z)^{-1} - t_{0,2}w_{0}(z)^{-2} + \cdots, & \text{as } z \to \infty,\ i = 0,
	\end{cases}
	\]
	where
	\[
	w_{i}(z) = 
	\begin{cases}
		(n_i\lambda(z))^{\frac{1}{n_i}} = w_{i,1}(z-a_{i,0})^{-1} + \cdots, & \text{as } z \to a_{i,0},\ i = 1, \cdots, m, \\
		(n_0\lambda(z))^{\frac{1}{n_0}} = z + w_{0,1}z^{-1} + \cdots, & \text{as } z \to \infty,\ i = 0.
	\end{cases}
	\]

\begin{thm}\label{maincKP}
	The Hamiltonian densities of the principal hierarchy for the Frobenius manifold \( M^{cKP} \) are given by
	\begin{align*}
		\theta_{t_{0,j},p}(z) =& -\res_{\infty} c_{0,j;p} w_{0}(z)^{(p+1)n_{0}-j}\, dz, \quad j = 1, \ldots, n_{0} - 1, \\
		\theta_{t_{i,j},p}(z) =& \res_{a_{i,0}} c_{i,j;p} w_{i}(z)^{(p+1)n_{i}-j} \, dz, \quad i = 1, \ldots, m, \ j = 0, \ldots, n_{i} - 1,
	\end{align*}
	and
	\begin{align*}
			\theta_{t_{i,n_{i}},p}(z) =& -\res_{\infty} \frac{\lambda(z)^{p}}{p!} (\log \frac{w_{0}(z)}{z - a_{i,0}} - \frac{c_{p}}{n_{0}}) \, dz + \res_{a_{i,0}} \frac{\lambda(z)^{p}}{p!} (\log(z - a_{i,0}) w_{i}(z) - \frac{c_{p}}{n_{i}}) \, dz \\
		& + \sum_{s \neq i} \res_{a_{s,0}} \frac{\lambda(z)^{p}}{p!} \log(z - a_{i,0}) \, dz, \quad i = 1, \ldots, m,
	\end{align*}
	where the constants $\{c_{i,j;p}\}$ and $\{c_{p}\}$ are defined as
	\[
	c_{i,j;p} = \frac{1}{n_{i} - j} \frac{1}{2n_{i} - j} \cdots \frac{1}{(p + 1)n_{i} - j}, \quad c_{p} = \sum_{s = 1}^{p} \frac{1}{s}.
	\]
	The corresponding Hamiltonian vector fields \( \frac{\partial}{\partial T^{\bullet,p}} = \mathcal{P}(d\theta_{\bullet,p+1}) \) are given by 
	\begin{align*}
		\frac{\partial \lambda(z)}{\partial T^{t_{0,j},p-1}} =& \{(c_{0,j;p-1} w_{0}(z)^{p n_{0} - j})_{\infty, \geq 0}, \lambda(z)\}, \quad j = 1, \ldots, n_{0} - 1, \\
		\frac{\partial \lambda(z)}{\partial T^{t_{i,j},p-1}} =& -\{(c_{i,j;p-1} w_{i}(z)^{p n_{i} - j})_{a_{i,0}, \leq -1}, \lambda(z)\}, \quad i = 1, \ldots, m, \ j = 0, \ldots, n_{i} - 1,
	\end{align*}
	and
	\begin{align*}
			\frac{\partial \lambda(z)}{\partial T^{t_{i,n_{i}},p-1}} =& \{(\frac{\lambda(z)^{p}}{p!} (\log \frac{w_{0}(z)}{z - a_{i,0}} - \frac{c_{p}}{n_{0}}))_{\infty, \geq 0}-(\frac{\lambda(z)^{p}}{p!} (\log(z - a_{i,0}) w_{i}(z) - \frac{c_{p}}{n_{i}}))_{a_{i,0}, \leq -1}, \lambda(z)\}\\
		& - \sum_{s \neq i} \{(\frac{\lambda(z)^{p}}{p!} \log(z - a_{i,0}))_{a_{s,0}, \leq -1}, \lambda(z)\} + \{\frac{\lambda(z)^{p}}{p!} \log(z - a_{i,0}), \lambda(z)\},
	\end{align*}
	where the Poisson bracket is defined as
	\[
	\{f(z,x), g(z,x)\} = \frac{\partial f(z,x)}{\partial z} \frac{\partial g(z,x)}{\partial x} - \frac{\partial f(z,x)}{\partial x} \frac{\partial g(z,x)}{\partial z}.\]
\end{thm}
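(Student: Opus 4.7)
\emph{Proof strategy.} The plan is to exploit the explicit description of the principal Hamiltonian structure of $M^{cKP}$ established earlier in the paper---namely the representation of one-forms through their Laurent tails at the marked points $\{\infty, a_{1,0},\ldots,a_{m,0}\}$ and the induced pairing by residues against $d\lambda$---to convert the abstract recursion
\[
\frac{\partial^{2} \theta_{\alpha,p+1}}{\partial t^{\beta}\, \partial t^{\gamma}} = c^{\sigma}_{\beta\gamma}\,\frac{\partial \theta_{\alpha,p}}{\partial t^{\sigma}}, \qquad \theta_{\alpha,0} = \eta_{\alpha\sigma}t^{\sigma},
\]
governing the Hamiltonian densities of the principal hierarchy into an algebraic identity in $z$: near each marked point, the relevant part of $d\theta_{\alpha,p+1}$ should coincide with $\lambda(z)$ times the corresponding part of $d\theta_{\alpha,p}$, modulo exact one-forms in $z$. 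Once this translation is in place, the remaining task is to check that each of the proposed formulas satisfies this local identity.

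For the non-logarithmic densities $\theta_{t_{0,j},p}$ and $\theta_{t_{i,j},p}$ with $0 \le j < n_{i}$, I would use the identity $\lambda(z) = w_{i}(z)^{n_{i}}/n_{i}$ valid near the corresponding marked point, together with implicit differentiation of $w_{i}^{n_{i}} = n_{i}\lambda$ in the flat coordinates. The key observation is that increasing $p$ by one multiplies the integrand $w_{i}^{(p+1)n_{i} - j}$ by a factor of $w_{i}^{n_{i}} = n_{i}\lambda$, and a short telescoping argument identifies
\[
c_{i,j;p} = \prod_{s=0}^{p}\bigl((s+1)n_{i} - j\bigr)^{-1}
\]
as the unique normalizing constant compatible with the initial condition $\theta_{\alpha,0} = \eta_{\alpha\sigma}t^{\sigma}$. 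The sign distinguishes residues at $\infty$ (positive flows) from residues at $a_{i,0}$ (negative flows), and the case $j=0$ at $i\ge 1$ is handled in the same way as $j\ge 1$.

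The main obstacle is the logarithmic sector $\theta_{t_{i,n_{i}},p}$: here the exponent $pn_{i}$ is an exact multiple of $n_{i}$, so a pure-power ansatz degenerates, and one must introduce primitives involving $\log w_{i}(z)$ and $\log(z - a_{i,0})$. Two phenomena then arise. First, differentiating $\log w_{i}$ in a flat coordinate produces a harmonic correction that accumulates to $c_{p} = \sum_{s=1}^{p} 1/s$ after $p$ iterations of the recursion, accounting for the shifts by $-c_{p}/n_{0}$ and $-c_{p}/n_{i}$ at the two distinguished points. Second, $\log(z - a_{i,0})$ is analytic at every other pole $a_{s,0}$ with $s\ne i$ but still contributes to the residue calculus through its Taylor expansion there, producing the auxiliary sum $\sum_{s \neq i}\res_{a_{s,0}}$. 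Verifying the recursion in this case amounts to a careful bookkeeping of these three contributions and is the most delicate computation in the proof.

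The Lax representation of the flows then follows automatically. The principal Hamiltonian structure acts on a density essentially by differentiation against the pairing induced by $d\lambda$, so applying it to $d\theta_{\bullet,p+1}$ replaces the integrand by its regular part at $\infty$ (respectively its principal part at $a_{i,0}$) and brackets the result against $\lambda(z)$ in the $(z,x)$-variables, yielding exactly the projections $(\cdot)_{\infty,\ge 0}$ and $(\cdot)_{a_{i,0},\le -1}$ stated in the theorem. The additional ungrouped term $\{\lambda(z)^{p}\log(z-a_{i,0})/p!,\,\lambda(z)\}$ in the logarithmic case reflects the global contribution of $\log(z - a_{i,0})$ that is not captured by any single local projection.
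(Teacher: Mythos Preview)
Your strategy for verifying the recursion \eqref{princon1} is workable in the non-logarithmic sector, but it differs from the paper's route and omits one required ingredient.

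\textbf{Missing step.} You never address condition \eqref{princon2}, the quasi-homogeneity relation under the Euler vector field $E$. Conditions \eqref{princon1} and \eqref{princon3} alone do not pin down the $\theta_{\alpha,p}$: any solution can be modified by adding constant linear combinations of lower-level densities, and it is exactly \eqref{princon2} that fixes these ambiguities and produces the nilpotent $R$-matrix appearing in the logarithmic flows. The paper handles this by introducing the operator $\mathcal{E}=E+\tfrac{1}{n_{0}}z\,\partial_{z}$, observing that $\operatorname{Lie}_{\mathcal{E}}\lambda=\lambda$, and reading off the homogeneity degrees of each residue directly; this is what forces the harmonic shift $c_{p}$ and identifies the off-diagonal entries $(R_{1})^{t_{s,0}}_{t_{i,n_{i}}}$. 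Without this verification your constants $c_{i,j;p}$ and $c_{p}$ are at best consistent with the recursion, not uniquely determined by it.

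\textbf{Different mechanism for \eqref{princon1}.} The paper does not verify the recursion locally at each marked point as you propose. Instead it proves a single lemma: if $Q_{p}(\lambda)$ are analytic with $\partial_{\lambda}Q_{p}=Q_{p-1}$, then any function of the form $F_{i,p}=\tfrac{1}{2\pi\mathrm{i}}\int_{\gamma_{i}}Q_{p+1}(\lambda(z))\,dz$ automatically satisfies $\eta^{*}\cdot\nabla_{\partial}dF_{i,p}=C_{\partial}(dF_{i,p-1})$, because $\{Q_{p}(\lambda),\lambda\}=0$. Each proposed density is then recognised---on a suitable open subset $M'\subset M^{cKP}$ where $w_{0},w_{i}$ or $\log w_{0}w_{i}$ extend holomorphically to the contours $\gamma_{s}$ with prescribed winding numbers---as a contour integral of this type, and the conclusion propagates to all of $M^{cKP}$ by analytic continuation. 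This device treats the logarithmic and non-logarithmic cases uniformly and replaces your ``careful bookkeeping of three contributions'' by a single rewriting of $\theta_{t_{i,n_{i}},p}$ as a sum of $\gamma_{s}$-integrals of $\tfrac{\lambda^{p}}{p!}(\log w_{0}w_{i}-\tfrac{c_{p}}{n_{i}}-\tfrac{c_{p}}{n_{0}})$ and $\tfrac{\lambda^{p}}{p!}(\log w_{0}-\tfrac{c_{p}}{n_{0}})$. Your direct approach could in principle be pushed through in the power case, but in the logarithmic case it would amount to re-deriving this lemma by hand at each pole, which is substantially more laborious and is where your outline becomes vague.

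Your treatment of the Hamiltonian vector fields via $\mathcal{P}(\omega)=\{\omega_{+},\lambda\}-\{\omega,\lambda\}_{+}$ matches the paper's.
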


When \( m=1 \), this principal hierarchy was constructed by Aoyama and Kodama using a different method \cite{aoyama1996topological}, and serves as an extension of the dispersionless limit of the constrained KP hierarchy \cite{liu2015central}. The corresponding Dubrovin-Zhang hierarchy governs the generating function enumerating rooted hypermaps on compact two-dimensional surfaces \cite{carlet2023enumeration}.

A similar approach can be used to construct the principal hierarchy for the almost duality of $M^{cKP}$.

\begin{thm}\label{maincKPdul}
	Let $\hat{M}^{cKP}$ be the almost duality of the Frobenius manifold $M^{cKP}$, then the Hamiltonian densities of the principal hierarchy for $\hat{M}^{cKP}$ are given by
	$$
	F_{\tilde{\gamma},p} = \frac{1}{2\pi\mathrm{i}} \int_{\tilde{\gamma}} \frac{(\log(\lambda(z)))^{p+1}}{(p+1)!} \, dz,
	$$
	where $\tilde{\gamma}$ is a simple closed curve in the complex plane $\mathbb{C}$ such that the winding number of $\lambda(z)$ along $\tilde{\gamma}$ is zero. 
	
	In particular, let $q_{1}, \ldots, q_{k}$ and $p_{1}, \ldots, p_{k}$ be the zeros and poles of $\lambda(z)$, respectively, within the region enclosed by $\tilde{\gamma}$. Then
	$$
	F_{\tilde{\gamma},0} = \sum_{s=1}^{k} p_{s} - \sum_{s=1}^{k} q_{s}.
	$$
\end{thm}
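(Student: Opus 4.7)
The plan is to mirror the strategy used for Theorem \ref{maincKP}. First, I would recall Dubrovin's almost duality construction: the intersection form on $M^{cKP}$ becomes the flat metric on $\hat{M}^{cKP}$, while the unit and Euler vector fields exchange roles and the product is rescaled by $\mathcal{E}^{-1}$. As with $M^{cKP}$, one chooses a convenient representation of the cotangent spaces in terms of rational functions in $z$ and writes the Hamiltonian structure associated with the intersection form as an explicit residue pairing involving $d\log\lambda$. This reduces the problem of finding Hamiltonian densities of the principal hierarchy for $\hat{M}^{cKP}$ to solving an algebraic recursion in terms of $\lambda(z)$, parallel to the one appearing in the proof of Theorem \ref{maincKP}.

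The key observation is that in the almost-dual setting the natural variable is $\log\lambda(z)$ rather than $\lambda(z)$: under the dual unit (which is the original Euler vector field) $\log\lambda$ shifts by a constant, so iterated primitives of a density with respect to the dual unit are obtained simply by dividing by $\log\lambda$. This leads to the ansatz that the $p$-th Hamiltonian density is essentially $\frac{(\log\lambda(z))^{p+1}}{(p+1)!}$, whose derivative with respect to $\log\lambda$ gives the level-$(p-1)$ density. A direct substitution into the algebraic recursion then verifies the ansatz.

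To turn this into an honest density on $\hat{M}^{cKP}$, I would integrate $\frac{(\log\lambda)^{p+1}}{(p+1)!}$ against $dz$ over a cycle $\tilde\gamma$ in the $z$-plane. The winding-number hypothesis is precisely what is needed to select a single-valued branch of $\log\lambda$ in a tubular neighborhood of $\tilde\gamma$, so the integral is well-defined and independent of the branch choice. Varying the homotopy class of $\tilde\gamma$ (relative to the critical values of $\lambda$) produces the full collection of flat coordinates of the dual metric, so the family $\{F_{\tilde\gamma,p}\}$ is complete. Compatibility with the bi-Hamiltonian recursion follows from the algebraic step, combined with deformation invariance of the contour integrals under the recursion operator.

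Finally, for the closed formula at $p=0$, I would integrate by parts on the closed contour to get
\[
F_{\tilde\gamma,0}=\frac{1}{2\pi\mathrm{i}}\int_{\tilde\gamma}\log\lambda(z)\,dz=-\frac{1}{2\pi\mathrm{i}}\int_{\tilde\gamma} z\,\frac{\lambda'(z)}{\lambda(z)}\,dz,
\]
and apply the residue theorem, using that $z\lambda'(z)/\lambda(z)$ has residue $q_{s}$ at each simple zero $q_{s}$ and $-p_{s}$ at each simple pole $p_{s}$ (with the convention that multiple zeros and poles are repeated accordingly; the zero-winding hypothesis guarantees equal total counts). The main obstacle I expect is the second paragraph above: correctly identifying the Hamiltonian structure of $\hat{M}^{cKP}$ on suitable representatives of cotangent vectors, and verifying that the $\log\lambda$ ansatz indeed satisfies the recursion governing Hamiltonian densities rather than merely the leading-order relation. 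The winding-number condition is delicate in that it must produce global single-valuedness while still furnishing a complete family of flats.
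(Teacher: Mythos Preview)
Your plan matches the paper's proof: derive the dual Hamiltonian structure explicitly (the paper does this in Lemma~\ref{cKPdulcon}, obtaining a closed formula for $g^{\ast}\cdot\hat\nabla_{\p}\omega$), then verify that $dF_{p}=Q_{p}\mathbf{1}_{\gamma_{j}}$ satisfies $g^{\ast}\cdot\hat\nabla_{\p}dF_{p+1}=C_{\p}(dF_{p})$, and extend by analyticity to general contours $\tilde\gamma$. The precise form of the ``algebraic recursion'' you allude to is $\lambda\,\partial_{\lambda}Q_{p+1}+Q_{p+1}=Q_{p}$, solved by $Q_{p}=\partial_{\lambda}\frac{(\log\lambda)^{p+1}}{(p+1)!}$; your heuristic about $\log\lambda$ shifting under the dual unit is not quite right as stated, but the ansatz it leads you to is the correct one, and the verification is a two-line substitution using \eqref{cKPdulcome}. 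The one genuine difference is your computation of $F_{\tilde\gamma,0}$: your integration by parts to $-\frac{1}{2\pi\mathrm{i}}\int_{\tilde\gamma} z\,\lambda'(z)/\lambda(z)\,dz$ followed by residues at the enclosed zeros and poles is valid and somewhat more direct than the paper's route, which instead writes $\log\lambda=\log\bigl(\lambda\prod\frac{z-p_{s}}{z-q_{s}}\bigr)+\log\prod\frac{z-q_{s}}{z-p_{s}}$, observes that the first term is single-valued inside $\tilde\gamma$ so its integral vanishes, and reads off the answer as $-\res_{\infty}\log\prod\frac{z-q_{s}}{z-p_{s}}\,dz$.
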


In the case of \( m = 0 \), the Hamiltonian densities of this principal hierarchy were constructed by Dubrovin using period integrals \cite{dubrovin2004almost}.

As a direct consequence of Theorem \ref{maincKP}, we obtain the explicit form of the principal hierarchy for the Frobenius submanifold \cite{STRACHAN2001} of $M^{cKP}$ with even superpotential. Assume there exist positive integers $m'$ and $n_{0}',n_{1}',\cdots,n_{m'}'$ such that
$$
	m=2m'-1,\quad n_{0}=2n_{0}',\quad n_{1}=2n_{1}',\quad n_{2j-2}= n_{2j-1}=n_{j}',\quad 2\le j\le m'.
$$
Let $M^{D-cKP}$ be a submanifold of $M^{cKP}$ consisting of elements of the form:
\begin{equation}\label{Dckpsup}
	\lambda(z)=\frac{1}{2n_{0}'}z^{2n_{0}'}+\sum_{j=0}^{n_{0}'-1}b_{0,j}z^{2j}+\sum_{j=1}^{n_{1}'}b_{1,j}z^{-2j}+\sum_{i=2}^{m'}\sum_{j=1}^{n_{i}'}b_{i,j}(z^2-b_{i,0})^{-j},
\end{equation}
which is characterized by the condition $\lambda(z)=\lambda(-z)$.

\begin{thm}\label{maincKPsym}
	\( M^{D-cKP} \) is a natural Frobenius submanifold of \( M^{cKP} \), characterized by the following conditions imposed on the flat coordinates of \( M^{cKP} \): 
	\begin{align*}
		&t_{0,2} = t_{0,4} = \cdots = t_{0,2n_{0}'-2} = 0; \\
		&t_{1,0} = t_{1,2} = \cdots = t_{1,2n_{1}'} = 0; \\
		&t_{2i-2,j} = -t_{2i-1,j}, \quad  2 \leq i \leq m', \ 0 \leq j \leq n_{i}'.
	\end{align*}
	Moreover, the subhierarchy \( \frac{\partial}{\partial T^{t,p}} \) of the principal hierarchy for \( M^{cKP} \), where
	\[
	t \in \textbf{t}^{D-cKP} = \{t_{0,2j-1}\}_{j=1}^{n_{0}'} \cup \{t_{1,2j-1}\}_{j=1}^{n_{1}'} \cup \{t_{2,j}\}_{j=0}^{n_{2}'} \cup \{t_{4,j}\}_{j=0}^{n_{3}'} \cup \cdots \cup \{t_{2m'-2,j}\}_{j=0}^{n_{m'}'},
	\]
	can be directly restricted to the submanifold \( M^{D-cKP} \), forming the principal hierarchy for \( M^{D-cKP} \).
\end{thm}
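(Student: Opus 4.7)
The plan is to realize $M^{D-cKP}$ as the fixed locus of the involution $\sigma:\lambda(z)\mapsto\lambda(-z)$ acting on $M^{cKP}$ (after the prescribed pairing of parameters $n_i$ and $a_{i,0}$), and then to exploit parity under $z\mapsto -z$ in the Hurwitz-type residue formulas for the metric, the 3-tensor, and the Hamiltonian densities of Theorem \ref{maincKP}.

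First I would translate evenness of $\lambda$ into the conditions on flat coordinates. On the fixed locus, branches of the roots may be chosen so that $w_0(-z)=-w_0(z)$, $w_1(-z)=-w_1(z)$ (using $a_{1,0}=0$), and $w_{2i-2}(-z)=-w_{2i-1}(z)$ for $2\le i\le m'$. Substituting $z\mapsto -z$ into the defining series $z=w_0(z)-\sum_{j\ge 1}t_{0,j}w_0(z)^{-j}$ and $z=t_{i,0}+\sum_{j\ge 1}t_{i,j}w_i(z)^{-j}+(\log\text{-term})$ and matching coefficients yields the relations $t_{0,2k}=0$, $t_{1,2k}=0$, and $t_{2i-2,j}=-t_{2i-1,j}$ stated in the theorem. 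A dimension count confirms these cut out exactly $M^{D-cKP}$.

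Next, for the natural Frobenius submanifold property I would invoke the Hurwitz-type residue formulas
$$\eta(\partial_a,\partial_b)=-\sum_{d\lambda(q)=0}\res_{z=q}\frac{\partial_a\lambda\cdot\partial_b\lambda}{\lambda'(z)}\,dz,\qquad c(\partial_a,\partial_b,\partial_c)=-\sum_{d\lambda(q)=0}\res_{z=q}\frac{\partial_a\lambda\cdot\partial_b\lambda\cdot\partial_c\lambda}{\lambda'(z)}\,dz.$$
On $M^{D-cKP}$ the critical points of $\lambda$ come in pairs $\{\pm q\}$ and $\lambda'(z)$ is odd in $z$. The flat vector fields $\partial_t$ for $t\in\mathbf{t}^{D-cKP}$ are tangent to $M^{D-cKP}$ and therefore $\partial_t\lambda$ is even in $z$, while the transverse directions give $\partial_t\lambda$ odd in $z$. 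Hence when $a,b\in\mathbf{t}^{D-cKP}$ and $c\notin\mathbf{t}^{D-cKP}$ the integrand is odd in $z$, so the residues at $q$ and $-q$ cancel and $c(\partial_a,\partial_b,\partial_c)=0$. This is Strachan's criterion for a natural Frobenius submanifold; the analogous cancellation for the metric shows $\eta$ restricts non-degenerately.

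Finally, to restrict the principal hierarchy I would verify that the generating functions $A_{t,p}$ in the bracket representation $\partial\lambda/\partial T^{t,p-1}=\{A_{t,p},\lambda\}$ from Theorem \ref{maincKP} are odd in $z$ when evaluated on $M^{D-cKP}$: if $A$ is odd and $\lambda$ is even, then $\{A,\lambda\}=A_z\lambda_x-A_x\lambda_z$ is even, so the flow preserves the submanifold. For $t=t_{0,2k-1}$, $A$ is the polynomial part at $\infty$ of $w_0^{(p+1)n_0-(2k-1)}$, which is an odd polynomial in $z$ because $w_0$ is odd and the exponent is odd; for $t=t_{1,2k-1}$ the principal part at $0$ of $w_1$ raised to an odd exponent is odd; and for the paired logarithmic flows associated to $t_{2i-2,j}$ ($i\ge 2$) the constraint $t_{2i-2,j}=-t_{2i-1,j}$ forces the relevant linear combination of the two logarithmic generating functions, whose residues sit at $\pm a_{2i-2,0}$, to be odd. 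The main technical obstacle is the careful fixing of branches for the roots $w_{2i-2}$ and $w_{2i-1}$ and the resulting sign bookkeeping within the pair $(t_{2i-2,j},t_{2i-1,j})$; once consistent branches are chosen, the parity verifications above are direct residue computations and together establish the theorem.
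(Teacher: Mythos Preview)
Your proposal is broadly correct and overlaps with the paper in the first and third parts: the derivation of the flat-coordinate constraints via $z\mapsto -z$ in the expansions is the same, and for the hierarchy restriction the paper proceeds equivalently but at the level of cotangent vectors, introducing $\mathcal{H}^{odd}=\{\omega\in\mathcal{H}:\omega(-z)=-\omega(z)\}$ and checking that the relevant differentials $d\theta_{\bullet,p}$ (for $\theta_{t_{0,2j-1},p}$, $\theta_{t_{1,2j-1},p}$, and the combinations $\theta_{t_{2i-2,j},p}-\theta_{t_{2i-1,j},p}$) lie in $\mathcal{H}^{odd}$, which is the cotangent-side version of your ``generating function $A$ is odd'' check.

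The genuine methodological difference is in the natural Frobenius submanifold step. You argue directly by parity that $c(\partial_a,\partial_b,\partial_c)=0$ when $a,b$ are tangent and $c$ transverse, and that $\eta$ is block-diagonal. The paper instead observes that on $M^{D-cKP}$ the critical points of $\lambda$ occur in pairs $p_{2j}=-p_{2j-1}$, so the canonical coordinates satisfy $u_{2j}=u_{2j-1}$; this exhibits $M^{D-cKP}$ as a \emph{caustic} submanifold, and the paper then simply invokes Corollary~3.7 of Strachan \cite{STRACHAN2004}, which asserts that caustic submanifolds are natural Frobenius submanifolds. The paper's route is shorter if one is willing to cite Strachan; your route is more self-contained but, to be complete, you should also record that $e=\partial_{t_{0,n_0-1}}$ and $E$ are tangent to $M^{D-cKP}$ and that the tangent block of $\eta$ is non-degenerate (block-diagonality follows from your parity argument, but non-degeneracy of the restricted block needs the explicit anti-diagonal form of $\eta$). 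One small correction: the branch choice that actually yields $t_{2i-2,j}=-t_{2i-1,j}$ is $w_{2i-2}(-z)=+\,w_{2i-1}(z)$, not $-w_{2i-1}(z)$; with your sign the relation would come out $t_{2i-2,j}=(-1)^{j+1}t_{2i-1,j}$.
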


when $m'=1$ and $n_{1}'=1$, this Frobenius manifold structure coincides with that on the orbit space of the Coxeter group of type \( D \) \cite{zuo2007frobenius}, and the associated principal hierarchy is the dispersionless limit of the Drinfeld-Sokolov hierarchy of type \( D \) \cite{liu2011drinfeld}. 

Owing to Theorem 3 in \cite{alcolado2017extended}, we obtain a solution to the open WDVV equations associated with $M^{cKP}$, or equivalently, a flat F-manifold structure on $M^{cKP}\times\mathbb{C}$ which is a rank-1 extension of $M^{cKP}$. The multiplication $\star$ on this flat F-manifold is given by
\begin{align*}
	&(\p_{\alpha},0)\star(\p_{\beta},0)=(\p_{\alpha}\circ\p_{\beta},\p_{\alpha}\p_{\beta}\Omega\cdot\p_{s})\\
	&(\p_{\alpha},0)\star(0,\p_{s})=(0,\p_{\alpha}\lambda(s)\cdot\p_{s})\\
	&(0,\p_{s})\star(0,\p_{s})=(0,\lambda'(s)\cdot\p_{s})
\end{align*}
where $\p_{\alpha}=\frac{\p}{\p t^{\alpha}}$ for $t^{\alpha}\in \textbf{t}^{cKP}$, $\circ$ represents the multiplication on $M^{cKP}$, $s$ is the coordinate on  $\mathbb{C}$, and
$$ \p_{\alpha}\p_{\beta}\Omega=(\frac{\p_{\alpha}\lambda(s)\p_{\beta}\lambda(s)}{\lambda'(s)})_{\infty,\ge 0}+\sum_{j=1}^{m}(\frac{\p_{\alpha}\lambda(s)\p_{\beta}\lambda(s)}{\lambda'(s)})_{\varphi_{j},\le -1}.
$$

\begin{cor}\label{cKPopenpri}
	The principal hierarchy for the flat F-manifold $M^{cKP}\times \mathbb{C}$ is given by
	$$
	\frac{\p\lambda(z)}{\p \tilde{T}^{s,p}}=0,\quad \frac{\p s(x)}{\p \tilde{T}^{s,p}}=d_{x}(\frac{\lambda(s)^{p+1}}{(p+1)!}),
	$$
	and
	\begin{align*}
		&\frac{\p\lambda(z)}{\p\tilde{T}^{t_{0,n_{0}-j},p}}=\frac{\p\lambda(z)}{\p T^{t_{0,j},p}},\quad 	\frac{\p s(x)}{\p\tilde{T}^{t_{0,n_{0}-j},p}}=d_{x}(d\theta_{t_{0,j},p+1}(s))_{+},\\
		&\frac{\p\lambda(z)}{\p\tilde{T}^{t_{i,n_{i}-j},p}}=\frac{\p\lambda(z)}{\p T^{t_{i,j},p}},\quad 	\frac{\p s(x)}{\p\tilde{T}^{t_{0,n_{i}-j},p}}=-d_{x}(d\theta_{t_{i,j},p+1}(s))_{-},
	\end{align*}
	where
	$$
	d_{x}\lambda(s)=\p_{x}\lambda(s)+\lambda'(s)\frac{\p s(x)}{\p x},
	$$
$d\theta_{t_{i,j},p}(z)$ is a certain function on $\lambda(z)$, and the operators $(\ )_{+}, (\ )_{-}$ are defined in Subsection \ref{seccot}.
\end{cor}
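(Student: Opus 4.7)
The plan is to apply the general construction of the principal hierarchy of a rank-1 extension from \cite{alcolado2017extended,basalaev2019open,arsie2021flat} to the flat F-manifold $M^{cKP}\times\mathbb{C}$ whose multiplication $\star$ is specified just above. The hierarchy of a flat F-manifold is characterized by vector potentials satisfying the usual recursion with $\star$ replacing the Frobenius product; for a rank-1 extension these flows split into two families, those lifting the base hierarchy of Theorem \ref{maincKP} and a new family indexed by $s$.

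First I would treat the new $s$-flows: following \cite{alcolado2017extended}, their densities are $\frac{\lambda(s)^{p+1}}{(p+1)!}$, which depend only on $s$. Under the given $\star$, the piece $(0,\p_s)\star(0,\p_s)=(0,\lambda'(s)\p_s)$ ensures that these flows leave $\lambda(z)$ invariant while producing $\frac{\p s(x)}{\p\tilde{T}^{s,p}}=d_x(\frac{\lambda(s)^{p+1}}{(p+1)!})$, where $d_x$ incorporates the $x$-dependence of $s$ through $\p s/\p x$.

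Next I would handle the lifted base flows. The $\lambda$-component coincides with $\frac{\p\lambda(z)}{\p T^{t_{i,j},p}}$ of Theorem \ref{maincKP}, since the base hierarchy is preserved by the extension. The $s$-component is controlled by $\p_\alpha\p_\beta\Omega$, and using the explicit expression for $\Omega$ in terms of $(\ )_{\infty,\ge 0}$ and $(\ )_{\varphi_j,\le -1}$, together with the cotangent decomposition of Section \ref{seccot}, it reduces to applying $d_x$ to $d\theta_{t_{i,j},p+1}(z)$ evaluated at $z=s$ and projected through $(\ )_+$ or $(\ )_-$ according to whether the flat coordinate is indexed at infinity or at a finite pole.

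The main obstacle will be a careful bookkeeping of the $(\ )_\pm$ projections and the residue orientations. In particular, one must verify that the opposite sign appearing in the third line of the statement is precisely the discrepancy between the residue at infinity and those at the finite poles $a_{i,0}$, the same discrepancy responsible for the metric pairing between $t_{0,j}$ and $t_{0,n_0-j}$ (respectively between $t_{i,j}$ and $t_{i,n_i-j}$) on $M^{cKP}$. Once these projections are matched against the structure of $\Omega$, the full list of flows in the statement follows by collecting terms.
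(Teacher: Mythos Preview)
Your outline is sound and would ultimately succeed, but the paper organizes the computation differently and more cleanly. Rather than routing the $s$-component of the lifted flows through $\p_\alpha\p_\beta\Omega$, the paper first proves an intermediate result (the theorem immediately preceding the corollary) giving the recursion vector fields explicitly:
\[
\Theta_{s,p}=\bigl(0,\tfrac{\lambda(s)^{p}}{p!}\p_{s}\bigr),\qquad
\Theta_{t_{0,n_{0}-j},p}=\bigl(\eta^{\ast}(d\theta_{t_{0,j},p}),\,(d\theta_{t_{0,j},p}|_{z=s})_{+}\p_{s}\bigr),
\]
and similarly for $i\ge 1$ with $(\ )_-$. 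The corollary is then a mechanical computation of $\Theta_{\bullet,p}\star\tilde{\p}_x$ where $\tilde{\p}_x=(\p_x,\tfrac{\p s}{\p x}\p_s)$. The simplification of the $s$-component does not pass through $\Omega$ at all: it uses the closed formulas \eqref{kdvdulmet}--\eqref{kdvdulpro} of Lemma~\ref{kdvdulmap} for $\eta^{\ast}$ and $C_{\p}$ to collapse the four $\star$-terms down to $(d\theta_{t_{0,j},p}|_{z=s}\cdot d_x\lambda(s))_+$, which is then recognized as $d_x(d\theta_{t_{0,j},p+1}(s))_+$.

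What this buys: by isolating the $\Theta_{\bullet,p}$ first, the paper avoids having to contract $\p_\alpha\p_\beta\Omega$ against $\eta^{\ast}(d\theta)$, which is the step your ``bookkeeping'' paragraph alludes to but does not resolve. Your route would work, but you would still need the identities of Lemma~\ref{kdvdulmap} (or an equivalent cancellation) to close it, and you do not cite them. The sign discrepancy you mention is indeed the $(\ )_+$ versus $(\ )_-$ split, but in the paper's computation it falls out of the explicit form of $\Theta_{\bullet,p}$ rather than from tracking residue orientations.
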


When $m=0$, this hierarchy coincides with the dispersionless limit of the open Gelfand-Dickey hierarchy, which is conjectured to govern the generating function of the open r-spin intersection numbers \cite{buryak2024open}.

Let us proceed to consider the Frobenius manifold with trigonometric superpotential. Let $M^{Toda}$ be the space of functions of the form:
\[
\lambda(\varphi) = \frac{1}{n_0}e^{n_0\varphi} + a_{0,n_0-1}e^{(n_0-1)\varphi} + \cdots + a_{0,0} + \sum_{i=1}^{m}\sum_{j=1}^{n_i}a_{i,j}(e^{\varphi}-a_{i,0})^{-j},
\]
with \( a_{1,0} = 0 \), equipped with the Hurwitz Frobenius manifold structure constructed by Dubrovin \cite{dub1998}. In the case of $m=1$, this  structure coincides with that on the orbit space of the extended affine Weyl group  of type \( A \) \cite{dubrovin1998extended}.

The flat coordinate system on \( M^{Toda} \), denoted as
\[
\mathbf{t}^{Toda} = \{t_{0,j}\}_{j=1}^{n_0-1} \cup \{t_{1,j}\}_{j=0}^{n_1} \cup \cdots \cup \{t_{m,j}\}_{j=0}^{n_m},
\]
is given by 
\[
\varphi = \left\{
\begin{aligned}
	&t_{i,0} + t_{i,1}w_{i}(\varphi)^{-1} + \cdots, \quad &e^{\varphi} &\to a_{i,0},\ i = 2, \cdots, m, \\
	&-\log(w_{1}(\varphi)) + t_{1,0} + t_{1,1}w_{1}(\varphi)^{-1} + \cdots, \quad &e^{\varphi} &\to 0, \\
	&\log(w_{0}(\varphi)) - t_{0,1}w_{0}(\varphi)^{-1} - t_{0,2}w_{0}(\varphi)^{-2} - \cdots, \quad &e^{\varphi} &\to \infty,
\end{aligned}
\right.
\]
where
\[
w_{i}(\varphi) = \left\{
\begin{aligned}
	&(n_{i}\lambda(\varphi))^{\frac{1}{n_{i}}} = w_{i,1}(e^{\varphi} - a_{i,0})^{-1} + \cdots, \quad &e^{\varphi} &\to a_{i,0},\ i = 1, \cdots, m, \\
	&(n_{0}\lambda(\varphi))^{\frac{1}{n_{0}}} = e^{\varphi} + w_{0,0} + w_{0,1}e^{-\varphi} + \cdots, \quad &e^{\varphi} &\to \infty,\ i = 0.
\end{aligned}
\right.
\]

\begin{thm}\label{mainToda}
	The Hamiltonian densities of the principal hierarchy for the Frobenius manifold \( M^{Toda} \) are given by
	\begin{align*}
		\theta_{t_{0,j},p}(z) =& -\res_{\infty} c_{0,j;p} w_{0}(z)^{(p+1)n_{0}-j} \frac{dz}{z}, \quad j = 1, \ldots, n_{0} - 1; \\
		\theta_{t_{i,j},p}(z) =& \res_{a_{i,0}} c_{i,j;p} w_{i}(z)^{(p+1)n_{i}-j} \frac{dz}{z}, \quad i = 1, \ldots, m, \ j = 0, \ldots, n_{i} - 1,
	\end{align*}
	and
	\begin{align*}
		\theta_{t_{i,n_{i}},p}(z)=& -\res_{\infty} \frac{\lambda(z)^{p}}{p!} (\log \frac{w_{0}(z)}{z - a_{i,0}} - \frac{c_{p}}{n_{0}}) \frac{dz}{z} + \res_{a_{i,0}} \frac{\lambda(z)^{p}}{p!} (\log(z - a_{i,0}) w_{i}(z) - \frac{c_{p}}{n_{i}}) \frac{dz}{z} \\
		& + \sum_{s \neq i} \res_{a_{s,0}} \frac{\lambda(z)^{p}}{p!} \log(z - a_{i,0}) \frac{dz}{z}, \quad i = 1, \ldots, m,
	\end{align*}
	where $z=e^{\varphi}$. The corresponding Hamiltonian vector fields \( \frac{\partial}{\partial T^{\bullet,p}} = \mathcal{P}(d\theta_{\bullet,p+1}) \) are
	\begin{align*}
		\frac{\partial \lambda(z)}{\partial T^{t_{0,j},p-1}} =& \{(c_{0,j;p-1} w_{0}(z)^{p n_{0}-j})_{\infty, \geq 0}, \lambda(z)\}, \quad j = 1, \ldots, n_{0} - 1; \\
		\frac{\partial \lambda(z)}{\partial T^{t_{i,j},p-1}} =& -\{(c_{i,j;p-1} w_{i}(z)^{p n_{i}-j})_{a_{i,0}, \leq -1}, \lambda(z)\}, \quad i = 1, \ldots, m, \ j = 0, \ldots, n_{i} - 1,
	\end{align*}
	and
		\begin{align*}
		\frac{\partial \lambda(z)}{\partial T^{t_{i,n_{i}},p-1}} =& \{(\frac{\lambda(z)^{p}}{p!} (\log \frac{w_{0}(z)}{z - a_{i,0}} - \frac{c_{p}}{n_{0}}))_{\infty, \geq 0}-(\frac{\lambda(z)^{p}}{p!} (\log(z - a_{i,0}) w_{i}(z) - \frac{c_{p}}{n_{i}}))_{a_{i,0}, \leq -1}, \lambda(z)\}\\
		& - \sum_{s \neq i} \{(\frac{\lambda(z)^{p}}{p!} \log(z - a_{i,0}))_{a_{s,0}, \leq -1}, \lambda(z)\} + \{\frac{\lambda(z)^{p}}{p!} \log(z - a_{i,0}), \lambda(z)\},
	\end{align*}
	where the Poisson bracket is defined as
	\[
	\{f(z,x), g(z,x)\} = z \frac{\partial f(z,x)}{\partial z} \frac{\partial g(z,x)}{\partial x} - z \frac{\partial f(z,x)}{\partial x} \frac{\partial g(z,x)}{\partial z}.
	\]
\end{thm}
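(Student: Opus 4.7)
The plan is to follow the same strategy used to establish Theorem \ref{maincKP}, with modifications appropriate to the trigonometric setting. First I would realize the cotangent bundle of $M^{Toda}$ concretely as a space of meromorphic 1-forms on the $z$-sphere (with $z = e^{\varphi}$), whose pole structure is prescribed by the form of the superpotential and with special attention to the behavior at $z=0$ and $z=\infty$. Using this representation, I would derive the flat metric $\eta$ and its associated Hamiltonian structure $\mathcal{P}$ via a residue pairing. The crucial point is that the spatial variable, inherited from $\varphi$, translates into $z\partial_{z}$ when expressed in the $z$-coordinate, which is precisely what produces the extra factor of $z$ in the stated Poisson bracket and the extra $\frac{1}{z}$ in the residue formulas for the Hamiltonian densities.

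Next, the densities $\theta_{\alpha,p}$ of the principal hierarchy are characterized by $\theta_{\alpha,0}=t_{\alpha}$ together with the standard recursion $\partial_{\beta}\partial_{\gamma}\theta_{\alpha,p+1}=c_{\beta\gamma}^{\mu}\partial_{\mu}\theta_{\alpha,p}$. For Hurwitz-type manifolds this recursion can be rewritten as an algebraic condition on the Laurent expansion of $\frac{d\theta_{\alpha,p}}{d\lambda}$ near the appropriate puncture, and ultimately reduces, modulo residue-killing total derivatives, to the identity $d(w_{i}^{k}) = k\, w_{i}^{k-n_{i}}\, d\lambda$. For the polynomial-type densities $\theta_{t_{0,j},p}$ and $\theta_{t_{i,j},p}$ with $j<n_{i}$, this verification is essentially formal once the leading Laurent coefficient is matched against the flat coordinate $t_{i,j}$; the normalization constants $c_{i,j;p}$ are forced precisely by this matching.

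The main obstacle, exactly as in the rational case, is the logarithmic family $\theta_{t_{i,n_{i}},p}$. The harmonic-number constants $c_{p}=\sum_{s=1}^{p}\frac{1}{s}$ appearing there are forced by expanding quantities of the form $\frac{\lambda^{p}}{p!}\log w_{i}$ and isolating the purely meromorphic piece that interacts correctly with the recursion; they arise from the same bookkeeping that produces logarithmic derivatives of factorials when commuting $\log$ past powers of $\lambda$. Careful tracking of residues at the three distinct types of punctures, namely $z=\infty$, $z=a_{i,0}$, and $z=a_{s,0}$ with $s\neq i$, is required to close the recursion. Once the densities are established, the Hamiltonian vector fields $\partial/\partial T^{\alpha,p}=\mathcal{P}(d\theta_{\alpha,p+1})$ follow by a direct though somewhat lengthy computation, in which the projectors $(\,\cdot\,)_{\infty,\geq 0}$ and $(\,\cdot\,)_{a_{i,0},\leq -1}$ extract exactly those parts of the 1-forms that survive the Poisson pairing. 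The extra unbracketed term $\{\frac{\lambda(z)^{p}}{p!}\log(z-a_{i,0}),\lambda(z)\}$ in the final formula reflects the residue at $z=a_{i,0}$ itself, which is not captured by any of the projectors.
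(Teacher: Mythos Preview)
Your outline follows the paper's overall strategy, but it diverges in the technical mechanism for the recursion and omits one required verification.

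The paper does not verify \eqref{princon1} by direct Laurent manipulation. It first proves a lemma: for any analytic functions $Q_p(\lambda)$ with $\partial_\lambda Q_p=Q_{p-1}$, the functionals $F_{i,p}=\frac{1}{2\pi\mathrm{i}}\int_{\gamma_i}Q_{p+1}(\lambda(z))\,\frac{dz}{z}$ satisfy $\eta^*\cdot\nabla_{\partial}\,dF_{i,p}=C_{\partial}(dF_{i,p-1})$, which is the recursion in cotangent form. Then, on an open subset $M'\subset M^{Toda}$ where the multivalued expressions $w_i=(n_i\lambda)^{1/n_i}$ analytically continue to the contours $\gamma_s$ with prescribed winding numbers, each claimed density is rewritten as such a contour integral; the conclusion extends to all of $M^{Toda}$ by uniqueness of analytic functions. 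This device handles the logarithmic family $\theta_{t_{i,n_i},p}$ on the same footing as the polynomial ones and is what replaces the Laurent bookkeeping you describe. Your approach could still be made to work, but it is messier and you would need to say more to close the logarithmic case.

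More seriously, you never mention the quasi-homogeneity condition \eqref{princon2}. The recursion \eqref{princon1} together with $\theta_{\alpha,0}=\eta_{\alpha\beta}t^\beta$ determines $\theta_{\alpha,p}$ only up to an affine function of the flat coordinates at each step $p\geq 1$; it is \eqref{princon2} that removes this freedom, fixes the constants $c_p$ in the logarithmic family, and identifies the $R$-matrix contributions. The paper verifies \eqref{princon2} via the operator $\mathcal{E}=E+\frac{1}{n_0}z\,\partial_z$, which satisfies $\operatorname{Lie}_{\mathcal{E}}\lambda=\lambda$ and interacts well with the residues. Without this step the claim that the stated formulas give \emph{the} principal hierarchy densities, rather than merely a family satisfying the recursion, is incomplete.
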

When \( m = 1 \) and \( n_0 = n_1 = 1 \), this principal hierarchy coincides with the dispersionless limit of the extended Toda hierarchy \cite{carlet2004extended}, which governs the generating function of the Gromov-Witten invariants of \( \mathbb{CP}^{1} \) \cite{zhang2002cp1}.

\begin{thm}\label{mainTodadul}
	Let $\hat{M}^{Toda}$ be the almost duality of the Frobenius manifold $M^{Toda}$, then the Hamiltonian densities of the principal hierarchy for $\hat{M}^{Toda}$ are given by
	\begin{equation}\label{todaalmham}
		F_{\tilde{\gamma},p} = \frac{1}{2\pi\mathrm{i}} \int_{\tilde{\gamma}} \frac{(\log(\lambda(z)))^{p+1}}{(p+1)!} \frac{dz}{z},
	\end{equation}
	where $\tilde{\gamma}$ is a simple closed curve in the complex plane $\mathbb{C}$, such that the winding number of $\lambda(z)$ along $\tilde{\gamma}$ is zero.
	
	In particular, let $q_{1}, \ldots, q_{k}$ and $p_{1}, \ldots, p_{k}$ be the zeros and poles of $\lambda(z)$, respectively, within the region enclosed by $\tilde{\gamma}$. Additionally, let $q_{k+1}, \ldots, q_{n}, p_{k+1}, \ldots, p_{r}$ be the zeros and poles outside this region. Then we have
	\begin{equation*}
		F_{\tilde{\gamma},0} = \left\{
		\begin{aligned}
			&\sum_{s=1}^{k} \log(p_{s}) - \sum_{s=1}^{k} \log(q_{s}), \quad &\text{if $0 \in \mathbb{C}$ is outside the region enclosed by $\tilde{\gamma}$,} \\
			&\sum_{s=k+1}^{n} \log(p_{s}) - \sum_{s=k+1}^{r} \log(q_{s}), \quad &\text{if $0 \in \mathbb{C}$ is inside the region enclosed by $\tilde{\gamma}$.}
		\end{aligned}
		\right.
	\end{equation*}
\end{thm}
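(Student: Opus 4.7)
The plan is to closely parallel the argument for Theorem~\ref{maincKPdul}, with two modifications reflecting the passage from the rational to the trigonometric setting: the Abelian differential $dz$ is replaced by $\frac{dz}{z}$ (consistent with the Hamiltonian structure of Theorem~\ref{mainToda}), and the new marked point $z=0$ must be tracked carefully since, when $n_{1}\ge 1$, it is simultaneously a pole of $\lambda$ and the branch point of $\log z$.

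First I would verify that the $F_{\tilde{\gamma},p}$ are Hamiltonian densities for the principal hierarchy of $\hat{M}^{Toda}$. The almost-dual structure is organized by a flat connection whose flat coordinates, for Hurwitz-type Frobenius manifolds, are realized as period integrals of $\log\lambda$. Using the representation of the cotangent space developed in Subsection~\ref{seccot}, one can express $\partial F_{\tilde{\gamma},p}/\partial t^{\alpha}$ as a contour integral that pairs $\partial_{\alpha}\lambda$ against $\frac{(\log\lambda)^{p}}{p!\,\lambda}\frac{dz}{z}$. The zero-winding-number hypothesis ensures that a single-valued branch of $\log\lambda$ exists in a tubular neighborhood of $\tilde{\gamma}$, making all integrals unambiguous. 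The recursion relation characterizing descendants of the almost-dual principal hierarchy then follows from the identity $d\bigl((\log\lambda)^{p+1}/(p+1)!\bigr)=(\log\lambda)^{p}/p!\cdot d\lambda/\lambda$ combined with the explicit form of the Hamiltonian operator $\mathcal{P}$ displayed in Theorem~\ref{mainToda}.

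Second, to extract the closed form for $F_{\tilde{\gamma},0}=\frac{1}{2\pi\mathrm{i}}\int_{\tilde{\gamma}}\log\lambda(z)\,\frac{dz}{z}$, I would use the integration-by-parts identity
\[
d\bigl(\log\lambda(z)\cdot\log z\bigr) \;=\; \frac{\lambda'(z)}{\lambda(z)}\log z\,dz \;+\; \log\lambda(z)\,\frac{dz}{z}.
\]
When $0$ lies outside the region enclosed by $\tilde{\gamma}$, a branch of $\log z$ is single-valued on a neighborhood of the enclosed region, the exact-form boundary contribution vanishes, and the residue theorem applied to $\log z\cdot\frac{\lambda'(z)}{\lambda(z)}\,dz$ (whose simple poles at a zero $q_{s}$ and a pole $p_{s}$ of $\lambda$ carry residues $\log q_{s}$ and $-\log p_{s}$ respectively) reproduces the first case. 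When $0$ lies inside $\tilde{\gamma}$, I would apply the same identity on the complementary region of $\mathbb{CP}^{1}$ containing $\infty$ after reversing orientation; the asymptotics $\log\lambda(z)=n_{0}\log z+O(1)$ near $\infty$ and $\log\lambda(z)=-n_{1}\log z+O(1)$ near $0$, together with the zero-winding-number hypothesis recast as equal multiplicity counts of interior zeros and poles, ensure that the $0$- and $\infty$-contributions assemble into the exterior-residue formula stated as the second case.

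The principal obstacle is the branch-cut bookkeeping when $0$ is enclosed by $\tilde{\gamma}$, where the multi-valuedness of $\log z$ interacts nontrivially with the poles of $\lambda$ at $0$ and $\infty$. The clean resolution is to apply the global residue theorem on $\mathbb{CP}^{1}$ to the meromorphic 1-form $\frac{\lambda'(z)}{\lambda(z)}\log z\,dz$, augmented by the explicit contribution of the branch jump of $\log z$ across a radial cut; once this bookkeeping is correctly assembled, the expression reduces to the asymmetric split stated in the two cases of the theorem.
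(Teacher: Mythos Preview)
Your plan for the first half has a real slip: you invoke ``the Hamiltonian operator $\mathcal{P}$ displayed in Theorem~\ref{mainToda}'', but that operator is $\eta^{*}\!\cdot\!\nabla_{\partial_x}$, attached to the flat metric $\eta$ of $M^{Toda}$, not to the almost dual. The recursion you need is for the deformed connection $\hat\nabla^{deform}_{u}v=\hat\nabla_u v + z\,E^{-1}\!\circ u\circ v$, and the relevant cotangent operator is $g^{*}\!\cdot\!\hat\nabla_{\partial}$ for the intersection form $g$. The paper first writes down the Toda analogue of Lemma~\ref{cKPdulcon}, giving an explicit formula for $g^{*}\!\cdot\!\hat\nabla_{\partial}\omega$, and then checks $g^{*}\!\cdot\!\hat\nabla_{\partial}\,dF_{p+1}=C_{\partial}(dF_{p})$ using the recurrence $\lambda\,\partial_{\lambda}Q_{p+1}+Q_{p+1}=Q_{p}$ (which is \emph{not} $\partial_{\lambda}Q_{p}=Q_{p-1}$; the extra $\lambda$ reflects the $E^{-1}$ in the almost-dual product). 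Your differential identity $d\bigl((\log\lambda)^{p+1}/(p{+}1)!\bigr)=\frac{(\log\lambda)^{p}}{p!}\frac{d\lambda}{\lambda}$ is exactly what makes $Q_{p}=(\log\lambda)^{p}/(p!\,\lambda)$ satisfy that recurrence, so the content is there, but as written your argument points at the wrong operator and would not close.

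For the closed form of $F_{\tilde\gamma,0}$ you take a genuinely different route. The paper does not integrate by parts against $\log z$; instead it factors the logarithm,
\[
\log\lambda(z)=\log\!\Bigl(\lambda(z)\prod_{s=1}^{k}\frac{z-p_{s}}{z-q_{s}}\Bigr)+\log\!\Bigl(\prod_{s=1}^{k}\frac{z-q_{s}}{z-p_{s}}\Bigr),
\]
so that the first summand is the log of a function with no zeros or poles in the enclosed region, while the second is the log of a function holomorphic and nonvanishing outside $\tilde\gamma$. Each piece is then evaluated by a single residue (at $0$ or at $\infty$), and the two cases of the theorem fall out with essentially no branch-cut bookkeeping. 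Your integration-by-parts approach via $d(\log\lambda\cdot\log z)$ is correct in case~1 and can be pushed through in case~2, but precisely the ``principal obstacle'' you flag---controlling the contributions at $0$ and $\infty$ when $0$ is enclosed---is what the paper's factorization trick sidesteps entirely. Both arguments work; the paper's is shorter and avoids the delicate bookkeeping you anticipate.
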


Let's consider the submanifold of $M^{Toda}$ with even superpotential.  Assume there exist positive integers \( m', n_0', \ldots, n_{m'}' \) such that
\[
n_0 = n_1 = n_0', \quad n_2 = 2n_1', \quad n_3 = 2n_2', \quad n_{2j-2} = n_{2j-1} = n_j', \quad m = 2m' - 1,
\]
where \( j = 3, 4, \ldots, m' \). Denote $p=c^{-\frac{1}{n_{0}}}(\frac{1}{n_{0}})^{-\frac{1}{n_{0}}}z$, where $c^{2}n_{0}=a_{1,n_{1}}$, then the elements in $M^{Toda}$ can be expressed as:
$$
\lambda(p)=\sum_{j=0}^{n_{0}}b_{0,j}p^{j}+\sum_{j=1}^{n_{0}}b_{1,j}p^{-j}+\sum_{i=1}^{m}\sum_{j=1}^{n_{i}}b_{i,j}(p-b_{i,0})^{-j},
$$
where $b_{1,n_{0}}=b_{0,n_{0}}$. Let \( M^{C-Toda} \) be the space of elements in \( M^{Toda} \) of the following form:
\begin{equation}\label{Ctodasup}
	\lambda(p) = \sum_{j=0}^{n_0'} \tilde{b}_{0,j} (p + \frac{1}{p})^j + \sum_{j=1}^{n_1'} \tilde{b}_{1,j} (p + \frac{1}{p} - 2)^{-j} + \sum_{j=1}^{n_2'} \tilde{b}_{2,j} (p + \frac{1}{p} + 2)^{-j} + \sum_{i=3}^{m'} \sum_{j=1}^{n_i'} \tilde{b}_{i,j} (p + \frac{1}{p} - \tilde{b}_{i,0})^{-j},
\end{equation}
which are characterized by $\lambda(p)=\lambda(\frac{1}{p})$.

\begin{thm}\label{mainTodasym}
	$M^{C-Toda}$ is a natural Frobenius submanifold of $M^{Toda}$, characterized by the conditions imposed on the flat coordinates of $M^{Toda}$ as follows:
	\begin{align*}
		&t_{0,j}=t_{1,j},\quad j=1,\cdots,n_{0}-1;\\
		&t_{2,2}=t_{2,4}=\cdots= t_{2,n_{2}}=0,\quad t_{2,0}-\frac{1}{2}t_{1,0}=-\log 1;\\
		&t_{3,2}=t_{3,4}=\cdots= t_{3,n_{3}}=0,\quad t_{3,0}-\frac{1}{2}t_{1,0}=-\log (-1);\\
		&t_{2i-2,j}=t_{2i-1,j},\quad 3\le i\le m',\ 0\le j\le n_{i}',
	\end{align*}
	where the constants \(\log 1\) and \(\log(-1)\) depend on the choice of branch of \(\log p\). 	Moreover, the subhierarchy \( \frac{\partial}{\partial T^{t,p}} \) of the principal hierarchy for \( M^{Toda} \), where
	\[
	t \in \textbf{t}^{C-Toda} = \{t_{0,j}\}_{j=1}^{n_{0}'-1} \cup \{t_{2,2j-1}\}_{j=1}^{n_{1}'} \cup \{t_{3,2j-1}\}_{j=1}^{n_{2}'} \cup \{t_{4,j}\}_{j=0}^{n_{3}'} \cup \cdots \cup \{t_{2m'-2,j}\}_{j=0}^{n_{m'}'},
	\]
	can be directly restricted to the submanifold \( M^{C-Toda} \),  forming the principal hierarchy for \( M^{C-Toda} \).
\end{thm}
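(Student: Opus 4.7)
The approach parallels the proof of Theorem \ref{maincKPsym}, now adapted to the trigonometric setting where the relevant $\mathbb{Z}/2$ involution is $\sigma: p \mapsto 1/p$ rather than $z \mapsto -z$. The submanifold $M^{C-Toda}$ is precisely the fixed-point locus of $\sigma$ acting on the superpotentials: $\lambda(p) = \lambda(1/p)$. My plan is to (i) translate this pointwise symmetry into conditions on the flat coordinates, (ii) verify naturalness in the sense of \cite{STRACHAN2001, STRACHAN2004}, and (iii) propagate $\sigma$-invariance through the Hamiltonian vector fields of Theorem \ref{mainToda} to obtain the restricted hierarchy.

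For step (i), I would examine how each of the asymptotic expansions defining $\mathbf{t}^{Toda}$ transforms under $\sigma$. The coordinate $p$ is proportional to $z = e^\varphi$, so $\sigma$ swaps the chart at $p = \infty$ (producing $w_0$ and $\{t_{0,j}\}$) with the chart at $p = 0$ (producing $w_1$ and $\{t_{1,j}\}$); $\sigma$-invariance of $\lambda$ then forces $w_1(1/p) = w_0(p)$ as formal series, and comparing coefficients term-by-term yields $t_{0,j} = t_{1,j}$ for $1 \le j \le n_0 - 1$. The poles at the fixed points $p = \pm 1$ of $\sigma$ are self-dual: near $p = 1$ the $\sigma$-invariant local uniformizer is $p + 1/p - 2 = (p-1)^2/p$, so the formal expansion of $w_{i}$ in powers of $(p - b_{i,0})$ has only even-degree terms after restriction, killing the odd-indexed flat coordinates and leaving $t_{2,2} = t_{2,4} = \cdots = 0$ (analogously at $p = -1$). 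The constants $-\log 1$ and $-\log(-1)$ enter through the logarithmic flat coordinate $t_{i,0}$: under $p \mapsto 1/p$ the logarithm $\log p$ picks up a shift $-\log p$, and tracing this through the expansion of $\varphi$ at $p = \pm 1$ versus the expansion at $p = 0$ yields exactly the stated affine relation between $t_{2,0}, t_{3,0}$ and $t_{1,0}$, with the branch ambiguity absorbed into $\log(\pm 1)$. The paired poles $b_{i,0} \leftrightarrow 1/b_{i,0}$ for $i \ge 3$ give the conditions $t_{2i-2,j} = t_{2i-1,j}$ by the same matching of Puiseux expansions.

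For step (ii), I would invoke the criterion of Strachan \cite{STRACHAN2001}: it suffices to show that the unit vector field and Euler vector field are tangent to $M^{C-Toda}$ and that the multiplication closes on $\sigma$-invariant tangent vectors. All of these follow from the fact that the Hurwitz Frobenius structure is defined via residues and contour integrals of $\lambda$-dependent differentials, and the action of $\sigma$ on the base $p$-line preserves the residue pairing when applied to $\sigma$-invariant data. For step (iii), observe that in Theorem \ref{mainToda} the Hamiltonian densities $\theta_{\bullet, p}$ are residues of $\sigma$-equivariant quantities; when $\lambda$ is $\sigma$-invariant, the sums over the paired punctures $b_{2i-2, 0}$ and $b_{2i-1, 0}$ coincide, and the $\infty$-part at $p = \infty$ equals the analogous part at $p = 0$. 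Consequently the vector fields labeled by $t \in \mathbf{t}^{C-Toda}$ preserve $\sigma$-invariance and so tangentially restrict to $M^{C-Toda}$, while the flows labeled by the complementary coordinates act antisymmetrically and their sum with their $\sigma$-image corresponds to the listed generator; the restriction then automatically satisfies the defining PDEs of the principal hierarchy on $M^{C-Toda}$, because restricting Hamiltonian densities and restricting the Hamiltonian structure (which is manifestly $\sigma$-compatible, being given by the Poisson bracket $\{f,g\} = z(f_z g_x - f_x g_z)$ with $\sigma^* z = z^{-1}$ and the natural measure $dz/z$ being $\sigma$-invariant up to sign) commute.

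The principal obstacle is the careful bookkeeping in the logarithmic sector: precisely identifying the branch conventions so that the shifts $-\log(\pm 1)$ emerge consistently from the match between the chart at $p = 0$ (which contributes a $-\log w_1$ to $\varphi$) and the charts at $p = \pm 1$, and verifying that the mixed logarithmic terms in $\theta_{t_{i, n_i}, p}$ pair up correctly so that the restricted flows for $t_{2, n_1}, t_{3, n_2}$ and for $t_{2i-2, n_i}$ with $i \ge 3$ really do generate a complete principal hierarchy on $M^{C-Toda}$ rather than a proper subhierarchy.
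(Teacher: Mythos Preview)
Your outline is correct and largely parallels the paper, but there are two places where your route diverges from the paper's, and one of them matters.

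For step (ii), you propose to verify naturalness by checking that $e$, $E$ are tangent and that $\circ$ closes on $\sigma$-invariant vectors. The paper instead takes a shortcut: it observes that on $M^{C-Toda}$ the critical points of $\lambda$ pair as $p_{2j}=1/p_{2j-1}$, hence the canonical coordinates satisfy $u_{2j}=u_{2j-1}$, so $M^{C-Toda}$ is a \emph{caustic} submanifold in the sense of \cite{STRACHAN2004}, and then invokes Corollary~3.7 there. This is cleaner than a direct closure check, which in your sketch is only justified by the sentence ``all of these follow from the fact that the Hurwitz Frobenius structure is defined via residues\ldots'' --- that is true but would still need to be unpacked, whereas the caustic argument is one line.

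For step (iii), your $\sigma$-equivariance argument is the right idea; the paper makes it concrete by introducing the space $\mathcal{H}^{odd}=\{\omega\in\mathcal{H}:\omega(z^{-1})=-\omega(z)\}$ and checking directly, via the behaviour of $(\;)_\pm$ under $z\mapsto z^{-1}$, that $\eta^\ast$ and $\eta^\ast\!\cdot\!\nabla_\partial$ map $\mathcal{H}^{odd}$ into $T_{\lambda}M^{C-Toda}$. It then lists the specific combinations of densities (e.g.\ $\theta_{t_{0,j},p}+\theta_{t_{1,j},p}$, $\theta_{t_{2,2j-1},p}$, $\theta_{t_{2i-2,j},p}+\theta_{t_{2i-1,j},p}$) whose differentials lie in $\mathcal{H}^{odd}$. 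Your version is the same content at a more conceptual level.

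One small slip in step (i): at $p=\pm 1$ it is the \emph{even}-indexed coordinates $t_{2,2},t_{2,4},\ldots$ that vanish (as you then correctly write), not the odd-indexed ones; the mechanism is that $w_2(1/p)=-w_2(p)$ forces the even powers of $w_2^{-1}$ in the $\varphi$-expansion to match their negatives.
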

when $m'=2$ and $n_{1}'=n_{2}'=2$, this Frobenius manifold structure coincides with that on the orbit space of the extended affine Weyl group of type \( D \) \cite{dubrovin1998extended,dubrovin2019extended}. As far as we know, the explicit form of the Dubrovin-Zhang hierarchy associated with this Frobenius manifold remains unknown. However, Minanov and Cheng proposed bilinear-type equations governing the descendant potential of this Frobenius manifold \cite{cheng2021hirota}, and reformulated these equations in the form of Lax equations \cite{cheng2021extended}. 

There exists a flat F-manifold structure on $M^{Toda}\times\mathbb{C}$ which is a rank-1 extension of $M^{Toda}$. The multiplication on this flat F-manifold is given by
\begin{align*}
	&(\p_{\alpha},0)\star(\p_{\beta},0)=(\p_{\alpha}\circ\p_{\beta},\p_{\alpha}\p_{\beta}\Omega\cdot\p_{s})\\
	&(\p_{\alpha},0)\star(0,\p_{s})=(0,\p_{\alpha}\lambda(z)\cdot\p_{s})\\
	&(0,\p_{s})\star(0,\p_{s})=(0,z\lambda'(z)\cdot\p_{s})
\end{align*}
where $s$ is the coordinate on  $\mathbb{C}$, $z=e^{s}$,  $\p_{\alpha}=\frac{\p}{\p t^{\alpha}}$ for $t^{\alpha}\in \textbf{t}^{Toda}$, and
$$ \p_{\alpha}\p_{\beta}\Omega=(\frac{\p_{\alpha}\lambda(z)\p_{\beta}\lambda(z)}{z\lambda'(z)})_{\infty,\ge 0}+\sum_{j=1}^{m}(\frac{\p_{\alpha}\lambda(z)\p_{\beta}\lambda(z)}{z\lambda'(z)})_{\varphi_{j},\le -1}.
$$

\begin{cor}\label{Todaopenpri}
	The principal hierarchy for the flat F-manifold $M^{Toda}\times \mathbb{C}$ is given by
	$$
	\frac{\p\lambda(z)}{\p \tilde{T}^{s,p}}=0,\quad \frac{\p s(x)}{\p \tilde{T}^{s,p}}=d_{x}(\frac{\lambda(z)^{p+1}}{(p+1)!}),
	$$
	and
	\begin{align*}
		&\frac{\p\lambda(z)}{\p\tilde{T}^{t_{0,n_{0}-j},p}}=\frac{\p\lambda(z)}{\p T^{t_{0,j},p}},\quad 	\frac{\p s(x)}{\p\tilde{T}^{t_{0,n_{0}-j},p}}=d_{x}(d\theta_{t_{0,j},p+1}(z))_{+},\\
		&\frac{\p\lambda(z)}{\p\tilde{T}^{t_{i,n_{i}-j},p}}=\frac{\p\lambda(z)}{\p T^{t_{i,j},p}},\quad 	\frac{\p s(x)}{\p\tilde{T}^{t_{0,n_{i}-j},p}}=-d_{x}(d\theta_{t_{i,j},p+1}(z))_{-},
	\end{align*}
	where
	$$
	d_{x}\lambda(z)=\p_{x}\lambda(z)+z\lambda'(z)\frac{\p s(x)}{\p x},
	$$
	$d\theta_{t_{i,j},p}(z)$ is certain function on $\lambda(z)$.
\end{cor}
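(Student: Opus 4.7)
The plan is to apply the general construction of the principal hierarchy for a flat F-manifold from \cite{alcolado2017extended, arsie2021flat} to the rank-1 extension $M^{Toda}\times\mathbb{C}$ equipped with the multiplication $\star$ described above, paralleling the derivation of Corollary \ref{cKPopenpri} in the rational case. The essential new feature is the multiplicative coordinate $z=e^s$: the structure constant $(0,\p_s)\star(0,\p_s) = (0, z\lambda'(z)\p_s)$ replaces $\lambda'(s)$ from the rational case, which is precisely reflected in the total derivative formula $d_x\lambda(z) = \p_x\lambda(z) + z\lambda'(z)\p s/\p x$.

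First I would handle the $\lambda$-components of the flows. Because $(\p_\alpha,0)\star(\p_\beta,0)$ projects to $\p_\alpha\circ\p_\beta$ on the first factor (the $M^{Toda}$-multiplication), and because $(0,\p_s)\star(\,\cdot\,)$ only produces contributions in the $\p_s$ direction, the flows $\p\lambda(z)/\p\tilde{T}^{t^\alpha,p}$ must coincide with the $M^{Toda}$ flows from Theorem \ref{mainToda}, while $\p\lambda(z)/\p\tilde{T}^{s,p} = 0$. This accounts for the first halves of both displayed equations in the corollary statement.

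Next I would derive the $s$-components. For the extra direction $\tilde{T}^{s,p}$ the natural density is $\lambda(z)^{p+1}/(p+1)!$: its compatibility with the structure constant $z\lambda'(z)$ through the trigonometric chain rule built into $d_x$ produces $\p s/\p \tilde{T}^{s,p} = d_x(\lambda(z)^{p+1}/(p+1)!)$. For the lifted flows $\tilde{T}^{t^\alpha,p}$, the $s$-component is $d_x$ applied to the appropriate projection of the primary density $d\theta_{t^\alpha,p+1}(z)$ coming from Theorem \ref{mainToda}: the part regular at $\infty$ via the operator $(\cdot)_+$ for the directions $t_{0,j}$, and the singular part at $a_{i,0}$ via $(\cdot)_-$ for the directions $t_{i,j}$, with the sign inherited from the Hamiltonian vector field conventions in Theorem \ref{mainToda}. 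The lifted formula is then dictated by the mixed structure constant $(\p_\alpha,0)\star(0,\p_s) = (0,\p_\alpha\lambda(z)\p_s)$ together with the splitting of $\p_\alpha\p_\beta\Omega$ into its $\infty$- and $a_{i,0}$-pieces.

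The main obstacle is to verify that the proposed formulas genuinely define commuting flows compatible with the flat F-manifold structure constants encoded by $\p_\alpha\p_\beta\Omega$, and to match the algebraic form of the $(\cdot)_+$ and $(\cdot)_-$ projections of $d\theta_{t^\alpha,p+1}(z)$ against the Laurent expansions of $w_0(z)$ at $\infty$ and $w_i(z)$ at $a_{i,0}$, now with the Toda-type Poisson bracket $\{f,g\} = z(f_z g_x - f_x g_z)$ replacing its rational counterpart. The logarithmic terms entering $\theta_{t_{i,n_i},p}$ contribute additional boundary terms that must cancel consistently across the different residue points. Once this trigonometric bookkeeping is carried out in parallel with the rational case, the corollary follows from Theorem \ref{mainToda} and the general flat F-manifold framework of \cite{alcolado2017extended}.
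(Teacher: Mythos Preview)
Your approach is essentially the paper's: parallel the proof of Corollary \ref{cKPopenpri}, replacing $\lambda'(s)$ by $z\lambda'(z)$ and the rational Poisson bracket by its trigonometric version. The paper, however, organizes the computation differently from what you suggest. It first proves a separate theorem giving explicit vector fields
\[
\Theta_{s,p}=\Bigl(0,\tfrac{\lambda(z)^{p}}{p!}\p_{s}\Bigr),\quad
\Theta_{t_{0,n_{0}-j},p}=\bigl(\eta^{\ast}(d\theta_{t_{0,j},p}),(d\theta_{t_{0,j},p}|_{z=e^{s}})_{+}\p_{s}\bigr),
\]
and analogously for $i\ge 1$, and verifies directly that these satisfy the recursion $\nabla_X\Theta_{\bullet,p+1}=X\star\Theta_{\bullet,p}$. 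Once that recursion is established, the corollary follows from a short computation of $\Theta_{\bullet,p}\star\tilde{\p}_x$ using \eqref{todadulmet}--\eqref{todadulpro}, exactly as in the proof of Corollary \ref{cKPopenpri}.

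The gap in your proposal is the misidentification of the main obstacle. Commutativity of the flows and cancellation of logarithmic boundary terms are not what needs to be checked: once the $\Theta_{\bullet,p}$ satisfy the recursion, commutativity is automatic from the flat F-manifold framework. The actual work is verifying that recursion, which in turn rests on the algebraic identities for $\eta^\ast$ and $C_\p$ already established in the Toda case. Your outline has the right ingredients but places the emphasis on the wrong verification.
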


We hope that this result can provide insight into defining an open-type extended Toda hierarchy governing the generating function of the open Gromov-Witten invariants of \( \mathbb{CP}^{1} \) \cite{buryak2022open}.

This paper is organized as follows: In Section 2, we review the definition of Frobenius manifold, almost duality, flat F-manifold, and associated principal hierarchy. In Section 3, we construct the principal hierarchy for the Frobenius manifold with rational superpotential and its almost duality. We then show that this principal hierarchy can be directly restricted to the Frobenius submanifold with even superpotential. Finally, we provide a natural rank-1 extension of this Frobenius manifold and construct the associated principal hierarchy. In Section 4, we apply the same procedure to the Frobenius manifold with trigonometric superpotential.

\section{preliminary on Frobenius manifold}
In this section, we will recall the definition of Frobenius manifold and associated principal hierarchy.
\subsection{Frobenius manifold and principal hierarchy}
A Frobenius manifold of charge \(d\) is an \(n\)-dimensional manifold \(M\), where each tangent space \(T_v M\) is equipped with a Frobenius algebra structure \((A_v = T_v M, \circ, e, \langle \cdot, \cdot \rangle)\) that varies smoothly with \(v \in M\). This structure satisfies the following axioms:
\begin{enumerate}
	\item The bilinear form \(\langle \cdot, \cdot \rangle\) provides a flat metric on \(M\), and the unity vector field \(e\) satisfies \(\nabla e = 0\), where \(\nabla\) is the Levi-Civita connection for the flat metric.
	\item Define a 3-tensor \(c\) by \(c(X, Y, Z) := \langle X \circ Y, Z \rangle\) with \(X, Y, Z \in T_v M\). Then, the 4-tensor \((\nabla_W c)(X, Y, Z)\) is symmetric in \(X, Y, Z, W \in T_v M\).
	\item There exists a vector field \(E\), called  the Euler vector field, which satisfies \(\nabla^2 E = 0\) and 
	\[
	\begin{aligned}
		& [E, X \circ Y] - [E, X] \circ Y - X \circ [E, Y] = X \circ Y, \\
		& E(\langle X, Y \rangle) - \langle [E, X], Y \rangle - \langle X, [E, Y] \rangle = (2 - d) \langle X, Y \rangle
	\end{aligned}
	\]
	for any vector fields $X,Y$ on $M$.
\end{enumerate}

On an \(n\)-dimensional Frobenius manifold \(M\), we select a set of flat coordinates \(t = (t^1, \ldots, t^n)\) such that \(e = \frac{\partial}{\partial t^1}\). In this coordinate system, the components of the metric $\langle\ ,\ \rangle$ are given by:
\[
\eta_{\alpha \beta} = \left\langle \frac{\partial}{\partial t^\alpha}, \frac{\partial}{\partial t^\beta} \right\rangle, \quad \alpha, \beta = 1, \ldots, n,
\]
where 
$\eta_{\alpha\beta}$ defines a constant and non-degenerate $n\times n$ matrix.
The inverse of this matrix is denoted by \(\eta^{\alpha \beta}\). The metric and its inverse are utilized for index lowering and raising, respectively, with the Einstein summation convention applied to repeated Greek indices.

Furthermore, we denote the components of the 3-tensor \(c\) by:
\[
c_{\alpha \beta \gamma} = c\left(\frac{\partial}{\partial t^\alpha}, \frac{\partial}{\partial t^\beta}, \frac{\partial}{\partial t^\gamma}\right), \quad \alpha, \beta, \gamma = 1, \ldots, n,
\]
which allows us to express the multiplication structure of the Frobenius algebra \(T_v M\) in terms of 
\[
\frac{\partial}{\partial t^\alpha} \circ \frac{\partial}{\partial t^\beta} = c_{\alpha \beta}^\gamma \frac{\partial}{\partial t^\gamma},
\]
where the coefficients \(c_{\alpha \beta}^\gamma\) are obtained by contracting the 3-tensor \(c\) with the metric \(\eta^{\alpha\beta}\):
\[
c_{\alpha \beta}^\gamma = \eta^{\gamma \epsilon} c_{\epsilon \alpha \beta},
\]
which satisfy
\[
c_{1 \alpha}^\beta = \delta_\alpha^\beta, \quad c_{\alpha \beta}^\epsilon c_{\epsilon \gamma}^\sigma = c_{\alpha \gamma}^\epsilon c_{\epsilon \beta}^\sigma.
\]

According to the definition of Frobenius manifold, there exists a smooth function \( F(t) \) satisfying the following properties:
\[
\begin{aligned}
	c_{\alpha \beta \gamma} & = \frac{\partial^3 F}{\partial t^\alpha \partial t^\beta \partial t^\gamma}, \\
	\operatorname{Lie}_E F & = (3-d) F + \text{quadratic terms in } t.
\end{aligned}
\]
Hence, \( F(t) \) is a solution to the WDVV equation
\[
\frac{\partial^3 F}{\partial t^\alpha \partial t^\beta \partial t^\gamma} \eta^{\gamma \epsilon} \frac{\partial^3 F}{\partial t^\epsilon \partial t^\sigma \partial t^\mu} = \frac{\partial^3 F}{\partial t^\alpha \partial t^\sigma \partial t^\gamma} \eta^{\gamma \epsilon} \frac{\partial^3 F}{\partial t^\epsilon \partial t^\beta \partial t^\mu}.
\]
The third-order derivatives \( c_{\alpha \beta \gamma} \) of \( F(t) \) are known as the 3-point correlator functions in the context of topological field theory.

For a Frobenius manifold \(M\), its cotangent space \(T_v^* M\) is endowed with a Frobenius algebra structure as well. This structure encompasses an invariant bilinear form and a product, which are defined by:
\[
\left\langle d t^\alpha, d t^\beta\right\rangle = \eta^{\alpha \beta}, \quad d t^\alpha \circ d t^\beta = \eta^{\alpha \epsilon} c_{\epsilon \gamma}^\beta.
\]
Let us define
\[
g^{\alpha \beta} = i_E\left(d t^\alpha \circ d t^\beta\right),
\]
then \( \left(d t^\alpha, d t^\beta\right) := g^{\alpha \beta} \) establishes a symmetric bilinear form known as the intersection form on \(T_v^* M\).
The intersection form \(g^{\alpha \beta}\) and the invariant bilinear form \(\eta^{\alpha \beta}\)  together form a pencil of flat metrics 
\[
g^{\alpha \beta} + \epsilon \eta^{\alpha \beta}
\]
parameterized by \( \epsilon \). As a result, they give rise to a bi-hamiltonian structure of hydrodynamic type on the loop space \( \{S^1 \rightarrow M\} \), expressed as:
\[
\{\ ,\ \}_2 + \epsilon \{\ ,\ \}_1.
\]

The deformed flat connection on \(M\), originally introduced by Dubrovin \cite{dub1998}, is defined as:
\[
\widetilde{\nabla}_X Y = \nabla_X Y + z X \circ Y, \quad X, Y \in \operatorname{Vect}(M).
\]
This connection can be consistently extended to a flat affine connection on \(M \times \mathbb{C}^*\) such that
\[
\begin{aligned}
	&\tilde{\nabla}_X \frac{d}{d z} = 0, \\
	&\tilde{\nabla}_{\frac{d}{d z}} \frac{d}{d z} = 0, \\
	&\tilde{\nabla}_{\frac{d}{d z}} X = \partial_z X + E \circ X - \frac{1}{z} \mathcal{V}(X),
\end{aligned}
\]
where \(X\) is a vector field on \(M \times \mathbb{C}^*\) that vanishes in the \(\mathbb{C}^*\) component, and \(\mathcal{V}(X)\) is defined as
\[
\mathcal{V}(X) := \frac{2-d}{2} X - \nabla_X E.
\]

There exists a system of deformed flat coordinates \(\tilde{v}_1(t, z), \ldots, \tilde{v}_n(t, z)\) that can be expressed in terms of
\[
\left(\tilde{v}_1(t, z), \ldots, \tilde{v}_n(t, z)\right) = \left(\theta_1(t, z), \ldots, \theta_n(t, z)\right) z^\mu z^R.
\]
These coordinates are chosen such that the 1-forms
\[
\xi_\alpha = \frac{\partial \tilde{v}_\alpha}{\partial t^\beta} d t^\beta, \quad \alpha = 1, \ldots, n, \quad \text{and} \quad \xi_{n+1} = d z,
\]
constitute a basis of solutions to the system \( \widetilde{\nabla} \xi = 0 \). Here, \( \mu = \operatorname{diag}(\mu_1, \ldots, \mu_n) \) is a diagonal matrix characterized by
\[
\mathcal{V}\left(\frac{\partial}{\partial t^\alpha}\right)=\mu_\alpha \frac{\partial}{\partial t^\alpha}, \quad \alpha = 1, \ldots, n,
\]
which is called the spectrum of $M$,
and \( R = R_1 + \ldots + R_m \) is a constant nilpotent matrix satisfying
\[
\begin{aligned}
	& (R_s)_\beta^\alpha = 0 \text{ if } \mu_\alpha - \mu_\beta \neq s, \\
	& (R_s)_\alpha^\gamma \eta_{\gamma \beta} = (-1)^{s+1} (R_s)_\beta^\gamma \eta_{\gamma \alpha}.
\end{aligned}
\]
The functions \( \theta_\alpha(t, z) \), being analytic near \( z = 0 \), can be represented by a power series expansion:
\[
\theta_\alpha(t, z) = \sum_{p \geq 0} \theta_{\alpha, p}(t) z^p, \quad \alpha = 1, \ldots, n.
\]
The coefficients of this expansion satisfy
\begin{equation}\label{princon1}
	\frac{\partial^2 \theta_{\alpha, p+1}(t)}{\partial t^\beta \partial t^\gamma} = c_{\beta \gamma}^\epsilon(t) \frac{\partial \theta_{\alpha, p}(t)}{\partial t^\epsilon},
\end{equation}
and
\begin{equation}\label{princon2}
	\operatorname{Lie}_E\left(\frac{\partial \theta_{\alpha, p}(t)}{\partial t^\beta}\right) = \left(p + \mu_\alpha + \mu_\beta\right) \frac{\partial \theta_{\alpha, p}(t)}{\partial t^\beta} + \frac{\partial}{\partial t^\beta} \sum_{s=1}^p \theta_{\epsilon, p-s}(t) \left(R_{s}\right)_\alpha^\epsilon.
\end{equation}
Moreover, the normalization condition\footnote{In the referenced work \cite{dub1998}, an additional condition \( \langle \nabla \theta_\alpha(t, z), \nabla \theta_\beta(t, -z) \rangle = \eta_{\alpha \beta} \) was considered. However, as it does not significantly alter the properties of the principal hierarchy, we omit it here for computational simplicity} is imposed:
\begin{equation}\label{princon3}
	\theta_{\alpha, 0}(t) = \eta_{\alpha \beta} t^\beta.
\end{equation}

Given a system of solutions \(\left\{\theta_{\alpha, p}\right\}\) to the equations \eqref{princon1}-\eqref{princon3}, the principal hierarchy associated with \(M\) is defined as the following Hamiltonian system on the loop space \(\left\{S^1 \rightarrow M\right\}\) :
\[
\frac{\partial t^\gamma}{\partial T^{\alpha, p}} = \left\{t^\gamma(x), \int \theta_{\alpha, p+1} (t) \, d x\right\}_1 := \eta^{\gamma \beta} \frac{\partial}{\partial x}\left(\frac{\partial \theta_{\alpha, p+1} (t)}{\partial t^\beta}\right), \quad \alpha, \beta = 1,2, \ldots, n, \ p \geq 0.
\]
These commuting flows are tau-symmetric, which means that
\[
\frac{\partial \theta_{\alpha, p}(t)}{\partial T^{\beta, q}} = \frac{\partial \theta_{\beta, q}(t)}{\partial T^{\alpha, p}}, \quad \alpha, \beta = 1,2, \ldots, n, \ p, q \geq 0.
\]
Furthermore, these flows can be expressed in a bi-hamiltonian recursion form as
\[
\mathcal{R} \frac{\partial}{\partial T^{\alpha, p-1}} = \frac{\partial}{\partial T^{\alpha, p}}\left(p + \mu_\alpha + \frac{1}{2}\right) + \sum_{s=1}^p \frac{\partial}{\partial T^{\epsilon, p-s}}\left(R_{s}\right)_\alpha^\epsilon,
\]
where \(\mathcal{R} = \{\ ,\ \}_2 \cdot \{\ ,\ \}_1^{-1}\).
\subsection{almost duality}
The discriminant of a Frobenius manifold is defined as
\[
\Sigma = \{p \in M | \det(g^{\alpha\beta}(p)) = 0\}.
\]
Let \( \hat{M} = M \setminus \Sigma \), and for any \( p \in \hat{M} \), define the multiplication on \( T_{p}\hat{M} \) as
\[
u \star v = E^{-1} \circ u \circ v, \quad u, v \in T_{p}\hat{M}.
\]
Then the data set \( (\hat{M}, g^{\alpha\beta}, \star) \) satisfies the axiom (2) in the definition of Frobenius manifold. This data set is referred to as the almost duality \cite{dubrovin2004almost} of the Frobenius manifold \( M \).

Let \( \hat{t} = (\hat{t}^{1}, \ldots, \hat{t}^{n}) \) be the flat coordinates of \( g^{\alpha\beta} \), and let \( \hat{c}_{\alpha\beta}^{\gamma}(t) \) be the components of the multiplication structure \( \star \) in these flat coordinates. Consider the formal power series
\[
\hat{\theta}_\alpha(\hat{t}, z) = \sum_{p \geq 0} \hat{\theta}_{\alpha, p}(\hat{t}) z^p, \quad \alpha = 1, \ldots, n,
\]
such that the following system of equations holds:
\[
\frac{\partial^2 \hat{\theta}_{\alpha, p+1}(\hat{t})}{\partial \hat{t}^\beta \partial \hat{t}^\gamma} = \hat{c}_{\beta \gamma}^\epsilon(\hat{t}) \frac{\partial \hat{\theta}_{\alpha, p}(\hat{t})}{\partial \hat{t}^\epsilon}, \quad \hat{\theta}_{\alpha,0} = g_{\alpha\beta}\hat{t}^{\beta},
\]
then \( d\hat{\theta}(\hat{t},z) \) provides a set of fundamental solutions near \( z = 0 \) for the deformed flat connection: 
\[
\hat{\nabla}^{deform}_{u}v = \hat{\nabla}_{u}v + z \cdot E^{-1} \circ u \circ v,
\]
where \( \hat{\nabla} \) is the Levi-Civita connection of \( g^{\alpha\beta} \). The Hamiltonian system
\[
\frac{\partial \hat{t}^\gamma}{\partial \hat{T}^{\alpha, p}} = \left\{\hat{t}^\gamma(x), \int \hat{\theta}_{\alpha, p+1} (\hat{t}) d x\right\}_2 := g^{\gamma \beta} \frac{\partial}{\partial x}\left(\frac{\partial \hat{\theta}_{\alpha, p+1} (\hat{t})}{\partial \hat{t}^\beta}\right), \quad \alpha, \beta = 1,2, \ldots, n, \ p \geq 0,
\]
defines an integrable hierarchy on the loop space of \( \hat{M} \).

\subsection{flat F-manifold}
A flat F-manifold $(M,\nabla,\circ,e)$ consists of an analytic manifold $M$, a flat torsionless connection $\nabla$ on $TM$, and a commutative associated algebra structure on each tangent space $T_{p}M$ with a unit vector field $e$, satisfying the following conditions:
\begin{enumerate}
	\item[(1)] $\nabla e=0$;
	\item[(2)] There exists a vector field $\Psi$ on $M$, called the vector potential, such that
	$$
	X\circ Y=[X,[Y,\Psi]]
	$$
	for any flat vector field $X$ and $Y$ on $M$.
\end{enumerate}
For more details, see \cite{manin2005f,arsie2018flat,alcolado2017extended}.

Let $\textbf{t} = (t^{1}, \dots, t^{n})$ denote the flat coordinates associated with the connection $\nabla$. The principal hierarchy for the flat F-manifold $M$ is defined as:
$$
\frac{\p}{\p T^{i,p}}=\Theta_{i,p}\circ \p_{x},\quad i=1,\cdots,n,
$$
where the vector fields $\Theta_{i,p}$ on $M$ satisfy
$$
\Theta_{i,0}=\frac{\p}{\p t^{i}},\quad i=1,\cdots,n,
$$
and
$$
\nabla_{X}\Theta_{i,p+1}=\Theta_{i,p}\circ X
$$
for any vector field $X$ on $M$.

Let $M$ be a Frobenius manifold with flat coordinates $\textbf{t}=(t^{1},\cdots,t^{n})$ and potential $F(\textbf{t})$. There exists a flat F-manifold structure on $M\times \mathbb{C}$ with vector potential:
$$
\Psi=\frac{\p F(\textbf{t})}{\p t^{\alpha}}\eta^{\alpha\beta}\frac{\p}{\p t^{\beta}}+\Omega(\textbf{t},s)\frac{\p}{\p s}
$$
if and only if $\Omega(\textbf{t},s)$ satisfies the  open WDVV equations:
$$
c_{\alpha\beta}^{\delta}\frac{\p\Omega(\textbf{t},s)}{\p t^{\delta}\p t^{\gamma}}+\frac{\p\Omega(\textbf{t},s)}{\p t^{\alpha}\p t^{\beta}}\frac{\p\Omega(\textbf{t},s)}{\p t^{\gamma}\p s}=c_{\beta\gamma}^{\delta}\frac{\p\Omega(\textbf{t},s)}{\p t^{\delta}\p t^{\alpha}}+\frac{\p\Omega(\textbf{t},s)}{\p t^{\beta} \p t^{\gamma}}\frac{\p\Omega(\textbf{t},s)}{\p t^{\alpha}\p s}
$$
and
$$
c_{\alpha\beta}^{\delta}\frac{\p\Omega(\textbf{t},s)}{\p  t^{\delta} \p s}+\frac{\p\Omega(\textbf{t},s)}{\p t^{\alpha} \p t^{\beta}}\frac{\p\Omega(\textbf{t},s)}{\p s\p s}=\frac{\p\Omega(\textbf{t},s)}{\p t^{\alpha} \p s}\frac{\p\Omega(\textbf{t},s)}{\p t^{\beta}\p s},
$$
where $s$ denotes the coordinate on $\mathbb{C}$.

The following lemma by A. Alcolado is useful for constructing such a function $\Omega(\textbf{t}, s)$.
\begin{lem}[\cite{alcolado2017extended}]\label{consopen}
	Let $\omega=\omega(\textbf{t},s)$ be a smooth function on $M\times \mathbb{C}$ satisfying $\p_{s} \omega\ne 0$ and
	$$
	\p_{\alpha}\p_{\beta}\omega=\p_{s}(\frac{\p_{\alpha}\omega \p_{\beta}\omega-c_{\alpha\beta}^{\delta}\p_{\delta}\omega}{\p_{s}\omega}),\quad \alpha,\beta=1,\cdots,n,
	$$
	where $\p_{\alpha}=\frac{\p}{\p t^{\alpha}}$. Then, the function $\Omega(\textbf{t},s)$ defined by  $\omega=\p_{s}\Omega$ provides a solution to the open WDVV equations associated with $M$.
\end{lem}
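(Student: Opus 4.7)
The plan is to translate the hypothesis on $\omega$ directly into both open WDVV equations for $\Omega$, treating $\omega_\alpha := \p_\alpha\omega$ and $\omega_s := \p_s\omega$ as quantities inherited from $\omega = \p_s\Omega$. The relations $\p_\delta\p_s\Omega = \omega_\delta$ and $\p_s^2\Omega = \omega_s$ are automatic; only the second $\textbf{t}$-derivatives of $\Omega$ require a choice.

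First I would rewrite the hypothesis. Since $\p_\alpha\p_\beta\omega = \p_s(\p_\alpha\p_\beta\Omega)$, the hypothesis is equivalent to
\begin{equation*}
\p_s\!\left[\p_\alpha\p_\beta\Omega \;-\; \frac{\omega_\alpha\omega_\beta - c_{\alpha\beta}^\delta\omega_\delta}{\omega_s}\right] = 0,
\end{equation*}
so the bracket is a function of $\textbf{t}$ alone. By absorbing this $\textbf{t}$-only discrepancy into the antiderivative, one arranges
\begin{equation*}
\p_\alpha\p_\beta\Omega \;=\; \frac{\omega_\alpha\omega_\beta - c_{\alpha\beta}^\delta\omega_\delta}{\omega_s}. \qquad (\star)
\end{equation*}
Multiplying $(\star)$ by $\omega_s = \p_s^2\Omega$ and recalling $\omega_\delta = \p_\delta\p_s\Omega$, the second open WDVV equation is reproduced verbatim.

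For the first open WDVV equation, substitute $(\star)$ into the difference between its two sides and multiply through by $\omega_s$. After expansion, the cubic monomial $\omega_\alpha\omega_\beta\omega_\gamma$ appears symmetrically and drops out, the cross terms of the form $c_{\alpha\beta}^\delta\omega_\delta\omega_\gamma$ cancel with their counterparts coming from $(\omega_\alpha\omega_\beta-c_{\alpha\beta}^\tau\omega_\tau)\omega_\gamma$, and what remains is
\begin{equation*}
\bigl(-\,c_{\alpha\beta}^\delta c_{\delta\gamma}^\tau + c_{\beta\gamma}^\delta c_{\delta\alpha}^\tau\bigr)\,\omega_\tau.
\end{equation*}
This vanishes by the associativity $c_{\alpha\beta}^\delta c_{\delta\gamma}^\tau = c_{\beta\gamma}^\delta c_{\delta\alpha}^\tau$, obtained from the stated WDVV identity $c_{\alpha\beta}^\epsilon c_{\epsilon\gamma}^\sigma = c_{\alpha\gamma}^\epsilon c_{\epsilon\beta}^\sigma$ together with the symmetry $c_{\alpha\beta}^\delta = c_{\beta\alpha}^\delta$.

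The main obstacle is the passage to $(\star)$: the $\textbf{t}$-dependent discrepancy $h_{\alpha\beta}(\textbf{t})$ between the actual $\p_\alpha\p_\beta\Omega$ and the proposed right-hand side must itself be of the form $\p_\alpha\p_\beta g(\textbf{t})$ for a single function $g$, which requires the integrability condition $\p_\gamma h_{\alpha\beta} = \p_\beta h_{\alpha\gamma}$. Once this compatibility is verified using the symmetry of $c_{\alpha\beta}^\delta$ and the hypothesis, adding $-g(\textbf{t})$ to $\Omega$ completes the reduction, and the rest of the argument is a direct algebraic identification with the open WDVV equations.
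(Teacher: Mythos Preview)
The paper does not prove this lemma; it is quoted from \cite{alcolado2017extended} without proof, so there is no argument in the paper to compare against. I will therefore comment only on the soundness of your sketch.

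Your algebraic core is correct: once $(\star)$ holds, plugging $\p_\alpha\p_\beta\Omega = (\omega_\alpha\omega_\beta - c_{\alpha\beta}^\delta\omega_\delta)/\omega_s$, $\p_\alpha\p_s\Omega = \omega_\alpha$, $\p_s^2\Omega = \omega_s$ into the two open WDVV equations reduces the second one to a tautology and the first one to the associativity identity $c_{\alpha\beta}^\delta c_{\delta\gamma}^\tau = c_{\beta\gamma}^\delta c_{\delta\alpha}^\tau$. That part is clean.

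The step you single out as ``the main obstacle'' is a genuine gap as written. You need $h_{\alpha\beta}(\mathbf t) := \p_\alpha\p_\beta\Omega - A_{\alpha\beta}$ (with $A_{\alpha\beta}$ the right side of $(\star)$) to be the Hessian of a single $g(\mathbf t)$, i.e.\ $\p_\gamma A_{\alpha\beta} = \p_\beta A_{\alpha\gamma}$. You assert this follows from ``the symmetry of $c_{\alpha\beta}^\delta$ and the hypothesis,'' but you do not carry out the check, and it is not a one-liner: expanding $\p_\gamma A_{\alpha\beta}$ yields terms like $\omega_{\alpha\gamma}\omega_\beta/\omega_s$ and $(\omega_\alpha\omega_\beta - c_{\alpha\beta}^\delta\omega_\delta)\omega_{s\gamma}/\omega_s^2$ that are not manifestly symmetric under $\beta\leftrightarrow\gamma$. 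The hypothesis gives $\p_s(\p_\gamma A_{\alpha\beta} - \p_\beta A_{\alpha\gamma}) = 0$, so the obstruction is $s$-independent, but that alone does not force it to vanish. If you intend this route, you must either compute and show it is zero, or exhibit a limiting value of $s$ at which $A_{\alpha\beta}$ and its $\mathbf t$-derivatives vanish.

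It is worth noting that in the paper's applications (Subsections~3.7 and~4.7) this obstacle is bypassed: there $\p_\alpha\p_\beta\Omega$ is \emph{defined} to equal the right side of $(\star)$, so $(\star)$ holds by fiat and only your algebraic verification is needed.
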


\section{Frobenius manifold with rational  superpotential}
\subsection{Definition of $M^{cKP}$}

Given positive integers $m$ and \( n_0, \ldots, n_m \), let \(  M^{cKP} \) be the space of rational functions
\[
\lambda(z) = \frac{1}{n_0}z^{n_0} + a_{0,n_0-2}z^{n_0-2} + \cdots + a_{0,0} + \sum_{i=1}^{m}\sum_{j=1}^{n_i}a_{i,j}(z-a_{i,0})^{-j}
\]
where \( a_{i,n_i} \neq 0, \ i = 1, \ldots, m \). The parameters \( \{a_{0,i}\}_{i=0}^{n_0-2} \cup \{a_{1,i}\}_{i=0}^{n_1} \cup \cdots \cup \{a_{m,i}\}_{i=0}^{n_m} \) form a coordinate system on \( M^{cKP} \). 

For any \( \p', \p'', \p''' \in T_{\lambda(z)}M^{cKP} \), define the metric 
\[
\langle \p', \p'' \rangle_{\eta} := \eta(\p', \p'') = \sum_{|\lambda|<\infty} \res_{d\lambda=0} \frac{\p'(\lambda(z)dz) \p''(\lambda(z)dz)}{d\lambda(z)}
\]
and the \( (0,3) \)-type tensor
\[
c(\p', \p'', \p''') := \sum_{|\lambda|<\infty} \res_{d\lambda=0} \frac{\p'(\lambda(z)dz) \p''(\lambda(z)dz) \p'''(\lambda(z)dz)}{d\lambda(z)dz},
\]
then the equality
\[
c(\p', \p'', \p''') = \eta(\p' \circ \p'', \p''')
\]
defines the multiplication $\circ$ on \( T_{\lambda(z)}M^{cKP} \).  Introduce vector fields \( e \) and \( E \) on \( M^{cKP}\) such that
\[
Lie_{e}\lambda(z) = 1, \quad Lie_{E}\lambda(z) = \lambda(z) - \frac{z}{n_{0}}\lambda'(z),
\]
then the data set \( (M, \eta, \circ, e, E) \) constitutes a semisimple Frobenius manifold with \( d = 1 - \frac{2}{n_{0}} \). 

The flat coordinates of the metric \( \eta \), denoted as
\[
\mathbf{t} = \{t_{0,j}\}_{j=1}^{n_{0}-1} \cup \{t_{1,j}\}_{j=0}^{n_{1}} \cup \cdots \cup \{t_{m,j}\}_{j=0}^{n_{m}},
\]
are given by the coefficients of the following series:
\[
z = \left\{
\begin{aligned}
	&t_{i,0} + t_{i,1}w_{i}^{-1} + \cdots, \quad &z &\to a_{i,0},\ i = 1, \cdots, m, \\
	&w_{0} - t_{0,1}w_{0}^{-1} - t_{0,2}w_{0}^{-2} + \cdots, \quad &z &\to \infty,\ \ i = 0,
\end{aligned}
\right.
\]
where
\[
w_{i} = \left\{
\begin{aligned}
	&(n_{i}\lambda)^{\frac{1}{n_{i}}} = w_{i,1}(z - a_{i,0})^{-1} + \cdots, \quad &z &\to a_{i,0},\ i = 1, \cdots, m, \\
	&(n_{0}\lambda)^{\frac{1}{n_{0}}} = z + w_{0,1}z^{-1} + \cdots, \quad &z &\to \infty,\ i = 0.
\end{aligned}
\right.
\]
Furthermore, we have
\[
\frac{\partial \lambda(z)}{\partial t_{i,j}} = \left\{
\begin{aligned}
	&-(w_{i}(z)^{n_{i}-j-1}w_{i}'(z))_{a_{i,0},\le -1},\quad i = 1, \cdots, m,\ j = 0, \cdots, n_{i}, \\
	&(w_{i}(z)^{n_{0}-j-1}w_{0}'(z))_{\infty,\ge 0},\quad i = 0,\ j = 1, \cdots, n_{0}-1.
\end{aligned}
\right.
\]
In the flat coordinate system, the vector fields \(e\) and \( E \) can be expressed as
\[
e=\frac{\p}{\p t_{0,n_{0}-1}},\quad E = \sum_{j=1}^{n_{0}-1} \left(\frac{1+j}{n_{0}}\right)t_{0,j}\frac{\partial}{\partial t_{0,j}} + \sum_{i=1}^{m} \sum_{j=0}^{n_{i}} \left(\frac{1}{n_{0}} + \frac{j}{n_{i}}\right)t_{i,j}\frac{\partial}{\partial t_{i,j}},
\]
thus the spectrum of \( M^{cKP} \) is $
\mu_{t_{i,j}} = \frac{1}{2} - \frac{j}{n_{i}}$.

\begin{lem}\label{kdvconexp}
	Let \( \nabla \) be the Levi-Civita connection associated with the metric \( \eta \). Then, for any vector fields \( \p_1 \) and \( \p_2 \) on \( M^{cKP} \), it holds that
	\begin{equation}\label{kdvconn}
		(\nabla_{\p_1}\p_2) \cdot \lambda(z) = \p_1 \p_2 \lambda(z) - \left(\frac{\p_1 \lambda(z) \p_2 \lambda(z)}{\lambda'(z)}\right)'_{\infty, \ge 0} - \sum_{s=1}^{m} \left(\frac{\p_1 \lambda(z) \p_2 \lambda(z)}{\lambda'(z)}\right)'_{a_{s,0}, \le -1}.
	\end{equation}
\end{lem}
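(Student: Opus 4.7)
The plan is to reduce the claim to flat coordinate vector fields and then verify the resulting residue identity. Since the Levi--Civita connection is $C^{\infty}(M^{cKP})$-linear in the first argument and satisfies the Leibniz rule in the second, one can expand arbitrary vector fields in the flat basis $\{\p/\p t^{\alpha}\}$. Writing $\p_{1}=X^{\alpha}\p_{\alpha}$ and $\p_{2}=Y^{\beta}\p_{\beta}$ and substituting into both sides of \eqref{kdvconn}, the terms proportional to $\p_{1}Y^{\beta}$ match tautologically because of the Leibniz rule, so the only thing to check is the part proportional to $X^{\alpha}Y^{\beta}$. Since $\nabla_{\p_{\alpha}}\p_{\beta}=0$ for flat coordinate fields, this reduces the lemma to
\[
\p_{\alpha}\p_{\beta}\lambda(z) \;=\; \Bigl(\tfrac{\p_{\alpha}\lambda\,\p_{\beta}\lambda}{\lambda'(z)}\Bigr)'_{\infty,\ge 0} + \sum_{s=1}^{m}\Bigl(\tfrac{\p_{\alpha}\lambda\,\p_{\beta}\lambda}{\lambda'(z)}\Bigr)'_{a_{s,0},\le -1}.
\]

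The next step is to exploit the partial fraction decomposition of $R(z):=\tfrac{\p_{\alpha}\lambda\,\p_{\beta}\lambda}{\lambda'(z)}$. This rational function has poles at $\infty$, at each $a_{s,0}$, and at the (generically simple) critical points $c_{k}$ of $\lambda$, so
\[
R \;=\; (R)_{\infty,\ge 0}\;+\;\sum_{s}(R)_{a_{s,0},\le -1}\;+\;\sum_{k}(R)_{c_{k},\le -1}.
\]
Differentiating in $z$ and inserting this into the identity to be proved converts it into
\[
\p_{\alpha}\p_{\beta}\lambda(z) \;=\; R'(z)\;+\;\sum_{k}\frac{\p_{\alpha}u_{k}\,\p_{\beta}u_{k}}{\lambda''(c_{k})(z-c_{k})^{2}},
\]
where $u_{k}=\lambda(c_{k})$ are the critical values; here I used that $\lambda'(c_{k})=0$ forces $\p_{\alpha}\lambda(c_{k})=\p_{\alpha}u_{k}$, so that $(R)_{c_{k},\le -1}=\tfrac{\p_{\alpha}u_{k}\,\p_{\beta}u_{k}}{\lambda''(c_{k})(z-c_{k})}$ and its negative $z$-derivative produces the double pole above.

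To verify this reduced identity, I would pass to the canonical (semisimple) coordinates $\{u_{k}\}$, in which the metric on $M^{cKP}$ is diagonal with entries $1/\lambda''(c_{k})$ and the Darboux--Egoroff structure makes the Jacobian of the transition $t\mapsto u$ transparent; expanding $\p_{\alpha}\p_{\beta}\lambda$ via this Jacobian produces exactly the double poles at $c_{k}$, while the remaining contribution matches $R'(z)$. Alternatively, one can argue directly by differentiating the explicit expressions for $\p_{\alpha}\lambda$ recorded just before the lemma, using $w_{i}^{n_{i}}=n_{i}\lambda$ (so that $\p_{\alpha}w_{i}=\p_{\alpha}\lambda/w_{i}^{n_{i}-1}$) together with the commutativity $\p_{\alpha}\lambda'=(\p_{\alpha}\lambda)'$, and expanding at each of the points $\infty, a_{s,0}$.

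The main obstacle is this final step: the left side $\p_{\alpha}\p_{\beta}\lambda$ is assembled from the flat-coordinate partial-fraction data at $\infty$ and the $a_{s,0}$, while the correction on the right involves canonical-coordinate quantities $u_{k}, \lambda''(c_{k})$ supported at the critical points. Reconciling the two is the computational heart of the argument; it is ultimately what the Darboux--Egoroff condition of the Hurwitz Frobenius manifold structure on $M^{cKP}$ is encoding, and I would expect the cleanest route to be through the canonical coordinates rather than through a brute-force expansion of $w_{i}$ near each singular point.
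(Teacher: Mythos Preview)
Your reduction to flat coordinate fields is correct, and the partial-fraction reformulation is sensible, but you have left the core computation unfinished and explicitly flagged it as ``the main obstacle.'' That step---matching the double poles at the critical points against the Hessian data $\partial_{\alpha}\partial_{\beta}\lambda$---is genuinely nontrivial in your framework, and neither of your two suggested routes (canonical coordinates or brute-force expansion of the $w_i$) is carried out. So as written this is a proof sketch, not a proof.

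The paper bypasses this difficulty entirely by reversing the logic. Rather than assuming $\nabla$ is Levi--Civita and computing its action on $\lambda(z)$, it \emph{defines} a connection by the right-hand side of \eqref{kdvconn} and then checks the two Levi--Civita axioms directly: torsion-freeness is immediate from the symmetry of the formula in $\partial_1,\partial_2$, and metric compatibility $\nabla_{\partial_1}\langle\partial_2,\partial_3\rangle_\eta=\langle\nabla_{\partial_1}\partial_2,\partial_3\rangle_\eta+\langle\partial_2,\nabla_{\partial_1}\partial_3\rangle_\eta$ follows from a short residue calculation using integration by parts and the commutativity of $\partial_i$ with $\partial_z$. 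Uniqueness of the Levi--Civita connection then finishes the argument. This approach never touches the critical points, the canonical coordinates, or the Darboux--Egoroff structure, and it works uniformly for arbitrary vector fields without the flat-coordinate reduction. If you want to salvage your route, you would need to close the gap you identified; but the paper's axiom-verification strategy is both shorter and more robust.
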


\begin{proof}
	It suffices to prove that the connection \( \nabla \) defined by \eqref{kdvconn} satisfies
	\begin{equation}\label{kdvtor}
		(\nabla_{\p_1}\p_2) \cdot \lambda(z) - (\nabla_{\p_2}\p_1) \cdot \lambda(z) = \p_1 \p_2 \lambda(z) - \p_2 \p_1 \lambda(z)
	\end{equation}
	and
	\begin{equation}\label{conncamp}
		\nabla_{\p_1}\left\langle\p_2, \p_3\right\rangle_\eta = \left\langle\nabla_{\p_1} \p_2, \p_3\right\rangle_\eta + \left\langle\p_2, \nabla_{\p_1} \p_3\right\rangle_\eta.
	\end{equation}
	The equality \eqref{kdvtor} can be easily verified. For the left-hand side of the equality \eqref{conncamp}, we have
	\begin{equation}
		\begin{aligned}
			& \nabla_{\p_1}\left\langle\p_2, \p_3\right\rangle_\eta \\
			=& -\res_{\infty}\left( \frac{\p_1 \p_2 \lambda(z) \cdot \p_3 \lambda(z)}{\lambda'(z)} + \frac{\p_2 \lambda(z) \cdot \p_1 \p_3 \lambda(z)}{\lambda'(z)} - \frac{\p_2 \lambda(z) \cdot \p_3 \lambda(z) \cdot \p_1 \lambda'(z)}{(\lambda'(z))^2} \right)dz \\
			& -\sum_{s=1}^{m}\res_{a_{s,0}}\left( \frac{\p_1 \p_2 \lambda(z) \cdot \p_3 \lambda(z)}{\lambda'(z)} + \frac{\p_2 \lambda(z) \cdot \p_1 \p_3 \lambda(z)}{\lambda'(z)} - \frac{\p_2 \lambda(z) \cdot \p_3 \lambda(z) \cdot \p_1 \lambda'(z)}{(\lambda'(z))^2} \right)dz.
		\end{aligned}
	\end{equation}
	Because \( \p_1, \p_2 \) commute with \( \p_z \), applying integration by parts, we obtain
	\begin{equation}
		\begin{aligned}
			& \res_{\infty} \frac{\p_2 \lambda(z) \cdot \p_3 \lambda(z) \cdot \p_1 \lambda'(z)}{(\lambda'(z))^2} dz \\
			= &  \res_{\infty} \frac{\p_2 \lambda(z)}{\lambda'(z)}\left(\frac{\p_3 \lambda(z) \cdot \p_1 \lambda(z)}{\lambda'(z)}\right)' dz + \res_{\infty} \frac{\p_3 \lambda(z)}{\lambda'(z)}\left(\frac{\p_2 \lambda(z) \cdot \p_1 \lambda(z)}{\lambda'(z)}\right)' dz.
		\end{aligned}
	\end{equation}
	Thus, 
	\begin{equation}
		\begin{aligned}
			& \nabla_{\p_1}\left\langle\p_2, \p_3\right\rangle_\eta \\
			=& -\res_{\infty} \left(\frac{\p_{1} \p_{2} \lambda(z) \cdot \p_{3} \lambda(z)}{\lambda'(z)} + \frac{\p_{2} \lambda(z) \cdot \p_{1} \p_{3} \lambda(z)}{\lambda'(z)}\right)dz\\
			& +\res_{\infty}\left(\frac{\p_{2 }\lambda(z)}{\lambda'(z)}\left(\frac{\p_{3} \lambda(z) \cdot \p_{1} \lambda(z)}{\lambda'(z)}\right)' + \frac{\p_3 \lambda(z)}{\lambda'(z)}\left(\frac{\p_2 \lambda(z) \cdot \p_1 \lambda(z)}{\lambda'(z)}\right)'\right) dz \\
			& -\sum_{s=1}^{m}\res_{a_{s,0}}\left( \frac{\p_1 \p_2 \lambda(z) \cdot \p_3 \lambda(z)}{\lambda'(z)} + \frac{\p_2 \lambda(z) \cdot \p_1 \p_3 \lambda(z)}{\lambda'(z)}\right)dz\\
			& +\sum_{s=1}^{m}\res_{a_{s,0}}\left(\frac{\p_2 \lambda(z)}{\lambda'(z)}\left(\frac{\p_3 \lambda(z) \cdot \p_1 \lambda(z)}{\lambda'(z)}\right)' + \frac{\p_3 \lambda(z)}{\lambda'(z)}\left(\frac{\p_2 \lambda(z) \cdot \p_1 \lambda(z)}{\lambda'(z)}\right)'\right) dz \\
			=& \left\langle\nabla_{\p_1} \p_2, \p_3\right\rangle_\eta + \left\langle\p_2, \nabla_{\p_1} \p_3\right\rangle_\eta.
		\end{aligned}
	\end{equation}
		The lemma is proved.
\end{proof}

\subsection{cotangent space}\label{seccot}
For any point \( \lambda(z) \) in \(M^{cKP} \), a tangent vector \( \p \in T_{\lambda(z)}M^{cKP} \) can be expressed as a rational function:
\[
\p(\lambda(z)) = b_{0,n_{0}-2}z^{n_{0}-2} + \cdots + b_{0,0} + \sum_{i=1}^{m} \sum_{j=1}^{n_{i}+1} b_{i,j}(z-a_{i,0})^{-j}.
\]

To describe a cotangent vector at \( \lambda(z) \), we consider a collection of disjoint disks \( D_1, \ldots, D_m \) on the complex plane \( \mathbb{C} \), such that \( a_{i,0} \in D_i\) for \(i = 1, \ldots, m \). Denote \( \gamma_i = \partial D_i \), \( \textbf{D} = \cup_{s=1}^{m} D_s \), and \( \textbf{D}^c = \mathbb{P}^1 \setminus \textbf{D} \). Let \( \mathcal{H} \) be the space of germs of holomorphic functions on the curves \( \cup_{s=1}^{m} \gamma_s \). For any \( f(z) \in \mathcal{H} \), define
\[
f(z)_+ := \frac{1}{2\pi \mathrm{i}} \sum_{s=1}^{m} \int_{\gamma_s} \frac{f(p)}{p-z} \, dp, \quad z \in \mathring{\mathbf{D}}, \quad f(z)_- := -\frac{1}{2\pi \mathrm{i}} \sum_{s=1}^{m} \int_{\gamma_s} \frac{f(p)}{p-z} \, dp, \quad z \in \mathbf{D}^c,
\]
where \( f(z)_+ \) and \( f(z)_- \) are holomorphic on \( \textbf{D} \) and \( \textbf{D}^c \), respectively, and can be analytically continued to some neighborhood of \( \cup_{s=1}^{m} \gamma_s \). Thus, \( f(z)_+, f(z)_- \in \mathcal{H} \), and \( f(z) = f(z)_+ + f(z)_- \). Conversely, if there exists a decomposition \( f(z) = f_1(z) + f_2(z) \), where \( f_1(z) \) and \( f_2(z) \) are holomorphic on \( \textbf{D} \) and \( \textbf{D}^c \), respectively, and \( f_2(\infty) = 0 \), then it follows that \( f_1(z) = f(z)_+ \) and \( f_2(z) = f(z)_- \).

Define the pairing:
\[
\langle \omega(z), \xi(z) \rangle := \frac{1}{2\pi\mathrm{i}} \sum_{s=1}^{m} \int_{\gamma_s} \omega(z) \xi(z) \, dz, \quad \omega(z) \in \mathcal{H}, \quad \xi(z) \in T_{\lambda(z)}M^{cKP}.
\]
This pairing induces a surjective map from \( \mathcal{H} \) to \( T^{\ast}_{\lambda(z)}M^{cKP} \), so that an element of \( \mathcal{H} \) can  be regarded as a cotangent vector at \( \lambda(z) \).

\begin{lem}\label{kdvdulmap}
	For any \( \xi(z) \in T_{\lambda(z)}M^{cKP} \), define  linear maps from \( \mathcal{H} \) to \( T_{\lambda(z)}M^{cKP} \) as follows:
	\begin{equation}\label{kdvdulmet}
		\eta^{\ast}(\omega(z)) = -(\omega(z))_{+}\lambda'(z) + (\omega(z)\lambda'(z))_{+}
	\end{equation}
	and
	\begin{equation}\label{kdvdulpro}
		C_{\xi(z)}(\omega(z)) = -(\omega(z)\xi(z))_{+}\lambda'(z) + (\omega(z)\lambda'(z))_{+}\xi(z).
	\end{equation}
	Then we have
	\[
	\langle \eta^{\ast}(\omega(z)), \xi(z) \rangle_{\eta} = \langle \omega(z), \xi(z) \rangle
	\]
	and
	\[
	c(\xi_{1}(z), \xi_{2}(z), \eta^{\ast}(\omega(z))) = \langle C_{\xi_{1}(z)}(\omega(z)), \xi_{2}(z) \rangle_{\eta}.
	\]
\end{lem}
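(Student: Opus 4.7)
My plan is to verify both identities by residue calculus on $\mathbb{P}^{1}$, exploiting the fact that $\eta^{\ast}(\omega)$ and $C_{\xi_{1}(z)}(\omega)$ are tangent vectors at $\lambda(z)$—hence rational functions whose only finite poles lie at the $a_{s,0}$—so that $\eta^{\ast}(\omega)\xi/\lambda'$ and $C_{\xi_{1}}(\omega)\xi_{2}/\lambda'$ are rational differentials on $\mathbb{P}^{1}$ whose total residue sum vanishes. As preparation I would derive the alternative representation
\[
\eta^{\ast}(\omega)=-(\omega)_{+}\lambda'+(\omega\lambda')_{+}=(\omega)_{-}\lambda'-(\omega\lambda')_{-},
\]
obtained by inserting $\omega=(\omega)_{+}+(\omega)_{-}$ and $\omega\lambda'=(\omega\lambda')_{+}+(\omega\lambda')_{-}$, so that the ``$+$'' form is available near each $a_{s,0}$ and the ``$-$'' form near $\infty$; the analogous identity $C_{\xi_{1}}(\omega)=(\omega\xi_{1})_{-}\lambda'-(\omega\lambda')_{-}\xi_{1}$ holds in $\textbf{D}^{c}$.

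For the first identity I would apply the residue theorem to trade the sum over zeros of $\lambda'$ for minus the sum of residues at the $a_{s,0}$ and at $\infty$. Using the ``$+$'' form near each $a_{s,0}$ and the ``$-$'' form near $\infty$, a pole-order check shows that $(\omega\lambda')_{+}\xi/\lambda'$ is regular at $a_{s,0}$ (both $\xi$ and $\lambda'$ have a pole of order $n_{s}+1$ there) and that $(\omega\lambda')_{-}\xi/\lambda'=O(z^{-2})$ at $\infty$; these terms drop out, leaving
\[
\langle\eta^{\ast}(\omega),\xi\rangle_{\eta}=\sum_{s=1}^{m}\res_{a_{s,0}}(\omega)_{+}\xi\,dz-\res_{\infty}(\omega)_{-}\xi\,dz.
\]
Cauchy's theorem applied to $(\omega)_{+}\xi$ on each $D_{s}$ and to $(\omega)_{-}\xi$ on $\textbf{D}^{c}$ identifies the right-hand side with $\frac{1}{2\pi\mathrm{i}}\sum_{s}\int_{\gamma_{s}}\omega\xi\,dz=\langle\omega,\xi\rangle$.

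For the second identity the crucial observation is the algebraic identity
\[
\xi_{1}\eta^{\ast}(\omega)-C_{\xi_{1}}(\omega)=\lambda'\cdot\bigl[(\omega\xi_{1})_{+}-(\omega)_{+}\xi_{1}\bigr],
\]
which follows by directly expanding both sides. Dividing by $\lambda'$, multiplying by $\xi_{2}$, and summing residues at zeros of $\lambda'$ gives
\[
c(\xi_{1},\xi_{2},\eta^{\ast}(\omega))-\langle C_{\xi_{1}}(\omega),\xi_{2}\rangle_{\eta}=\sum_{\lambda'(z_{0})=0}\res_{z_{0}}\xi_{2}\bigl[(\omega\xi_{1})_{+}-(\omega)_{+}\xi_{1}\bigr]dz.
\]
If the disks $D_{s}$ are chosen large enough that $\textbf{D}$ contains every finite zero of $\lambda'$, both $(\omega\xi_{1})_{+}$ and $(\omega)_{+}$ are holomorphic on a neighborhood of each such $z_{0}$, and $\xi_{1},\xi_{2}$ are regular at $z_{0}\neq a_{s,0}$; hence the integrand is holomorphic at each critical point, every residue vanishes, and the second identity follows.

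The main obstacle I expect is the bookkeeping: keeping clear which of the $\pm$-representations of $\eta^{\ast}(\omega)$ (or of $C_{\xi_{1}}(\omega)$) is valid in which region, and verifying that enlarging the disks to absorb the finite critical points of $\lambda$ does not alter the residue contributions at the $a_{s,0}$ or at $\infty$; once this is in place, both claims reduce to transparent residue computations.
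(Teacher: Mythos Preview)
Your treatment of the first identity is exactly the paper's: pass from the sum over zeros of $\lambda'$ to residues at $\infty$ and at the $a_{s,0}$, use the $-$ form of $\eta^{\ast}(\omega)$ near $\infty$ and the $+$ form near each $a_{s,0}$, drop the $(\omega\lambda')_{\pm}\xi/\lambda'$ contributions by the pole-order count you describe, and convert the remaining residues into the boundary integral $\langle\omega,\xi\rangle$.

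For the second identity your route is genuinely different from the paper's. The paper computes $\langle C_{\xi_{1}}(\omega),\xi_{2}\rangle_{\eta}$ and $c(\xi_{1},\xi_{2},\eta^{\ast}(\omega))$ separately, again transferring both to residues at $\infty$ and $a_{s,0}$, and shows by a short manipulation that the two expressions coincide. You instead subtract first, using the clean factorisation
\[
\xi_{1}\eta^{\ast}(\omega)-C_{\xi_{1}}(\omega)=\lambda'\bigl[(\omega\xi_{1})_{+}-(\omega)_{+}\xi_{1}\bigr],
\]
so that the $1/\lambda'$ cancels and the difference of the two sides becomes a sum of residues of a function which is holomorphic at every critical point. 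This is more transparent than the paper's parallel computations and explains structurally why the identity holds, whereas the paper's approach has the virtue of keeping all the residue machinery at $\infty$ and $a_{s,0}$, avoiding any discussion of where the critical points sit.

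One point to tighten: the step ``choose the disks large enough that $\mathbf{D}$ contains every finite zero of $\lambda'$'' is not always achievable with \emph{disjoint} regions around the $a_{s,0}$, and in any case it is unnecessary. The right formulation is exactly the one you hint at in your last paragraph: at a critical point $z_{0}\in\mathbf{D}$ use the $+$ form of the identity, while at $z_{0}\in\mathbf{D}^{c}$ use the companion identity
\[
\xi_{1}\eta^{\ast}(\omega)-C_{\xi_{1}}(\omega)=\lambda'\bigl[\xi_{1}\,\omega_{-}-(\omega\xi_{1})_{-}\bigr],
\]
which is holomorphic in $\mathbf{D}^{c}$. Either way the integrand is regular at $z_{0}$ and the residue vanishes. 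With this adjustment your argument is complete.
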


\begin{proof}
From direct calculation, we derive
	\begin{align*}
		\langle \eta^{\ast}(\omega), \xi \rangle_{\eta} =& -\res_{\infty} \left( \frac{\omega_{-}\lambda' \xi}{\lambda'} - \frac{(\omega\lambda')_{-} \xi}{\lambda'} \right) dz - \sum_{s=1}^{m} \res_{a_{s,0}} \left( -\frac{\omega_{+}\lambda' \xi}{\lambda'} + \frac{(\omega\lambda')_{+} \xi}{\lambda'} \right) dz \\
		=& -\res_{\infty} \omega_{-} \xi \, dz + \sum_{s=1}^{m} \res_{a_{s,0}} \omega_{+} \xi \, dz \\
		=& \langle \omega, \xi \rangle.
	\end{align*}
	In a similar manner, we obtain
	\begin{align*}
		\langle C_{\xi_{1}(z)}(\omega), \xi_{2} \rangle_{\eta} =& -\res_{\infty} \left( \frac{(\omega\xi_{1})_{-}\lambda' \xi_{2}}{\lambda'} - \frac{(\omega\lambda')_{-} \xi_{1} \xi_{2}}{\lambda'} \right) dz \\
		& - \sum_{s=1}^{m} \res_{a_{s,0}} \left( -\frac{(\omega\xi_{1})_{+}\lambda' \xi_{2}}{\lambda'} + \frac{(\omega\lambda')_{+} \xi_{1} \xi_{2}}{\lambda'} \right) dz \\
		=& -\res_{\infty} \omega \xi_{1} \xi_{2} \, dz + \res_{\infty} \frac{(\omega\lambda')_{-} \xi_{1} \xi_{2}}{\lambda'} dz - \sum_{s=1}^{m} \res_{a_{s,0}} \frac{(\omega\lambda')_{+} \xi_{1} \xi_{2}}{\lambda'} dz \\
		=& -\res_{\infty} \left( \frac{\omega_{-}\lambda' \xi_{1} \xi_{2}}{\lambda'} - \frac{(\omega\lambda')_{-} \xi_{1} \xi_{2}}{\lambda'} \right) dz \\
		& - \sum_{s=1}^{m} \res_{a_{s,0}} \left( -\frac{\omega_{+}\lambda' \xi_{1} \xi_{2}}{\lambda'} + \frac{(\omega\lambda')_{+} \xi_{1} \xi_{2}}{\lambda'} \right) dz \\
		=& c(\xi_{1}, \xi_{2}, \eta^{\ast}(\omega)).
	\end{align*}
	The lemma is proved.
\end{proof}

\subsection{Hamiltonian structure}
Let \( LM^{cKP} \) denote the loop space of \( M^{cKP}\). According to the Dubrovin-Novikov theorem, the Hamiltonian structure \( \mathcal{P} \) on \( LM^{cKP} \) corresponding to the flat metric \( \langle\ ,\ \rangle_{\eta} \)  is given by
\[
\mathcal{P}(\omega) = \eta^{\ast} \cdot \nabla_{\p_{x}} \omega.
\]
\begin{lem}\label{kdvhamthm}
	The Hamiltonian operator $\mathcal{P}$ has the explicit form
		\begin{equation}\label{kdvham}
		\mathcal{P}(\omega(z))=\{\omega(z)_{+},\lambda(z)\}-\{\omega(z),\lambda(z)\}_{+},
	\end{equation}
	where
	$$
	\{f(z,x),g(z,x)\}=\frac{\p f(z,x)}{\p z}\frac{\p g(z,x)}{\p x}-\frac{\p f(z,x)}{\p x}\frac{\p g(z,x)}{\p z}.
	$$
\end{lem}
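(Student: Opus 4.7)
The plan is to compute $\mathcal{P}(\omega) = \eta^{*}(\nabla_{\p_{x}}\omega)$ by combining the explicit formulas from Lemmas \ref{kdvconexp} and \ref{kdvdulmap}. Since $\omega \in \mathcal{H}$ is regarded as independent of the point $\lambda$, the covariant derivative of $\omega$ in the $\p_{x}$ direction is characterized dually by the identity
$$
\langle\nabla_{\p_{x}}\omega,\xi\rangle = \p_{x}\langle\omega,\xi\rangle - \langle\omega,\nabla_{\p_{x}}\xi\rangle
$$
for every tangent vector $\xi$, where $\langle\cdot,\cdot\rangle$ is the residue pairing of Subsection \ref{seccot}. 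Substituting the Levi-Civita formula (\ref{kdvconn}) for $\nabla_{\p_{x}}\xi$, the $\p_{x}\xi$ terms cancel against $\p_{x}\langle\omega,\xi\rangle$, leaving a correction built from the projections $(\lambda_{x}\xi/\lambda')'_{\infty,\geq 0}$ and $(\lambda_{x}\xi/\lambda')'_{a_{s,0},\leq -1}$ paired against $\omega$.

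Next I would apply integration by parts along the contours $\cup_{s}\gamma_{s}$ to move the outer $\p_{z}$ from the projections onto $\omega$, and use the fact that $\omega_{+}$ is holomorphic on $\mathring{\mathbf{D}}$ while $\omega_{-}$ is holomorphic on $\mathbf{D}^{c}$ and vanishes at $\infty$ to reassemble the pieces. Since $\omega$ is $x$-independent we have $\{\omega,\lambda\}=\omega_{z}\lambda_{x}$, so the resulting representative of $\nabla_{\p_{x}}\omega$ in $\mathcal{H}$ (modulo the kernel of the pairing) is, up to the appropriate $(\ )_{\pm}$ projections, precisely a disguised form of $\{\omega,\lambda\}/\lambda'$.

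I would then insert this representative into the formula $\eta^{*}(\tilde\omega) = -\tilde\omega_{+}\lambda' + (\tilde\omega\lambda')_{+}$ from (\ref{kdvdulmet}). Using the decomposition $\omega = \omega_{+} + \omega_{-}$ together with the multiplicative identity $(fg)_{+} = (f_{+}g)_{+} + (f_{-}g)_{+}$, the first term contributes $\{\omega_{+},\lambda\}$ (after the $\lambda'$ is absorbed via $\omega_{+,z}\lambda_{x} = \{\omega_{+},\lambda\}$), while the second term collapses to $-\{\omega,\lambda\}_{+}$. This produces the claimed formula (\ref{kdvham}).

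The main obstacle is the combinatorial bookkeeping: one must simultaneously track the four projections $(\ )_{+}$, $(\ )_{-}$, $(\ )_{\infty,\geq 0}$, $(\ )_{a_{s,0},\leq -1}$, the $z$- versus $x$-derivatives inside the Poisson bracket, and the orientations of the contours $\gamma_{s}$, while ensuring that the portion of $\nabla_{\p_{x}}\omega$ lying in the kernel of the residue pairing is annihilated after application of $\eta^{*}$. As a consistency check I would finally verify that the right-hand side of (\ref{kdvham}) analytically continues to a rational function of $z$ of the precise form admitted by $T_{\lambda(z)}M^{cKP}$, which amounts to checking the pole orders at $\infty$ and at each $a_{s,0}$.
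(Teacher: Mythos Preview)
Your overall strategy is the same as the paper's: characterize $\eta^{*}\!\cdot\!\nabla_{\p_x}\omega$ via the compatibility identity
\[
\langle \eta^{*}\!\cdot\!\nabla_{\p}\omega,\xi\rangle_{\eta}+\langle\omega,\nabla_{\p}\xi\rangle=\p\langle\omega,\xi\rangle,
\]
insert the Levi--Civita formula (\ref{kdvconn}) for $\nabla_{\p}\xi$, integrate by parts on $\cup_s\gamma_s$, and read off the tangent vector. The paper does not first isolate a representative for $\nabla_{\p_x}\omega$ in $\mathcal H$ and then apply $\eta^{*}$; it works directly with the composite $\eta^{*}\!\cdot\!\nabla_{\p}\omega$ through its $\eta$-pairing with an arbitrary $\xi$, which avoids worrying about the kernel of the residue pairing.

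There is one genuine gap. You assume ``$\omega\in\mathcal H$ is regarded as independent of the point $\lambda$'' and hence $x$-independent, so that $\{\omega,\lambda\}=\omega_z\lambda_x$. This is false in the intended setting: $\mathcal P$ acts on differentials of Hamiltonian densities, and these depend on $\lambda(\,\cdot\,,x)$ and therefore on $x$. Under your assumption the term $\p_x\langle\omega,\xi\rangle$ produces only $\langle\omega,\p_x\xi\rangle$, which cancels against the $\p_x\xi$ piece of $\nabla_{\p_x}\xi$; but in general there is an additional $\langle\p_x\omega,\xi\rangle$ contribution. Tracking it, the paper obtains
\[
\eta^{*}\!\cdot\!\nabla_{\p}\omega=\omega'_{+}\,\p\lambda-\p\omega_{+}\,\lambda'-(\omega'\p\lambda-\p\omega\,\lambda')_{+},
\]
and only then does setting $\p=\p_x$ reproduce the full Poisson bracket $\{\omega_{+},\lambda\}-\{\omega,\lambda\}_{+}$, including the $-\omega_x\lambda'$ pieces. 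Your identification $\omega_{+,z}\lambda_x=\{\omega_{+},\lambda\}$ in the final paragraph is exactly where the missing term would surface. The fix is easy---simply keep the $\p_x\omega$ terms throughout---but as written your argument only proves the special case of $x$-constant covectors.
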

\begin{proof}
	We now deduce the explict form of the operator $\eta^{\ast}\cdot\nabla_{\p}\omega$ from the equality
	$$
	\langle\eta^{\ast}\cdot\nabla_{\p}\omega,\xi\rangle_{\eta}+\langle\omega,\nabla_{\p}\xi\rangle=\p \langle\omega,\xi\rangle.
	$$
	By direct computation, we have
	$$
	\p \langle\omega,\xi\rangle=\frac{1}{2\pi\mathrm{i}}\dsum_{s=1}^{m}\int_{\gamma_{s}}(\p\omega\xi+\omega\p\xi)dz,
	$$
	and
	\begin{align*}
		\langle\omega,\nabla_{\p}\xi\rangle=&\frac{1}{2\pi\mathrm{i}}\dsum_{s=1}^{m}\int_{\gamma_{s}}(\omega(\p\xi-(\frac{\xi\p\lambda}{\lambda'})'_{\infty,\ge 0}-\dsum_{s'=1}^{m}(\frac{\xi\p\lambda}{\lambda'})'_{a_{s',0},\le 0})dz\\
		=&\frac{1}{2\pi\mathrm{i}}\dsum_{s=1}^{m}\int_{\gamma_{s}}(\omega\p\xi-\omega_{-}(\frac{\xi\p\lambda}{\lambda'})'_{\infty,\ge 0}-\omega_{+}\dsum_{s'=1}^{m}(\frac{\xi\p\lambda}{\lambda'})'_{a_{s',0},\le 0})dz\\
		=&\frac{1}{2\pi\mathrm{i}}\dsum_{s=1}^{m}\int_{\gamma_{s}}\omega\p\xi dz-\res_{\infty}\frac{\omega'_{-}\xi\p \lambda}{\lambda'}dz+\dsum_{s'=1}^{m}\res_{a_{s',0}}\frac{\omega'_{+}\xi\p \lambda}{\lambda'}dz.
	\end{align*}
	Hence
	\begin{align*}
		\eta^{\ast}\cdot\nabla_{\p}\omega=&\omega'_{+}\p\lambda-\p \omega_{+}\lambda'-(\omega'\p\lambda-\p\omega\lambda')_{+}\\
		=&-\omega'_{-}\p\lambda+\p \omega_{-}\lambda'+(\omega'\p\lambda-\p\omega\lambda')_{-}.
	\end{align*}
	Setting $\p=\p_{x}$, we obtain the equality \eqref{kdvham}.
\end{proof}
\subsection{principal hierarchy for $M^{cKP}$}\label{kdvpriproof}
To prove Theorem \ref{maincKP}, we first need to establish the following lemma.

\begin{lem}
	Let \( Q_{p}(\lambda), p \in \mathbb{N} \), be analytic functions in $\lambda$  that satisfy \[\frac{\partial Q_{p}(\lambda)}{\partial \lambda}  = Q_{p-1}(\lambda). \] 
	Define
	\begin{equation}\label{kdvFdef}
		F_{i,p} = \frac{1}{2\pi i} \int_{\gamma_{i}} Q_{p+1}(\lambda(z)) dz,
	\end{equation}
	then we have
	\begin{equation}\label{Fcond}
		\eta^{\ast} \cdot \nabla_{\partial} dF_{i,p} = C_{\partial}(dF_{i,p-1}),
	\end{equation}
	where $\p$ is any vector field on $M^{cKP}$, and the operators \( \eta^{\ast} \cdot \nabla_{\partial} \) and \( C_{\partial} \) are given by equalities \eqref{kdvdulmet} and \eqref{kdvdulpro}, respectively. 
\end{lem}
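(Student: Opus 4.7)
The plan is to realise $dF_{i,p}$ as a concrete element of $\mathcal{H}$ and then apply the explicit formula for $\eta^{\ast}\cdot\nabla_{\p}$ obtained during the proof of Lemma \ref{kdvhamthm}. Differentiating the integral defining $F_{i,p}$ under the integral sign and using $Q_{p+1}'(\lambda)=Q_{p}(\lambda)$, one finds
\[
\p_{\xi}F_{i,p}=\frac{1}{2\pi\mathrm{i}}\int_{\gamma_{i}}Q_{p}(\lambda(z))\,\xi(z)\,dz
\]
for every tangent vector $\xi\in T_{\lambda(z)}M^{cKP}$. Comparing this with the pairing of Section \ref{seccot}, I would take as a representative of $dF_{i,p}$ in $\mathcal{H}$ the germ $\omega_{p}$ that equals $Q_{p}(\lambda(z))$ on $\gamma_{i}$ and vanishes on each $\gamma_{s}$ with $s\neq i$; likewise $\omega_{p-1}$ represents $dF_{i,p-1}$.

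The main tool will be the identity
\[
\eta^{\ast}\cdot\nabla_{\p}\omega=\omega'_{+}\p\lambda-\p\omega_{+}\lambda'-(\omega'\p\lambda-\p\omega\lambda')_{+}
\]
derived during the proof of Lemma \ref{kdvhamthm}. Applying it with $\omega=\omega_{p}$ and using the chain rule I would observe that on $\gamma_{i}$ one has $\omega_{p}'=Q_{p-1}(\lambda)\lambda'$ and $\p\omega_{p}=Q_{p-1}(\lambda)\p\lambda$, so $\omega_{p}'\p\lambda-\p\omega_{p}\lambda'\equiv 0$ there and trivially on every other $\gamma_{s}$; hence the last projection term vanishes identically and one is left with
\[
\eta^{\ast}\cdot\nabla_{\p}dF_{i,p}=\omega_{p,+}'\,\p\lambda-\p\omega_{p,+}\,\lambda'.
\]
On the other side, \eqref{kdvdulpro} gives $C_{\p}(dF_{i,p-1})=-(\omega_{p-1}\p\lambda)_{+}\lambda'+(\omega_{p-1}\lambda')_{+}\p\lambda$. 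Restricted to $\gamma_{i}$, the germs $\omega_{p-1}\p\lambda$ and $\omega_{p-1}\lambda'$ coincide with $\p\omega_{p}$ and $\omega_{p}'$ respectively, while on the remaining circles they vanish. I would then invoke the commutation of the manifold derivative $\p$ with $(\cdot)_{+}$ (differentiation under the Cauchy integral) and of $\p_{z}$ with $(\cdot)_{+}$ (an integration by parts on the closed curves $\gamma_{s}$) to obtain $(\omega_{p-1}\p\lambda)_{+}=\p\omega_{p,+}$ and $(\omega_{p-1}\lambda')_{+}=\omega_{p,+}'$, which turns $C_{\p}(dF_{i,p-1})$ into the same expression and thereby establishes \eqref{Fcond}.

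The main obstacle is really bookkeeping rather than any genuinely difficult computation: one must pin down a good representative of $dF_{i,p}$ in $\mathcal{H}$ and keep track of the commutation identities between $\p$, $\p_{z}$, and the projections $(\cdot)_{\pm}$. The algebraic heart of the argument is nothing but the pointwise identity $Q_{p}(\lambda(z))'\,\p\lambda-\p\bigl(Q_{p}(\lambda(z))\bigr)\,\lambda'\equiv 0$, which forces the correction term in the formula from Lemma \ref{kdvhamthm} to vanish and allows the two projection contributions to recombine into exactly $C_{\p}(dF_{i,p-1})$.
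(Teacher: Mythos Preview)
Your proposal is correct and follows essentially the same route as the paper: identify $dF_{i,p}$ with $Q_{p}(\lambda)\mathbf{1}_{\gamma_{i}}\in\mathcal{H}$, feed it into the formula for $\eta^{\ast}\cdot\nabla_{\p}$ from the proof of Lemma \ref{kdvhamthm}, observe that the correction term drops out because $Q_{p}(\lambda)'\,\p\lambda-\p(Q_{p}(\lambda))\,\lambda'\equiv 0$, and then match with $C_{\p}(dF_{i,p-1})$ using commutation of $\p$ and $\p_{z}$ with $(\cdot)_{+}$. The paper packages the same computation in the Poisson-bracket notation $\{f,g\}=f'\p g-g'\p f$, writing both sides as $\{(Q_{p}(\lambda)\mathbf{1}_{\gamma_{i}})_{+},\lambda\}$, but the content is identical.
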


	\begin{proof}
		The differential of $F_{i,p}$ at $\lambda(z)\in M^{cKP}$ is given by
		$$
		dF_{i,p}=\frac{\p Q_{p+1}(\lambda)}{\p \lambda}\textbf{1}_{\gamma_{i}}\in \mathcal{H},
		$$
			where the functions \(\{\mathbf{1}_{\gamma_i}\}_{i=1}^m\) belonging to the space \(\mathcal{H}\) are characterized  by
		\[
		\mathbf{1}_{\gamma_i}|_{\gamma_j} = \delta_{ij}, \quad  i, j = 1, \ldots, m.
		\]
		By the identity
		$$
		\{Q_{p}(\lambda(z)),\lambda(z)\}=0,
		$$
		where $\{f,g\}=f' \p g-g' \p f$, we obtain
		$$
		\eta^{\ast}\cdot\nabla_{\p}dF_{i,p}=\{(Q_{p}(\lambda(z))\textbf{1}_{\gamma_{i}})_{+},\lambda(z)\}.
		$$
		For the right-hand side of equality \eqref{Fcond}, we have
		\begin{align*}
			C_{\p}(dF_{i,p-1})=&-(\frac{\p Q_{p}(\lambda)}{\p\lambda}\p\lambda(z)\textbf{1}_{\gamma_{i}})_{+}\lambda'(z) + (\frac{\p Q_{p}(\lambda)}{\p\lambda}\lambda'(z)\textbf{1}_{\gamma_{i}})_{+}\p\lambda(z)\\
			=&\{(Q_{p}(\lambda(z))\textbf{1}_{\gamma_{i}})_{+},\lambda(z)\})\\
			=&	\eta^{\ast}\cdot\nabla_{\p}dF_{i,p}.
		\end{align*}
		Thus, the lemma is proved.
	\end{proof}

\begin{proof}[Proof of Theorem \ref{maincKP}]
	For the Hamiltonian density \( \theta_{t_{0,j},p} \), let \( M' \) be a subset of \( M^{cKP} \) such that \( w_{0} = \lambda^{\frac{1}{n_{0}}} \) can be analytically continued onto \( \cup_{s=1}^{m} \gamma_{s} \), that is, \( w_{0} \in \mathcal{H} \). Then on \( M' \), \( \theta_{t_{0,j},p} \) can be expressed as a smooth function of the form \eqref{kdvFdef}, and thus satisfies equation \eqref{princon1}. The uniqueness of analytic function implies that \( \theta_{t_{0,j},p} \) satisfies equation \eqref{princon1} on \( M^{cKP} \).
	
	By a similar method, it can be shown that the remaining Hamiltonian densities \( \theta_{t_{i,j},p} \) also satisfy equation \eqref{princon1}. In particular, for \( \theta_{t_{i,n_{i}},p} \), consider a subset \( M' \) of \( M^{cKP} \) such that \( w_{0} = \lambda^{\frac{1}{n_{0}}} \) can be analytically continued onto \( \cup_{s=1}^{m} \gamma_{s} \), with the winding number \( 1 \) along \( \gamma_{i} \) and \( 0 \) along \( \gamma_{j} \) for \( j \neq i \), and \( w_{i} = \lambda^{\frac{1}{n_{i}}} \) can be analytically continued onto \( \gamma_{i} \) with the winding number \( -1 \). Then on \( M' \), we have
	\begin{align*}
		\theta_{t_{i,n_{i}},p} =& \res_{\infty} \frac{c_{p}}{n_{0}} \frac{\lambda^{p}}{p!} dz + \frac{1}{2\pi\mathrm{i}} \int_{\gamma_{i}} \frac{\lambda^{p}}{p!} (\log w_{0} w_{i} - \frac{c_{p}}{n_{i}}) dz + \frac{1}{2\pi\mathrm{i}} \sum_{s \neq i} \int_{\gamma_{s}} \frac{\lambda^{p}}{p!} \log w_{0} dz\\
		&= \frac{1}{2\pi\mathrm{i}} \int_{\gamma_{i}} \frac{\lambda^{p}}{p!} (\log w_{0} w_{i} - \frac{c_{p}}{n_{i}} - \frac{c_{p}}{n_{0}}) dz + \frac{1}{2\pi\mathrm{i}} \sum_{s \neq i} \int_{\gamma_{s}} \frac{\lambda^{p}}{p!} (\log w_{0} - \frac{c_{p}}{n_{0}}) dz,
	\end{align*}
	which thus satisfies equation \eqref{princon1}.
	
	Next, we verify that \( \theta_{t_{i,j},p} \) satisfies equation \eqref{princon2}. Introduce the operator \( \mathcal{E} = E + \frac{1}{n_{0}}z\frac{\partial}{\partial z} \), then we have
	\[
	\operatorname{Lie}_{\mathcal{E}}\lambda(z) = \lambda(z)
	\]
	and
	\[
	\res_{a_{i,0}}\operatorname{Lie}_{\mathcal{E}} f(\lambda(z)) \, dz = \operatorname{Lie}_E \res_{a_{i,0}} f(\lambda(z)) \, dz - \frac{1}{n_{0}} \res_{a_{i,0}} f(\lambda(z)) \, dz,
	\]
	which implies that
	\[
	\operatorname{Lie}_E \theta_{u, p}(t) = \begin{cases}
		\left(p + 1 - \frac{j}{n_{0}} + \frac{1}{n_{0}}\right) \theta_{u, p}(t), & u = t_{0,j}; \\
		\left(p + 1 - \frac{j}{n_{i}} + \frac{1}{n_{0}}\right) \theta_{u, p}(t), & u = t_{i,j},\ j \neq n_{i}; \\
		\left(p + \frac{1}{n_{0}}\right) \theta_{u, p}(t) + \sum_{s=1}^{m} \frac{1}{n_{0}}\theta_{t_{s,0},p-1} + \frac{1}{n_{i}}\theta_{t_{i,0},p-1}, & u = t_{i,n_{i}}.
	\end{cases}
	\]
	Hence, equation \eqref{princon2} is satisfied.
	
	We now deduce the Hamiltonian vector fields corresponding to the densities $\{ \theta_{t_{i,j},p} \}$. For the density $\theta_{t_{0,j},p}$, 
	we have
	\begin{align*}
			\mathcal{P}(d\theta_{t_{0,j},p}) =&\{(c_{0,j;p-1}w_{0}^{pn_{0}-j})_{+}, \lambda\}\\
			=&\{(c_{0,j;p-1}w_{0}^{pn_{0}-j})_{\infty, \ge 0}, \lambda\}.
	\end{align*}
	For the case of $\theta_{t_{i,j},p}$ where $j\ne n_{i}$, we have
	\begin{align*}
		\mathcal{P}(d\theta_{t_{i,j},p}) =& -\{(c_{i,j;p-1}w_{i}^{pn_{i}-j} \mathbf{1}_{\gamma_{i}})_{-}, \lambda\} \\
		=& -\{(c_{i,j;p-1}w_{i}^{pn_{i}-j})_{a_{i,0}, \le -1}, \lambda\}.
	\end{align*}
	For the specific density $\theta_{t_{i,n_{i}},p}$, we have
	\begin{align*}
		\mathcal{P}(d\theta_{t_{i,n_{i}},p+1})=&\{(\frac{\lambda^{p}}{p!}(\log w_{0}w_{i}-\frac{c_{p}}{n_{i}}-\frac{c_{p}}{n_{0}})\textbf{1}_{\gamma_{i}})_{+},\lambda\}+\dsum_{s\ne i}\{(\frac{\lambda^{p}}{p!}(\log w_{0}-\frac{c_{p}}{n_{0}})\textbf{1}_{\gamma_{s}})_{+},\lambda\}\\
		=&\{(\frac{\lambda^{p}}{p!}(\log \frac{w_{0}}{z-a_{i,0}}+\log w_{i}(z-a_{i,0})-\frac{c_{p}}{n_{i}}-\frac{c_{p}}{n_{0}})\textbf{1}_{\gamma_{i}})_{+},\lambda\}\\
		&+\dsum_{s\ne i}\{(\frac{\lambda^{p}}{p!}(\log \frac{w_{0}}{z-a_{i,0}}+\log(z-a_{i,0})-\frac{c_{p}}{n_{0}})\textbf{1}_{\gamma_{s}})_{+},\lambda\}\\
		=&\{(\frac{\lambda^{p}}{p!}(\log\frac{w_{0}}{z-a_{i,0}}-\frac{c_{p}}{n_{0}}))_{\infty,\ge 0},\lambda\}-\{(\frac{\lambda^{p}}{p!}(\log w_{i}(z-a_{i,0})-\frac{c_{p}}{n_{i}}))_{a_{i,0},\le -1},\lambda\}\\
		&+\{\frac{\lambda^{p}}{p!}(\log w_{i}(z-a_{i,0})-\frac{c_{p}}{n_{i}})\textbf{1}_{\gamma_{i}},\lambda\}+\dsum_{s\ne i}\{(\log(z-a_{i,0})\textbf{1}_{\gamma_{s}})_{+},\lambda\}\\
		=&\{\frac{\lambda^{p}}{p!}((\log\frac{w_{0}}{z-a_{i,0}}-\frac{c_{p}}{n_{0}}))_{\infty,\ge 0},\lambda\}-\{(\frac{\lambda^{p}}{p!}(\log(z-a_{i,0})w_{i}-\frac{c_{p}}{n_{i}}))_{a_{i,0},\le -1},\lambda\}\\
		&-\sum_{s\ne i}\{(\frac{\lambda^{p}}{p!}\log(z-a_{i,0}))_{a_{s,0},\le -1},\lambda\}+\{\frac{\lambda^{p}}{p!}\log(z-a_{i,0}),\lambda\}.
	\end{align*}
    Thus, the theorem is proved.
	\end{proof}
	
	\subsection{principal hierarchy for $M^{D-cKP}$}
	Let $M^{D-cKP}$ be the submanifold of $M^{cKP}$ consisting of elements of the form \eqref{Dckpsup}. Let us first show that $M^{D-cKP}$ is a natural Frobenius submanifold of $M^{cKP}$, as defined by Strachan \cite{STRACHAN2001}.
	
	For $\lambda(z)\in M^{D-cKP}$, we have
	\begin{align*}
		&w_{0}(-z)=-w_{0}(z),\quad z\to \infty;\\
		&w_{1}(-z)=-w_{1}(z),\quad z\to 0;\\
		&w_{2i-2}(-z)=w_{2i-1}(z),\quad p\to b_{2i-1,0},\ i=2,3,\cdots,m',
	\end{align*}
	where $b_{2i-1,0}=-b_{2i-2,0}$. Hence, for $p\to\infty$, we have 
	\begin{align*}
		-z=&w_{0}(-z)-t_{0,1}w_{0}^{-1}(-z)-t_{0,2}w_{0}^{-2}(-z)-\cdots\\
		=&-w_{0}(z)+t_{0,1}w_{0}^{-1}(z)-t_{0,2}w_{0}^{-2}(z)-\cdots\\
		=&-w_{0}(z)+t_{0,1}w_{0}^{-1}(z)+t_{0,2}w_{0}^{-2}(z)+\cdots.
	\end{align*}
	For $p\to 0$, we have
	\begin{align*}
		-z=&t_{1,0}+t_{1,1}w_{1}^{-1}(-z)+t_{1,2}w_{1}^{-2}(-z)+\cdots\\
		=&t_{1,1}w_{1}^{-1}(-z)+t_{1,2}w_{1}^{-2}(-z)+\cdots\\
		=&-t_{1,1}w_{1}^{-1}(z)+t_{1,2}w_{1}^{-2}(z)-\cdots\\
		=&-t_{1,1}w_{1}^{-1}(z)-t_{1,2}w_{1}^{-2}(z)+\cdots.
	\end{align*}
	For $p\to b_{2i-1,0},\ i=2,\cdots,m',$ we have
	\begin{align*}
		-z=&t_{2i-1,0}+t_{2i-1,1}w_{2i-1}^{-1}(-z)+t_{2i-1,2}w_{2i-1}^{-2}(-z)+\cdots\\
		=&t_{2i-1,0}+t_{2i-1,1}w_{2i-2}^{-1}(z)+t_{2i-1,2}w_{2i-2}^{-2}(z)+\cdots\\
		=&-t_{2i-2,0}-t_{2i-2,1}w_{2i-2}^{-1}(z)-t_{2i-2,2}w_{2i-2}^{-2}(z)-\cdots.
	\end{align*}
	Thus, we obtain the following restrictions for the flat coordinates:
	\begin{align*}
		&t_{0,2}=t_{0,4}=\cdots=t_{0,2n_{0}'-2}=0;\\
		&t_{1,0}=t_{1,2}=\cdots=t_{1,2n_{1}'}=0;\\
		&t_{2i-2,j}=-t_{2i-1,j},\quad i=2,\cdots,m',\ j=0,\cdots,n_{i}'.
	\end{align*}
	Hence, $M^{D-cKP}$ is a flat submanifold of $M^{cKP}$ with flat coordinates
	$$
	\textbf{t}^{D-cKP}=\{t_{0,2j-1}\}_{j=1}^{n_{0}'}\cup\{t_{1,2j-1}\}_{j=1}^{n_{1}'}\cup\{t_{2,j}\}_{j=0}^{n_{2}'}\cup\{t_{4,j}\}_{j=0}^{n_{3}'}\cup\cdots\cup\{t_{2m'-2,j}\}_{j=0}^{n_{m'}'}.
	$$
	
	On the other hand, let $p_{1},\cdots,p_{r}$ be the simple critical points of $\lambda(z)\in M^{cKP}$, where $r=dim(M^{cKP})$.  . For $\lambda(z)\in M^{D-cKP}$, we can choose $$
	p_{2j}=-p_{2j-1},\quad j=1,2,\cdots,\frac{r}{2},
	$$
	which implies
	$$
	u_{2j}=u_{2j-1},\quad j=1,2,\cdots,\frac{r}{2}.
	$$
	Thus, $M^{D-cKP}$ is a caustic submanifold of $M^{cKP}$ (for the definition, see reference \cite{STRACHAN2004}). According to Corollary 3.7 in \cite{STRACHAN2004}, $M^{D-cKP}$ is a natural Frobenius submanifold of $M^{cKP}$.
	
	\begin{rem}
		Observe that the above conclusion also holds for the case \( n_1' = 0 \). In this case, we have
		\[
		r = \dim (M^{cKP}) = 2\dim (M^{D-cKP}) - 1.
		\]
		Choose simple critical points \( p_1, \ldots, p_r \) such that
		\[
		p_1 = 0, \quad p_{2j} = -p_{2j+1}, \quad j = 1, 2, \ldots, \frac{r-1}{2},
		\]
		then we have
		\[
		u_{2j} = u_{2j+1}, \quad j = 1, 2, \ldots, \frac{r-1}{2}.
		\]
	\end{rem}
	We now consider the principal hierarchy for $M^{D-cKP}$. Let \( \mathcal{H}^{odd} \) be the space consisting of elements in \( \mathcal{H} \) that satisfy the condition \( \omega(-z) = -\omega(z) \). For \( \omega(z) \in \mathcal{H}^{odd} \), we have
	\[
	\omega_{-}(-z) = \begin{cases}
		-\omega_{-}(z), & \text{if } z, -z \in D^{c}, \\
		\omega_{+}(z), & \text{if } -z \in D^{c}, z \in D,
	\end{cases}
	\]
	and
	\[
	\omega_{+}(-z) = \begin{cases}
		\omega_{+}(z), & \text{if } z, -z \in D, \\
		-\omega_{-}(z), & \text{if } -z \in D, z \in D^{c}.
	\end{cases}
	\]
	Thus, for \( \lambda(z) \in M^{D-cKP} \), we have \( \eta^{\ast}(\omega(z)) \in T_{\lambda(z)}M^{D-cKP} \), and
	\[
	\eta^{\ast} \cdot \nabla_{\p}\omega \in T_{\lambda(z)}M^{D-cKP}
	\]
	for any \( \p \in T_{\lambda(z)}M^{D-cKP} \).
We observe that the differentials of the Hamiltonian densities
	\begin{align*}
	&\theta_{t_{0,2j-1},p},\quad j=1,2,\cdots,n_{0}';\\
	&\theta_{t_{1,2j-1},p},\quad j=1,2,\cdots,n_{1}';
\end{align*}
and	
$$
\theta_{t_{2i-2,j},p}-\theta_{t_{2i-1,j},p},\quad i=2,\cdots,m',\ j=0,\cdots,n_{i}' 
$$
belong to \(\mathcal{H}^{odd}\), hence the corresponding Hamiltonian vector fields  form the principal hierarchy for $M^{D-cKP}$.

\subsection{principal hierarchy for $\hat{M}^{cKP}$}
Let \( \hat{M}^{cKP} \) be the almost duality of the Frobenius manifold \( M^{cKP} \). We now construct the principal hierarchy for \( \hat{M}^{cKP} \). 

Suppose $I\in \textbf{D}$, where \( I \) is the set of all zeros and poles of \( \lambda(z) \) in  \( \mathbb{C} \). Define the linear map \( g^{\ast}(\omega(z)) = C_{E}(\omega(z)) \) from \( \mathcal{H} \) to \( T_{\lambda(z)}M \) as:
\[
g^{\ast}(\omega(z)) = (\lambda(z)\omega(z))_{-}\lambda'(z) - \lambda(z)(\lambda'(z)\omega(z))_{-} + \frac{\lambda'(z)}{n_{0}}\frac{1}{2\pi\mathrm{i}} \sum_{s=1}^{m} \int_{\gamma_{s}} \lambda'(z)\omega(z) \, dz.
\]
We have
\[
\langle \omega(z), \xi(z) \rangle = (g^{\ast}(\omega(z)), \xi(z))_{g},
\]
where
\[
(\p_{1}, \p_{2})_{g} = \sum \res_{d\lambda=0} \frac{\p_{1}\log\lambda(z) \p_{2}\log\lambda(z)}{(\log\lambda(z))'} \, dz
\]
is the intersection form on $M^{cKP}$. By the following lemma, we derive the explicit form of the Hamiltonian structure $\hat{\mathcal{P}}$ for the flat metric $g$. 
\begin{lem}\label{cKPdulcon}
	Let \( \hat{\nabla} \) be the Levi-Civita connection associated with \( g \).  Then, 
	\[
	\hat{\nabla}_{\p_{1}}\p_{2} \cdot \lambda(z) = \p_{1}\p_{2}\lambda(z) - \frac{\p_{1}\lambda(z)\p_{2}\lambda(z)}{\lambda(z)} - \lambda(z) \sum_{q \in I} \left( \frac{\p_{1}\lambda(z)\p_{2}\lambda(z)}{\lambda(z)\lambda'(z)} \right)_{q,\le-1}'.
	\]
 Morever, we have
	\begin{equation}\label{cKPdulcome}
			g^{\ast} \cdot \hat{\nabla}_{\p}\omega(z) = \{\omega(z),\lambda(z)\}_{-}\lambda(z) - \{(\omega(z)\lambda(z))_{-},\lambda(z)\} - \frac{\lambda'(z)}{n_{0}}\frac{1}{2\pi\mathrm{i}} \sum_{s=1}^{m} \int_{\gamma_{s}} \{\omega(z),\lambda(z)\} \, dz,
	\end{equation}
	where \( \{f,g\} = f'\p g - g'\p f \).
\end{lem}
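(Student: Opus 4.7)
The plan is to mirror the approach of Lemmas \ref{kdvconexp} and \ref{kdvhamthm}, replacing the flat metric $\eta$ by the intersection form $g$ throughout. The extra ingredients I will have to track are (i) the new residues appearing at the zeros and poles of $\lambda(z)$, i.e.\ at the points of $I$, arising from the factor $\frac{1}{(\log\lambda)'}=\frac{\lambda}{\lambda'}$ in the residue representation of $g$, and (ii) the Euler-correction $\frac{\lambda'(z)}{n_{0}}\frac{1}{2\pi\mathrm{i}}\sum_{s}\int_{\gamma_{s}}(\cdots)\,dz$ built into the definition of $g^{\ast}=C_{E}$.

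For the formula for $\hat{\nabla}_{\p_{1}}\p_{2}\cdot\lambda(z)$, I will check the two Koszul conditions characterizing the Levi--Civita connection. Torsion-freeness is immediate because the correction terms $\frac{\p_{1}\lambda\,\p_{2}\lambda}{\lambda}$ and $\lambda\sum_{q\in I}\left(\frac{\p_{1}\lambda\,\p_{2}\lambda}{\lambda\lambda'}\right)'_{q,\le-1}$ are symmetric in $\p_{1},\p_{2}$, so the antisymmetric part of the candidate formula reduces to $\p_{1}\p_{2}\lambda-\p_{2}\p_{1}\lambda$, matching $[\p_{1},\p_{2}]\cdot\lambda$. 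For metric compatibility, $\p_{1}\,g(\p_{2},\p_{3})=g(\hat{\nabla}_{\p_{1}}\p_{2},\p_{3})+g(\p_{2},\hat{\nabla}_{\p_{1}}\p_{3})$, I will substitute the candidate into the right-hand side and apply integration by parts inside each residue of $g(\p,\p')=\sum_{q}\res_{q}\frac{\p\log\lambda\,\p'\log\lambda}{(\log\lambda)'}\,dz$, exactly as in Lemma \ref{kdvconexp}; the $\p_{1}\lambda'/\lambda'$ cross-term produced when differentiating the denominator is reassembled by integration by parts into the asserted connection correction. Two new families of residues now appear: residues at the zeros and poles of $\lambda$ (producing the $\lambda\sum_{q\in I}(\cdots)'_{q,\le-1}$ term) and residues absorbing $\p_{1}\lambda/\lambda$ contributions (producing the $\frac{\p_{1}\lambda\,\p_{2}\lambda}{\lambda}$ term).

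For \eqref{cKPdulcome}, the plan is to use the pairing identity $\p\langle\omega,\xi\rangle=\langle g^{\ast}\!\cdot\hat{\nabla}_{\p}\omega,\xi\rangle_{g}+\langle\omega,\hat{\nabla}_{\p}\xi\rangle$, which determines $g^{\ast}\!\cdot\hat{\nabla}_{\p}\omega$ uniquely since $g$ is nondegenerate on $\hat{M}^{cKP}$. Expanding the right-hand side with the connection formula obtained in the first part, decomposing $\omega=\omega_{+}+\omega_{-}$ via the Cauchy projectors of Subsection \ref{seccot}, and reducing the resulting contour integrals to residues at $\infty$, at each $a_{s,0}$, and at each point of $I$, I expect the terms to reorganize into the Poisson-bracket combinations $\{\omega,\lambda\}_{-}\lambda$ and $\{(\omega\lambda)_{-},\lambda\}$, plus the zero-mode $\frac{\lambda'(z)}{n_{0}}\frac{1}{2\pi\mathrm{i}}\sum_{s}\int_{\gamma_{s}}\{\omega,\lambda\}\,dz$, the last of which is forced by the $i_{E}$ ingredient in $g^{\ast}=C_{E}$ and survives the differentiation.

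The main obstacle will be the residue bookkeeping at the points of $I$: these are not critical points of $\lambda$ but are poles of $\lambda/\lambda'$, so I must verify that the residues they generate during the integration-by-parts step are precisely cancelled by the $\lambda\sum_{q\in I}(\cdots)'_{q,\le-1}$ correction in the connection, and that the same cancellation propagates through the Cauchy decomposition into \eqref{cKPdulcome}. Once this is under control, collecting the remaining terms into the bracket form is a direct manipulation parallel to the one carried out in Lemma \ref{kdvhamthm}.
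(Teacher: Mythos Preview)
Your proposal is correct and follows essentially the same approach as the paper: verify torsion-freeness by symmetry, check metric compatibility of the candidate $\hat{\nabla}$ via the residue representation of $g$ (now taken at $-\res_{\infty}-\sum_{q\in I}\res_{q}$) together with integration by parts, and then obtain \eqref{cKPdulcome} from the defining identity $\p\langle\omega,\xi\rangle=(g^{\ast}\cdot\hat{\nabla}_{\p}\omega,\xi)_{g}+\langle\omega,\hat{\nabla}_{\p}\xi\rangle$. The only cosmetic difference is that the paper plugs the asserted formula \eqref{cKPdulcome} into $(\,\cdot\,,\xi)_{g}$ and checks the identity, whereas you phrase it as a derivation; the computation is the same either way.
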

\begin{proof}
	The compatibility of $\hat{\nabla}$ with $g$ can be verified as follows:
	\begin{align*}
		&(\hat{\nabla}_{\p_{1}}\p_{2},\p_{3})_{g}+	(\p_{2},\hat{\nabla}_{\p_{1}}\p_{3})_{g}\\
		=&(-\res_{\infty}-\sum_{q\in I}\res_{q})(\frac{\p_{1}\p_{2}\lambda(z)\p_{3}\lambda(z)}{\lambda'(z)\lambda(z)}dz+\frac{\p_{2}\lambda(z)\p_{1}\p_{3}\lambda(z)}{\lambda'(z)\lambda(z)}dz)+K,
	\end{align*}
	where
	\begin{align*}
		K=&(-\res_{\infty}-\sum_{q\in I}\res_{q})(-\frac{\p_{1}\lambda(z)\p_{2}\lambda(z)\p_{3}\lambda(z)dz}{\lambda'(z)\lambda(z)^{2}}-(\frac{\p_{1}\lambda(z)\p_{2}\lambda(z)}{\lambda(z)\lambda'(z)})'\frac{\p_{3}\lambda(z)dz}{\lambda'(z)}+c.p.\{2,3\})\\
		=&(-\res_{\infty}-\sum_{q\in I}\res_{q})((\frac{\p_{1}\lambda(z)\p_{2}\lambda(z)\p_{3}\lambda(z)dz}{\lambda'(z)\lambda'(z)})(\frac{1}{\lambda(z)})'+\frac{\p_{1}\lambda(z)\p_{2}\lambda(z)dz}{\lambda(z)\lambda'(z)}(\frac{\p_{3}\lambda(z)}{\lambda'(z)})'+c.p.\{2,3\})\\
		=&(-\res_{\infty}-\sum_{q\in I}\res_{q})((\frac{\p_{1}\lambda(z)\p_{2}\lambda(z)dz}{\lambda'(z)})(\frac{\p_{3}\lambda(z)}{\lambda'(z)\lambda(z)})'+c.p.\{2,3\})\\
		=&(-\res_{\infty}-\sum_{q\in I}\res_{q})\lambda(z)\p_{1}\lambda(z)(\frac{\p_{2}\lambda(z)\p_{3}\lambda(z)}{(\lambda'(z)\lambda(z))^{2}})'dz\\
		=&(-\res_{\infty}-\sum_{q\in I}\res_{q})\p_{2}\lambda(z)\p_{3}\lambda(z)\p_{1}(\frac{1}{\lambda'(z)\lambda(z)})dz.
	\end{align*}
	This implies that
	$$
	(\hat{\nabla}_{\p_{1}}\p_{2},\p_{3})_{g}+	(\p_{2},\hat{\nabla}_{\p_{1}}\p_{3})_{g}=\hat{\nabla}_{\p_{1}}(\p_{2},\p_{3})_{g}.
	$$
	On the other hand, 
	\begin{align*}
		\langle \omega, \hat{\nabla}_{\p}\xi\rangle=&\sum_{s=1}^{m}\int_{\gamma_{s}}(\omega\p\xi -\frac{\omega\xi\p\lambda}{\lambda}+(\omega\lambda)' \sum_{q\in I}(\frac{\xi \p\lambda}{\lambda\lambda'})_{q,\le -1})dz\\
		&=\sum_{s=1}^{m}\int_{\gamma_{s}}(\omega\p\xi -\frac{(\omega\p\lambda)_{+}\xi}{\lambda}+(\omega\lambda)'_{+} \sum_{q\in I}(\frac{\xi \p\lambda}{\lambda\lambda'})_{q,\le -1})dz\\
		&=\sum_{s=1}^{m}\int_{\gamma_{s}}(\omega\p\xi -\frac{(\omega\p\lambda)_{+}\xi}{\lambda}+(\omega\lambda)'_{+} \frac{\xi \p\lambda}{\lambda\lambda'})dz,
	\end{align*}
	and
		\begin{align*}
		(g^{\ast}\cdot\hat{\nabla}_{\p}\omega,\xi)_g=&\sum_{q\in I}\res_{q}(\frac{(\omega'\p\lambda-\lambda'\p\omega)_{+}\xi}{\lambda'}-\frac{(\omega\lambda')_{+}\xi\p\lambda}{\lambda\lambda'}+\frac{(\lambda\p\omega+\omega\p\lambda)_{+}\xi}{\lambda})dz\\
		=&\sum_{s=1}^{m}\int_{\gamma_{s}}(-\frac{(\omega\lambda')_{+}\xi\p\lambda}{\lambda\lambda'}+\frac{(\lambda\p\omega+\omega\p\lambda)_{+}\xi}{\lambda})dz\\
		=&\sum_{s=1}^{m}\int_{\gamma_{s}}(-\frac{(\omega\lambda')_{+}\xi\p\lambda}{\lambda\lambda'}+\frac{(\omega\p\lambda)_{+}\xi}{\lambda}+\xi\p\omega)dz.
	\end{align*}
	Hence
			$$
	( g^{\ast}\cdot\hat{\nabla}_{\p}\omega,\xi)_g+\langle\omega,\hat{\nabla}_{\p}\xi\rangle=\p \langle\omega,\xi\rangle.
	$$
	The lemma is proved.
\end{proof}
 
\begin{proof}[Proof of Theorem \ref{maincKPdul}]
	Let \( F_{\gamma_{j},p} = \frac{1}{2\pi \mathrm{i}} \int_{\gamma_{j}} \tilde{Q}_{p}(\lambda) \, dz \), where \( Q_{p}(\lambda) = \frac{\partial \tilde{Q}_{p}(\lambda)}{\partial \lambda} \) satisfies the recurrence relation:
	\[
	\lambda \frac{\partial Q_{p+1}}{\partial \lambda} + Q_{p+1} = Q_{p}.
	\]
	Then we have \( dF_{p} = Q_{p} \mathbf{1}_{\gamma_{j}} \), and
	\begin{align*}
		g^{\ast} \cdot \hat{\nabla}_{\p} dF_{p+1} =& - ((\frac{\partial Q_{p+1}}{\partial \lambda} \lambda \lambda' \mathbf{1}_{\gamma_{j}} + Q_{p+1} \lambda' \mathbf{1}_{\gamma_{j}})_{-} \p \lambda) + ((\frac{\partial Q_{p+1}}{\partial \lambda} \lambda \p \lambda \mathbf{1}_{\gamma_{j}} + Q_{p+1} \p \lambda \mathbf{1}_{\gamma_{j}})_{-} \lambda') \\
		=& (Q_{p} \p \lambda \mathbf{1}_{\gamma_{j}})_{-} \lambda' - (Q_{p} \lambda' \mathbf{1}_{\gamma_{j}})_{-} \p \lambda \\
		=& C_{\p}(dF_{p}).
	\end{align*}
	
	Consider a subset \( \hat{M}' \) of \( \hat{M}^{cKP} \) such that for any \( \lambda(z) \in \hat{M}'  \), the winding number of \( \lambda(z) \) along \( \gamma_{j} \) is zero. Let \( \tilde{Q}(p) = \frac{(\log(\lambda(z)))^{p+1}}{(p+1)!} \), then 
		$$
	F_{\gamma_{j},p} = \frac{1}{2\pi\mathrm{i}} \int_{\gamma} \frac{(\log(\lambda(z)))^{p+1}}{(p+1)!} \, dz
	$$
	 satisfies equality \eqref{princon1} on \( \hat{M}'  \). Furthermore, by the uniqueness of analytic function, for any \( \tilde{\gamma} \in [\gamma_{j}] \), \( F_{\tilde{\gamma},p} \) satisfies equality \eqref{princon1} on $\hat{M}^{cKP}$, where \( [\gamma_{j}] \) denotes the homotopy equivalence class of \( \gamma_{j} \) in \( \mathbb{C}-I \).
	
	In particular, let $q_{1}, \cdots$ and $q_{k}, p_{1}, \cdots, p_{k}$ be the zeros and poles of \(\lambda(z)\) within the region surrounded by $\tilde{\gamma}$,
	respectively, then
	\begin{align*}
		F_{\tilde{\gamma},0} =& \frac{1}{2\pi\mathrm{i}} \int_{\tilde{\gamma}} \log(\lambda(z)) \, dz \\
		=& \frac{1}{2\pi\mathrm{i}} \int_{\tilde{\gamma}} \log(\lambda(z) \prod_{s=1}^{k} \frac{z-p_{s}}{z-q_{s}}) \, dz + \frac{1}{2\pi\mathrm{i}} \int_{\tilde{\gamma}} \log(\prod_{s=1}^{k} \frac{z-q_{s}}{z-p_{s}}) \, dz \\
		=& -\res_{\infty} \log(\prod_{s=1}^{k} \frac{z-q_{s}}{z-p_{s}}) \, dz \\
		=& \sum_{s=1}^{k} p_{s} - \sum_{s=1}^{k} q_{s}.
	\end{align*}
	This completes the proof of the theorem.
\end{proof}
\begin{cor}
	The Hamiltonian vector field associated with the Hamiltonian density \( F_{\gamma_{j},p}\) for the Hamiltonian structure $\hat{\mathcal{P}}$ takes the form:
	$$
	\frac{\p\lambda(z)}{\p \hat{T}^{j,p}}=\{(\frac{(\log(\lambda(z)))^{p}}{p!}\textbf{1}_{\gamma_{j}})_{+},\lambda(z)\},
	$$
		where
	$
	\{f,g\}=f'\p_{x}g-g'\p_{x}f.
	$
\end{cor}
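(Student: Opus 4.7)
The plan is to apply Lemma \ref{cKPdulcon} directly to the one-form $dF_{\gamma_{j},p}$. From the computation that already appears in the proof of Theorem \ref{maincKPdul}, we know
\[
dF_{\gamma_{j},p} = Q_{p}\mathbf{1}_{\gamma_{j}}, \qquad Q_{p} = \frac{(\log \lambda(z))^{p}}{p!\,\lambda(z)},
\]
so that $Q_{p}\lambda = (\log\lambda)^{p}/p!$. This is the only ingredient beyond Lemma \ref{cKPdulcon} itself.

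The decisive observation is that the Poisson bracket $\{Q(\lambda(z,x))\mathbf{1}_{\gamma_{j}},\lambda(z,x)\}$ vanishes identically for any function $Q$ of $\lambda$ alone: the bracket $\{Q(\lambda),\lambda\}$ is zero by antisymmetry together with the chain rule, while $\mathbf{1}_{\gamma_{j}}$ is locally constant in $z$ and independent of $x$ on each neighborhood of a contour $\gamma_{s}$, so it passes through the bracket as a scalar. Applying formula \eqref{cKPdulcome} to $\omega = dF_{\gamma_{j},p}$ therefore annihilates both the first summand $\{\omega,\lambda\}_{-}\lambda$ and the third summand proportional to $\sum_{s}\int_{\gamma_{s}}\{\omega,\lambda\}\,dz$, leaving only
\[
\hat{\mathcal{P}}(dF_{\gamma_{j},p}) = -\left\{\left(\frac{(\log\lambda(z))^{p}}{p!}\mathbf{1}_{\gamma_{j}}\right)_{-},\lambda(z)\right\}.
\]

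To finish I would split $(\,\cdot\,)_{-} = (\,\cdot\,) - (\,\cdot\,)_{+}$ inside the bracket. Since $(\log\lambda)^{p}\mathbf{1}_{\gamma_{j}}/p!$ is once more a function of $\lambda$ times the locally constant $\mathbf{1}_{\gamma_{j}}$, its bracket with $\lambda$ vanishes by the same reasoning, so only the $(\,\cdot\,)_{+}$ contribution survives, yielding the asserted identity. The argument carries no real obstacle: the whole proof reduces to the two transparent facts that $\{Q(\lambda),\lambda\}=0$ and that $\mathbf{1}_{\gamma_{j}}$ is locally constant. The only point that deserves care is the indexing convention; one must read $\partial/\partial\hat{T}^{j,p}$ in the statement as the Hamiltonian flow generated by the density $F_{\gamma_{j},p}$ itself rather than by $F_{\gamma_{j},p+1}$, in contrast with the shifted convention used in Theorem \ref{maincKP}.
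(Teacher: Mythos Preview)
Your proof is correct and is precisely the argument the paper has in mind: the corollary is stated without proof, but it is an immediate consequence of formula \eqref{cKPdulcome} once one observes that $\{Q(\lambda)\mathbf{1}_{\gamma_{j}},\lambda\}=0$, exactly as you do. Your remark on the indexing convention is also well taken; the analogous Toda corollary later in the paper makes the shift explicit by writing $\frac{\partial}{\partial\hat{T}^{\gamma_{i},p}}=\hat{\mathcal{P}}(dF_{\gamma_{i},p+1})$ and then stating the flow at level $p-1$, so the formula you derive, $\hat{\mathcal{P}}(dF_{\gamma_{j},p})=\{(\tfrac{(\log\lambda)^{p}}{p!}\mathbf{1}_{\gamma_{j}})_{+},\lambda\}$, is indeed what is intended here.
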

\subsection{rank-1 extension}\label{extcKP}

To construct a rank-1 extension of the Frobenius manifold $M^{cKP}$, we  need the following lemma
\begin{lem}\label{kdvopenlem}
	For any vector fields $\p_{1},\p_{2}$ on $M^{cKP}$, we have
$$
\frac{\p_{1}\lambda(z)\p_{2}\lambda(z)-\p_{1}\lambda(z)\circ \p_{2}\lambda(z)}{\lambda'}=(\frac{\p_{1}\lambda(z)\p_{2}\lambda(z)}{\lambda'(z)})_{\infty,\ge 0}+\sum_{j=1}^{m}(\frac{\p_{1}\lambda(z)\p_{2}\lambda(z)}{\lambda'(z)})_{\varphi_{j},\le -1}.
$$
\end{lem}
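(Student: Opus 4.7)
My strategy is to interpret both sides of the asserted identity as rational functions in $z$ and compare them via partial fractions. Denote the left-hand side by
\[
G(z):=\frac{\partial_1\lambda\,\partial_2\lambda \;-\; (\partial_1\circ\partial_2)\lambda}{\lambda'(z)},
\]
and first locate its possible poles. Any tangent vector $\partial$ to $M^{cKP}$ satisfies: $\partial\lambda$ is a rational function with polynomial growth of degree at most $n_0-2$ at $\infty$ and with a pole of order at most $n_i+1$ at $a_{i,0}$, while $\lambda'(z)$ itself has a pole of order $n_0-1$ at $\infty$ and order $n_i+1$ at $a_{i,0}$. A direct order-count then shows that $(\partial_1\circ\partial_2)\lambda/\lambda'$ is regular at each $a_{i,0}$ and vanishes at $\infty$. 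Consequently the polynomial part at $\infty$ and the principal parts at each $a_{i,0}$ of $G(z)$ coincide with those of $\partial_1\lambda\,\partial_2\lambda/\lambda'$, which is precisely what appears on the right-hand side of the lemma.

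The remaining task is to rule out poles of $G(z)$ at the simple critical points $p_k$ of $\lambda$ (the simple zeros of $\lambda'$ in the finite plane). This reduces to establishing the pointwise identity
\[
(\partial_1\circ\partial_2)\lambda(p_k) \;=\; \partial_1\lambda(p_k)\,\partial_2\lambda(p_k)
\]
at every critical point. To prove it I would use the residue formulas
\[
c(\partial_1,\partial_2,\partial_3)=\sum_k \frac{\partial_1\lambda(p_k)\,\partial_2\lambda(p_k)\,\partial_3\lambda(p_k)}{\lambda''(p_k)},\qquad \eta(\partial,\partial')=\sum_k \frac{\partial\lambda(p_k)\,\partial'\lambda(p_k)}{\lambda''(p_k)},
\]
obtained by localizing the defining residue formulas for $c$ and $\eta$ at the simple critical points, together with the Frobenius relation $c(\partial_1,\partial_2,\partial_3)=\eta(\partial_1\circ\partial_2,\partial_3)$. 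Since $M^{cKP}$ is semisimple, the canonical idempotents $\varepsilon_j=\partial/\partial u_j$ (with $u_j=\lambda(p_j)$) satisfy $\varepsilon_j\lambda(p_k)=\delta_{jk}$, so varying $\partial_3$ across $\{\varepsilon_1,\ldots,\varepsilon_r\}$ lets me extract each value $(\partial_1\circ\partial_2)\lambda(p_k)$ independently. Expanding $\partial_1\lambda\,\partial_2\lambda$ and $(\partial_1\circ\partial_2)\lambda$ in Taylor series near $p_k$ and dividing by $\lambda'(z)\sim\lambda''(p_k)(z-p_k)$ then shows that the simple-pole contributions to $G(z)$ cancel exactly.

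Combining the two steps, $G(z)$ is a rational function whose only possible poles are at $\infty$ and at $a_{1,0},\ldots,a_{m,0}$, with polynomial part at $\infty$ and principal parts at the $a_{i,0}$ matching those on the right-hand side. Partial fractions then yield the lemma. The main obstacle is the pointwise identity at the critical points; this is the one step that uses the Frobenius algebra structure non-trivially, relying on the semisimplicity of $M^{cKP}$ through the canonical-coordinate identity $\varepsilon_j\lambda(p_k)=\delta_{jk}$ to invert the linear map $\partial_3\mapsto(\partial_3\lambda(p_k))_{k=1}^{r}$.
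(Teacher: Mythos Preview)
Your proposal is correct and follows essentially the same route as the paper: both arguments show that the numerator $\partial_1\lambda\,\partial_2\lambda-(\partial_1\circ\partial_2)\lambda$ vanishes at each simple critical point $p_k$ via the canonical-coordinate identity $\partial_{u_i}\lambda(p_k)=\delta_{ik}$, and then identify the rational function $G(z)$ with the sum of its principal parts at $\infty$ and the $a_{i,0}$. The only difference is cosmetic: the paper uses bilinearity to reduce at the outset to $\partial_1=\partial_{u_i}$, $\partial_2=\partial_{u_j}$ (so the pointwise identity is immediate from $\partial_{u_i}\circ\partial_{u_j}=\delta_{ij}\partial_{u_i}$), whereas you keep $\partial_1,\partial_2$ general and use the $\varepsilon_k$ as test vectors $\partial_3$ to extract the values $(\partial_1\circ\partial_2)\lambda(p_k)$.
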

\begin{proof}
Assume without loss of generality that $\p_{1}=\p_{u_{i}},\p_{2}=\p_{u_{j}}$, where $\{u_{i}\}_{i=1}^{n}$ are the canonical coordinates on $M^{cKP}$. Using the property
	$$
	\p_{u_{i}} \lambda(z)|_{p_{j}}=\delta_{ij},\quad i,j=1,\cdots,n,
	$$
	we obtain
	\begin{align*}
		&\frac{\p_{1}\lambda(z)\p_{2}\lambda(z)-\p_{1}\lambda(z)\circ \p_{2}\lambda(z)}{\lambda'(z)}\\
		=&
		(\frac{\p_{1}\lambda(z)\p_{2}\lambda(z)-\p_{1}\lambda(z)\circ \p_{2}\lambda(z)}{\lambda'(z)})_{\infty,\ge 0}+\sum_{j=1}^{m}	(\frac{\p_{1}\lambda(z)\p_{2}\lambda(z)-\p_{1}\lambda(z)\circ \p_{2}\lambda(z)}{\lambda'(z)})_{\varphi_{j},\le -1}\\
		=&(\frac{\p_{1}\lambda(z)\p_{2}\lambda(z)}{\lambda'(z)})_{\infty,\ge 0}+\sum_{j=1}^{m}(\frac{\p_{1}\lambda(z)\p_{2}\lambda(z)}{\lambda'(z)})_{\varphi_{j},\le -1}.
	\end{align*}
\end{proof}
	Define  $\Omega(\textbf{t},s)$ such that
	$$
	\p_{s}\p_{\alpha}\Omega(\textbf{t},s)=\p_{\alpha}\lambda(s),\quad \p^{2}_{s}\Omega(\textbf{t},s)=\lambda'(s),
	$$
	and 
	$$ \p_{\alpha}\p_{\beta}\Omega(\textbf{t},s)=(\frac{\p_{\alpha}\lambda(s)\p_{\beta}\lambda(s)}{\lambda'(s)})_{\infty,\ge 0}+\sum_{j=1}^{m}(\frac{\p_{\alpha}\lambda(s)\p_{\beta}\lambda(s)}{\lambda'(s)})_{\varphi_{j},\le -1}.
	$$
	From Lemmas \ref{kdvopenlem} and \ref{kdvconexp}, we obtain that $\omega(\textbf{t},s)=\p_{s}\Omega(\textbf{t},s)$ satisfies the condition of Lemma \ref{consopen}, thus defining a flat F-manifold structure on $M^{cKP}\times\mathbb{C}$, with the multiplication of the form:
	\begin{align*}
		&(\p_{\alpha},0)\star(\p_{\beta},0)=(\p_{\alpha}\circ\p_{\beta},\p_{\alpha}\p_{\beta}\Omega(\textbf{t},s)\cdot\p_{s}),\\
		&(\p_{\alpha},0)\star(0,\p_{s})=(0,\p_{\alpha}\lambda(s)\cdot\p_{s}),\\
		&(0,\p_{s})\star(0,\p_{s})=(0,\lambda'(s)\cdot\p_{s}).
	\end{align*}
	Here, we denote a vector field on $M^{cKP}\times\mathbb{C}$ by $X=(\bar{X},X(s)\p_{s})$ , where $\bar{X}$ and $X(s)\p_{s}$ are its components  along $M^{cKP}$ and $\mathbb{C}$, respectively. Let $\nabla$ denote the flat connection on the tangent bundle of  $M^{cKP}\times\mathbb{C}$, with flat coordinates $\textbf{t}\cup\{s\}$. We aim to determine the vector fields
$$
\Theta_{\bullet,p}=(\bar{\Theta}_{\bullet,p},\Theta_{\bullet,p}(s)\p_{s}),\quad \bullet\in\textbf{t}\cup\{s\},
$$
satisfying
\begin{equation}\label{openreu1}
\nabla_{X}\Theta_{\bullet,p+1}=X\star \Theta_{\bullet,p}
\end{equation}
and
\begin{equation}\label{openreu2}
	\Theta_{t_{i,j},p}=(\frac{\p}{\p t^{i,j}},0),\quad \Theta_{s,0}=(0,\p_{s}).
\end{equation}
The left-hand side of \eqref{openreu1} has the form:
$$
(\nabla_{\bar{X}}\bar{\Theta}_{\bullet,p+1}+X(s)\nabla_{\p_{s}}\bar{\Theta}_{\bullet,p+1},\p_{\bar{X}}\Theta_{\bullet,p+1}(s)\p_{s}+X(s)\p_{s}\Theta_{\bullet,p+1}(s)\cdot \p_{s}).
$$
For the right-hand side, we have
\begin{align*}
	&(\bar{X},0)\star(\bar{\Theta}_{\bullet,p},0)=(\bar{X}\circ\bar{\Theta}_{\bullet,p},( (\frac{\p_{\bar{X}}\lambda(s)\p_{\bar{\Theta}_{\bullet,p}}\lambda(s)}{\lambda'(s)})_{\infty,\ge 0}+\sum_{j=1}^{m}(\frac{\p_{\bar{X}}\lambda(s)\p_{\bar{\Theta}_{\bullet,p}}\lambda(s)}{\lambda'(s)})_{\varphi_{j},\le -1})\p_{s}),\\
	&(\bar{X},0)\star(0,\Theta_{\bullet,p}(s)\p_{s})=(0,(\p_{\bar{X}}\lambda(s))\Theta_{\bullet,p}(s)\p_{s}),\\
	&(0,X(s)\p_{s})\star(\bar{\Theta}_{\bullet,p},0)=(0,(\p_{\bar{\Theta}_{\bullet,p}}\lambda(s))X(s)\p_{s}),\\
	&(0,X(s)\p_{s})\star(0,\Theta_{\bullet,p}(s)\p_{s})=(0,\lambda'(s)\Theta_{\bullet,p}X(s)\p_{s}).
\end{align*}
\begin{thm}
	The vector fields on $M^{cKP}\times\mathbb{C}$ of the form
	\begin{align*}
		&\Theta_{s,p}=(0,\frac{\lambda(s)^{p}}{p!}\p_{s})\\
		&\Theta_{t_{0,n_{0}-j},p}	=(\eta^{\ast}(d\theta_{t_{0,j},p}),(d\theta_{t_{0,j},p}|_{z=s})_{+}\p_{s}),\\
		&\Theta_{t_{i,n_{i}-j},p}=(\eta^{\ast}(d\theta_{t_{i,j},p}),-(d\theta_{t_{i,j},p}|_{z=s})_{-}\p_{s}),\quad i=1,\cdots,m.
	\end{align*}
	satisfy equations \eqref{openreu1} and \eqref{openreu2}.
\end{thm}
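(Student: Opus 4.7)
The plan is to verify the two families of conditions separately: the initial conditions \eqref{openreu2} and the recursion \eqref{openreu1}. The case $\Theta_{s,0}=(0,\p_s)$ is immediate. For $\Theta_{t_{i,n_i-j},0}$, I will invoke the normalization \eqref{princon3}, which makes $\theta_{t_{i,j},0}$ linear in the flat coordinates and hence $d\theta_{t_{i,j},0}$ a constant element of $\mathcal{H}$; its $\eta^{\ast}$-image is the $\eta$-dual coordinate vector field $\p/\p t^{t_{i,n_i-j}}$, and its $\p_s$-component vanishes because the constant representative is annihilated by the relevant $(\ )_{\pm}$ projector (the constant part of the Cauchy-type decomposition lies entirely on the other side).

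For the recursion itself, I will write an arbitrary vector field as $X=(\bar X,X(s)\p_s)$ and use that $\bar\Theta_{\bullet,p+1}$ is independent of $s$, so that $\nabla_{\p_s}\bar\Theta_{\bullet,p+1}=0$. Expanding $X\star\Theta_{\bullet,p}$ by bilinearity of the $\star$-multiplication splits \eqref{openreu1} into a $TM^{cKP}$-equation
$$\nabla_{\bar X}\bar\Theta_{\bullet,p+1}=\bar X\circ\bar\Theta_{\bullet,p}$$
and a $\p_s$-equation
$$\p_{\bar X}\Theta_{\bullet,p+1}(s)+X(s)\p_s\Theta_{\bullet,p+1}(s)=\p_{\bar X}\p_{\bar\Theta_{\bullet,p}}\Omega+\p_{\bar X}\lambda(s)\Theta_{\bullet,p}(s)+X(s)\p_{\bar\Theta_{\bullet,p}}\lambda(s)+\lambda'(s)X(s)\Theta_{\bullet,p}(s).$$
For $\bullet=s$, the tangent equation is trivial and the $\p_s$-equation is exactly the chain-rule identity for $\lambda(s)^{p+1}/(p+1)!$. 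For $\bullet=t_{i,n_i-j}$, the tangent equation is the coordinate-free avatar of the recursion \eqref{princon1} established in the proof of Theorem \ref{maincKP}: combining $\frac{\p^2\theta_{t_{i,j},p+1}}{\p t^\beta\p t^\gamma}=c_{\beta\gamma}^\epsilon\frac{\p\theta_{t_{i,j},p}}{\p t^\epsilon}$ with the definition of $\eta^{\ast}$ gives precisely $\nabla_{\bar X}\eta^{\ast}(d\theta_{t_{i,j},p+1})=\bar X\circ\eta^{\ast}(d\theta_{t_{i,j},p})$.

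The main obstacle is the $\p_s$-equation for $\bullet=t_{i,n_i-j}$. My plan is to use the fact, already implicit in the proof of Theorem \ref{maincKP}, that $d\theta_{t_{i,j},p}(z)=Q_p(\lambda(z))$ for a sequence of $\lambda$-primitives satisfying $Q'_{p+1}(\lambda)=Q_p(\lambda)$. This gives two clean identities: first, since the Cauchy projection commutes with differentiation in the outer variable, $\p_s(d\theta_{t_{i,j},p+1}|_{z=s})_{\pm}=(d\theta_{t_{i,j},p}\,\lambda')_{\pm}|_{z=s}$, which together with the $X(s)\p_{\bar\Theta}\lambda(s)$-term (expanded via the explicit form of $\eta^{\ast}$ in Lemma \ref{kdvdulmap}) will cancel the $\lambda'(s)X(s)\Theta(s)$-term on the right-hand side; second, $\p_{\bar X}(d\theta_{t_{i,j},p+1}|_{z=s})_{\pm}=(d\theta_{t_{i,j},p}\cdot\p_{\bar X}\lambda)_{\pm}|_{z=s}$. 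The desired identity then reduces to a purely algebraic statement on $\mathcal{H}$ of the form
$$(d\theta_{t_{i,j},p}\,\p_{\bar X}\lambda)_{\pm}|_{z=s}-\p_{\bar X}\lambda(s)\,(d\theta_{t_{i,j},p})_{\pm}|_{z=s}=\p_{\bar X}\p_{\bar\Theta_{\bullet,p}}\Omega\bigl|_{s},$$
which I will verify by rewriting the left-hand side using $\p_{\bar X}\lambda\cdot\p_{\bar\Theta}\lambda/\lambda'$ and splitting it into the contributions at $\infty$ and at each pole $a_{j,0}$ prescribed by the definition of $\Omega$. The careful bookkeeping of the $(\ )_{\infty,\geq 0}$ and $(\ )_{a_{j,0},\leq -1}$ projections, and the sign difference between the $t_{0,\cdot}$ and $t_{i,\cdot}$ cases (which is absorbed in the choice of $(\ )_+$ versus $-(\ )_-$), is where I expect the bulk of the technical work to lie.
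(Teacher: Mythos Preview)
Your overall strategy—splitting \eqref{openreu1} into a $TM^{cKP}$-component and a $\p_s$-component, handling the first via the recursion \eqref{princon1}, and reducing the second to an algebraic identity involving the Cauchy projections—is exactly the paper's approach. Two points deserve attention.

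First, your argument for \eqref{openreu2} is not quite right. The differential $d\theta_{t_{i,j},0}$ is a constant \emph{cotangent vector}, but its representative in $\mathcal{H}$ is \emph{not} a constant function: it is $w_0^{-j}$ (for $i=0$) or $w_i^{-j}\mathbf{1}_{\gamma_i}$ (for $i\ge 1$), as one reads off from the $p=0$ case of the densities in Theorem~\ref{maincKP}. The vanishing of the $\p_s$-component then holds for the correct reason: $w_0^{-j}$ is holomorphic on $\mathbf{D}^c$ and vanishes at $\infty$, so $(w_0^{-j})_+=0$; similarly $w_i^{-j}\mathbf{1}_{\gamma_i}$ extends holomorphically to $\mathbf{D}$, so its $(\ )_-$-part vanishes. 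Your conclusion is right, but the mechanism is the analytic structure of the explicit representative, not that ``the constant lies on the other side''.

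Second, your plan to verify the reduced identity
\[
(d\theta\cdot\p_{\bar X}\lambda)_{\pm}-\p_{\bar X}\lambda\,(d\theta)_{\pm}=\p_{\bar X}\p_{\bar\Theta_{\bullet,p}}\Omega
\]
by directly matching Laurent projections $(\ )_{\infty,\ge 0}$ and $(\ )_{a_{k,0},\le -1}$ against Cauchy projections $(\ )_{\pm}$ is workable but laborious. The paper avoids this entirely by invoking Lemma~\ref{kdvopenlem}, which rewrites $\p_{\bar X}\p_{\bar\Theta}\Omega$ as $\bigl(\p_{\bar X}\lambda\cdot\eta^{\ast}(d\theta)-C_{\bar X}(d\theta)\bigr)/\lambda'$; substituting the explicit formulas \eqref{kdvdulmet}--\eqref{kdvdulpro} for $\eta^{\ast}$ and $C_{\bar X}$ then gives the identity in one line. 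Since you already plan to use Lemma~\ref{kdvdulmap}, adding Lemma~\ref{kdvopenlem} makes the ``careful bookkeeping'' you anticipate unnecessary.
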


\begin{proof}
Consider the case of $\Theta_{t_{0,n_{0}-j},p}$
. The left-hand side of equation \eqref{openreu1} is:
	\begin{align*}
		&(\nabla_{\bar{X}}\eta^{\ast}(d\theta_{t_{0,j},p+1}),(\p_{\bar{X}}\lambda(s)\cdot d\theta_{t_{0,j},p}|_{z=s})_{+}\p_{s}+X(s)(\lambda'(s)\cdot d\theta_{t_{0,j},p}|_{z=s})_{+}\p_{s})\\
		=&(\bar{X}\circ \eta^{\ast}(d\theta_{t_{0,j},p}),(\p_{\bar{X}}\lambda(s)\cdot d\theta_{t_{0,j},p}|_{z=s})_{+}\p_{s}+X(s)(\lambda'(s)\cdot d\theta_{t_{0,j},p}|_{z=s})_{+}\p_{s}).
	\end{align*}
	The right-hand side of \eqref{openreu1} is:
	\begin{align*}
		(\bar{X}\circ \eta^{\ast}(d\theta_{t_{0,j},p}),\mathcal{I}),
	\end{align*}
where
	\begin{align*}
		\mathcal{I}=&\frac{\p_{\bar{X}}\lambda(s)\cdot \eta^{\ast}(d\theta_{t_{0,j},p})|_{z=s}-C_{\bar{X}}(d\theta_{t_{0,j},p})|_{z=s}}{\lambda'(s)}\p_{s}+\p_{\bar{X}}\lambda(s)\cdot(d\theta_{t_{0,j},p}|_{z=s})_{+}\p_{s}\\
		&+X(s)\eta^{\ast}(d\theta_{t_{0,j},p})|_{z=s}\p_{s}+X(s)\lambda'(s)\cdot(d\theta_{t_{0,j},p}|_{z=s})_{+}\p_{s}\\
		=&(\p_{\bar{X}}\lambda(s)\cdot d\theta_{t_{0,j},p}|_{z=s})_{+}\p_{s}-\p_{\bar{X}}\lambda(s)\cdot (d\theta_{t_{0,j},p}|_{z=s})_{+}\p_{s}+\p_{\bar{X}}\lambda(s)\cdot(d\theta_{t_{0,j},p}|_{z=s})_{+}\p_{s}\\
		&+X(s)(\lambda'(s)\cdot d\theta_{t_{0,j},p}|_{z=s})_{+}\p_{s}\\
		=&(\p_{\bar{X}}\lambda(s)\cdot d\theta_{t_{0,j},p}|_{z=s})_{+}\p_{s}+X(s)(\lambda'(s)\cdot d\theta_{t_{0,j},p}|_{z=s})_{+}\p_{s}.
	\end{align*}
	The remaining cases follow similarly.
\end{proof}
\begin{proof}[Proof of Theorem \ref{cKPopenpri}]
The principal hierarchy for the flat F-manifold $M^{cKP}\times \mathbb{C}$ is defined as:
$$
\frac{\p}{\p\tilde{T}^{\bullet,p}}=\Theta_{\bullet,p}\star\tilde{\p}_{x},
$$
where
$$
\tilde{\p}_{x}=(\p_{x},\frac{\p s}{\p x}\p_{s}).
$$
Consider the case of $\frac{\p}{\p\tilde{T}^{t_{0,n_{0}-j},p}}$. Through direct computation, we obtain:
$$
\frac{\p}{\p\tilde{T}^{t_{0,n_{0}-j},p}}=(\frac{\p}{\p T^{t_{0,j},p}},\mathcal{J})
$$
where
\begin{align*}
	\mathcal{J}=&\frac{\p_{x}\lambda(s)\cdot\eta^{\ast}(d\theta_{t_{0,j},p})|_{z=s}-C_{\p_{x}}(d\theta_{t_{0,j},p})|_{z=s}}{\lambda'(s)}\p_{s}+\p_{x}\lambda(s)\cdot(d\theta_{t_{0,j},p}|_{z=s})_{+}\p_{s}\\
	&+\frac{\p s}{\p x}\cdot\eta^{\ast}(d\theta_{t_{0,j},p})|_{z=s}\p_{s}+\lambda'(s)\cdot \frac{\p s}{\p x}\cdot (d\theta_{t_{0,j},p}|_{z=s})_{+}\p_{s}\\
	=&\frac{d_{x}\lambda(s)\cdot\eta^{\ast}(d\theta_{t_{0,j},p})|_{z=s}-C_{\p_{x}}(d\theta_{t_{0,j},p})|_{z=s}}{\lambda'(s)}\p_{s}+d_{x}\lambda(s)\cdot (d\theta_{t_{0,j},p}|_{z=s})_{+}\p_{s}\\
	=&\frac{d_{x}\lambda(s)\cdot(d\theta_{t_{0,j},p}|_{z=s}\cdot \lambda'(s))_{+}}{\lambda'(s)}\p_{s}+(d\theta_{t_{0,j},p}|_{z=s}\cdot \p_{x}\lambda(s))_{+}\p_{s}-\frac{\p_{x}\lambda(s)\cdot(d\theta_{t_{0,j},p}|_{z=s}\cdot \lambda'(s))_{+}}{\lambda'(s)}\p_{s}\\
	=&(d\theta_{t_{0,j},p}|_{z=s}\cdot d_{x}\lambda(s))_{+}\p_{s}.
\end{align*}
The remaining cases follow similarly. 
\end{proof}

For $m=0$, this hierarchy is the dispersionless limit of the open Gelfand-Dickey hierarchy:
\begin{align*}
	\frac{\p L}{\p t_{p}}=[(L^{\frac{p}{n_{0}}}),L],\quad \frac{\p s}{\p t_{p}}=\p_{x}\cdot \rho ((L^{\frac{p}{n_{0}}})_{+})(1),\quad p=1,2,\cdots
\end{align*}
where $L=\p_{x}^{n_{0}}+\sum_{j=0}^{n_{0}-2}a_{j}(x)\p_{x}^{j}$, and 
$$
\rho: \sum_{j}a_{j}(x)\p^{j}_{x}\to \sum_{j}a_{j}(x)(\p_{x}+s(x))^{j}
$$
is an automorphism of the pseudo-differential operator algebra $\mathcal{A}$ \cite{wu2016extension}. 

The commutativity of the open Gelfand-Dickey hierarchy can be shown as follows. Denote
$$
\res A dx=b_{-1}
$$
for $A=\sum_{j}b_{j}\p^{j}\in\mathcal{A}$, then we have
$$
\res \rho(A) dx=\res A dx.
$$
From the equality
\begin{align*}
	\rho ((L^{\frac{p}{n_{0}}})_{+})(1)=&\res \rho ((L^{\frac{p}{n_{0}}})_{+})\cdot\p_{x}^{-1}dx\\
	=&\res (L^{\frac{p}{2}})_{+}\cdot(\p_{x}-s(x))^{-1}dx
\end{align*}
and the zero curvature equation:
$$
\frac{\p (L^{\frac{p}{n_{0}}})_{+}}{\p t_{q}}-\frac{\p (L^{\frac{q}{n_{0}}})_{+}}{\p t_{p}}=[(L^{\frac{q}{n_{0}}})_{+},(L^{\frac{p}{n_{0}}})_{+}],
$$
we obtain
\begin{align*}
	\frac{\p^2 s}{\p t_{p}\p t_{q}}-\frac{\p^2 s}{\p t_{q}\p t_{p}}&=\p_{x}\cdot\rho([(L^{\frac{p}{n_{0}}})_{+},(L^{\frac{q}{n_{0}}})_{+}])(1)-\p_{x}\cdot (L^{\frac{p}{n_{0}}})_{+}\cdot (L^{\frac{q}{n_{0}}})_{+}(1)+\p_{x}\cdot (L^{\frac{q}{n_{0}}})_{+}\cdot (L^{\frac{p}{n_{0}}})_{+}(1)\\
	&=0.
\end{align*}

Let $s(x)=\frac{\p v(x)}{\p x}$, 
and by applying the identity:
$$
e^{-v(x)}\cdot\p_{x}\cdot e^{v(x)}=\p_{x}+s(x),
$$
we obtain
\begin{align*}
	\frac{\p s}{\p t_{p}}=\p_{x}\cdot e^{-v(x)}\cdot (L^{\frac{p}{n_{0}}}_{+})(e^{v(x)}).
\end{align*}
Thus,
\begin{align*}
	\frac{\p e^{v(x)}}{\p t_{p}}= (L^{\frac{p}{n_{0}}}_{+})(e^{v(x)}).
\end{align*}
This version of the open Gelfand-Dickey hierarchy appeared in \cite{buryak2024open}.

\section{Frobenius manifold with trigonometric superpotential}
\subsection{Definition of $M^{Toda}$}
Given positive integers \( n_0, \ldots, n_m \), let \(  M^{Toda} \) be the space of functions
\[
\lambda(\varphi) = \frac{1}{n_0}e^{n_0\varphi} + a_{0,n_0-1}e^{(n_0-1)\varphi} + \cdots + a_{0,0} + \sum_{i=1}^{m}\sum_{j=1}^{n_i}a_{i,j}(e^{\varphi}-a_{i,0})^{-j},
\]
where \( a_{1,0} = 0 \). For any \( \p', \p'', \p''' \in T_{\lambda(z)}M^{Toda} \), the flat metric on \(  M^{Toda} \) is defined as: 
\[
\langle \p', \p'' \rangle_{\eta} := \eta(\p', \p'') = \sum_{|\lambda|<\infty} \res_{d\lambda=0} \frac{\p'(\lambda(z)dz) \p''(\lambda(z)dz)}{z^{2}d\lambda(z)},
\]
and the \( (0,3) \)-type tensor is given by:
\[
c(\p', \p'', \p''') := \sum_{|\lambda|<\infty} \res_{d\lambda=0} \frac{\p'(\lambda(z)dz) \p''(\lambda(z)dz) \p'''(\lambda(z)dz)}{z^{2}d\lambda(z)dz},
\]
where \( z = e^{\varphi} \).
The equality
\[
c(\p', \p'', \p''') = \eta(\p' \circ \p'', \p''')
\]
defines the multiplication structure on  \( T_{\lambda(z)}M^{Toda} \). Introduce the vector fields \( e \) and \( E \) on \( M^{Toda} \), such that
\[
Lie_{e}\lambda(z) = 1, \quad Lie_{E}\lambda(z) = \lambda(z) - \frac{z}{n_{0}}\lambda'(z).
\]
The data set \( (M^{Toda}, \eta, \circ, e, E) \) forms a semisimple Frobenius manifold with charge \( d = 1 \). 

The flat coordinate system for the metric \( \eta \), denoted as
\[
\mathbf{t} = \{t_{0,j}\}_{j=1}^{n_{0}-1} \cup \{t_{1,j}\}_{j=0}^{n_{1}} \cup \cdots \cup \{t_{m,j}\}_{j=0}^{n_{m}},
\]
is determined by the following expansion for $\varphi$:
\begin{equation}\label{todaflatc}
	\varphi=\left\{
	\begin{aligned}
		&t_{i,0}+t_{i,1}w_{i}^{-1}+\cdots,\quad &e^{\varphi}&\to a_{i,0},\ i=2,\cdots,m,\\
		&-\log(w_{1})+t_{1,0}+t_{1,1}w_{1}^{-1}+\cdots,\quad &e^{\varphi}&\to 0,\\
		&\log(w_{0})-t_{0,1}w_{0}^{-1}-t_{0,2}w_{0}^{-2}-\cdots,\quad &e^{\varphi}&\to \infty,
	\end{aligned}
	\right.
\end{equation}
where
	\begin{equation}\label{todasq}
	w_{i}=\left\{
	\begin{aligned}
		&(n_{i}\lambda)^{\frac{1}{n_{i}}}=w_{i,1}(z-a_{i,0})^{-1}+\cdots,\quad &z&\to a_{i,0},\ i=1,\cdots,m,\\
		&(n_{0}\lambda)^{\frac{1}{n_{0}}}=z+w_{0,0}+w_{0,1}z^{-1}+\cdots,\quad &z&\to \infty,\ i=0.\\
	\end{aligned}
	\right.
\end{equation}
Furthermore, we have
\begin{equation*}
	z^{-1}\p_{t_{i,j}}\lambda(z)=\left\{
	\begin{aligned}
		&-(w_{i}(z)^{n_{i}-j-1}w_{i}'(z))_{a_{i,0},\le -1},\quad i=1,\cdots,m,\ j=0,\cdots,n_{i},\\
		&(w_{i}(z)^{n_{0}-j-1}w_{0}'(z))_{\infty,\ge 0},\quad i=0,\ j=1,\cdots,n_{0}-1.
	\end{aligned}
	\right.
\end{equation*}
In this flat coordinate system, the vector fields \(e\) and \( E \) can be expressed as
	$$e=\frac{\p}{\p t_{1,n_{1}}}$$
	and
	$$
E=\sum_{j=1}^{n_{0}-1}\frac{j}{n_{0}}t_{0,j}\frac{\p}{\p t_{0,j}}+(\frac{1}{n_{0}}+\frac{1}{n_{1}})\frac{\p}{\p t_{1,0}}+\sum_{i=1}^{m}\sum_{j=1}^{n_{1}}\frac{j}{n_{i}}t_{i,j}\frac{\p}{\p t_{i,j}}+\sum_{i=2}^{m}\frac{1}{n_{0}}\frac{\p}{\p t_{i,0}}.
$$
\begin{lem}\label{todaconexp}
	Let \( \nabla \) be the Levi-Civita connection associated with the metric \( \eta \). Then, for any vector fields \( \p_1 \) and \( \p_2 \) on \( M^{Toda} \), we have
\begin{equation}\label{todaconn}
	(\nabla_{\p_{1}}\p_{2})\cdot\lambda(z)=\p_{1}\p_{2}\lambda(z)-z(\frac{\p_{1}\lambda(z)\p_{2}\lambda(z)}{z\lambda'(z)})'_{\infty,\ge 0}-\sum_{s=1}^{m}z(\frac{\p_{1}\lambda(z)\p_{2}\lambda(z)}{z\lambda'(z)})'_{a_{s,0},\le -1}.
\end{equation}
\end{lem}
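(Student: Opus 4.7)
The plan is to follow the template of Lemma \ref{kdvconexp}: I verify that the connection $\nabla$ defined by \eqref{todaconn} is torsion-free and compatible with $\eta$, which by the uniqueness of the Levi-Civita connection forces it to coincide with $\nabla^\eta$. Torsion-freeness is immediate, since the projected terms $z(\frac{\p_1\lambda\p_2\lambda}{z\lambda'})'_{\infty,\ge 0}$ and $z(\frac{\p_1\lambda\p_2\lambda}{z\lambda'})'_{a_{s,0},\le -1}$ are symmetric in $\p_1, \p_2$ and therefore cancel in the difference $(\nabla_{\p_1}\p_2 - \nabla_{\p_2}\p_1) \cdot \lambda$, leaving exactly $(\p_1 \p_2 - \p_2 \p_1)\lambda$ as required.

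For metric compatibility, I first invoke the residue theorem to rewrite the metric as
$$\langle\p_2,\p_3\rangle_\eta = -\res_\infty \frac{\p_2\lambda \, \p_3\lambda}{z^2\lambda'}\,dz - \sum_{s=1}^m \res_{a_{s,0}} \frac{\p_2\lambda \, \p_3\lambda}{z^2\lambda'}\,dz,$$
and then apply $\p_1$ termwise (using that $\p_1$ commutes with $\p_z$). The resulting expression differs from $\langle\nabla_{\p_1}\p_2,\p_3\rangle_\eta + \langle\p_2,\nabla_{\p_1}\p_3\rangle_\eta$ only by the anomalous contribution $\res \frac{\p_2\lambda \p_3\lambda \p_1\lambda'}{z^2(\lambda')^2}\,dz$. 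The core step is to verify the integration-by-parts identity
$$\res \frac{\p_2\lambda \, \p_3\lambda \, \p_1\lambda'}{z^2(\lambda')^2}\,dz = \res \left[\frac{\p_2\lambda}{z\lambda'}\left(\frac{\p_3\lambda \, \p_1\lambda}{z\lambda'}\right)' + \frac{\p_3\lambda}{z\lambda'}\left(\frac{\p_2\lambda \, \p_1\lambda}{z\lambda'}\right)'\right]dz$$
at each residue point, which I plan to derive by direct expansion: the right-hand side equals the left-hand side plus the total $z$-derivative of $\frac{\p_1\lambda \, \p_2\lambda \, \p_3\lambda}{z^2(\lambda')^2}$, whose residue vanishes. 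Splitting the right-hand side via the projections $(\ )_{\infty,\ge 0}$ and $(\ )_{a_{s,0},\le -1}$ by a partial-fraction argument at each pole then matches exactly the cross terms arising from the connection formula \eqref{todaconn}.

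The main technical obstacle is the bookkeeping of extra $z^{-1}$ and $z^{-2}$ factors that distinguish the trigonometric setting from the rational one. Specifically, differentiating $\frac{\p_1\lambda \, \p_2\lambda}{z\lambda'}$ produces an additional term $-\frac{\p_1\lambda \, \p_2\lambda}{z^2\lambda'}$ absent in Lemma \ref{kdvconexp}, while the metric supplies an extra $z^{-2}$. I expect these surplus contributions to assemble precisely into the total derivative $\p_z\bigl(\frac{\p_1\lambda \, \p_2\lambda \, \p_3\lambda}{z^2(\lambda')^2}\bigr)$, so that the argument remains structurally parallel to the rational case; verifying this cancellation term by term is the only nontrivial calculation. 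Once it is confirmed, matching residues against projections to complete the identification with $\langle\nabla_{\p_1}\p_2,\p_3\rangle_\eta + \langle\p_2,\nabla_{\p_1}\p_3\rangle_\eta$ proceeds verbatim as in the proof of Lemma \ref{kdvconexp}.
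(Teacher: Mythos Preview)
Your proposal is correct and follows exactly the approach the paper takes: its proof of Lemma \ref{todaconexp} is simply the sentence ``The proof follows the approach of Lemma \ref{kdvconexp}.'' You have in fact carried out more detail than the paper provides, and your key identity (that the right-hand side minus the left-hand side equals the total derivative $\p_z\bigl(\frac{\p_1\lambda\,\p_2\lambda\,\p_3\lambda}{z^2(\lambda')^2}\bigr)$) is precisely the trigonometric analogue of the integration-by-parts step in Lemma \ref{kdvconexp}.
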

\begin{proof}
	The proof follows the approach of Lemma \ref{kdvconexp}.
\end{proof}

\subsection{cotangent space and Hamiltionian structure}
For any point \( \lambda(z) \) in \( M^{Toda} \), a tangent vector \( \p \in T_{\lambda(z)}M^{Toda} \) can be represented as \( \xi(z) = \p \lambda(z) \), where
\[
\xi(z) = b_{0,n_{0}-1}z^{n_{0}-1} + \cdots + b_{0,0} + \sum_{j=1}^{n_{1}} b_{1,j}z^{-j} + \sum_{i=2}^{m} \sum_{j=1}^{n_{i}+1} b_{i,j}(z-a_{i,0})^{-j}.
\]

To describe a cotangent vector at \( \lambda(z) \), we follow the procedure in Section \ref{seccot}. Consider disjoint disks \( D_1, \ldots, D_m \) in the complex plane with $a_{i,0}\in D_{i}$, and let \( \gamma_i = \partial D_i \). Define the space \( \mathcal{H} \) of analytic function germs on \( \cup_{s=1}^{m} \gamma_s \), and introduce the following pairing:
\[
\langle \omega(z), \xi(z) \rangle := \frac{1}{2\pi\mathrm{i}} \sum_{s=1}^{m} \int_{\gamma_s} \omega(z) \xi(z) \frac{dz}{z}, \quad \omega(z) \in \mathcal{H}, \quad \xi(z) \in T_{\lambda(z)}M^{Toda}.
\]
This pairing induces a surjective map from \( \mathcal{H} \) to \( T^{\ast}_{\lambda(z)}M^{Toda} \), allowing elements of \( \mathcal{H} \) to be regarded as cotangent vectors at \( \lambda(z) \).

\begin{lem}
	For any \( \xi(z) \in T_{\lambda(z)}M^{Toda} \), define the linear maps from \( \mathcal{H} \) to \( T_{\lambda(z)}M^{Toda} \) as
	\begin{equation}\label{todadulmet}
		\eta^{\ast}(\omega(z)) = -(\omega(z))_{+}\lambda'(z)z + (\omega(z)\lambda'(z))_{+}z,
	\end{equation}
	and
	\begin{equation}\label{todadulpro}
		C_{\xi(z)}(\omega(z)) = -(\omega(z)\xi(z))_{+}\lambda'(z)z + (\omega(z)\lambda'(z))_{+}z\xi(z).
	\end{equation}
	We then obtain
	\[
	\langle \eta^{\ast}(\omega(z)), \xi(z) \rangle_{\eta} = \langle \omega(z), \xi(z) \rangle,
	\]
	and
	\[
	c(\xi_{1}(z), \xi_{2}(z), \eta^{\ast}(\omega(z))) = \langle C_{\xi_{1}}(\omega(z)), \xi_{2}(z) \rangle_{\eta}.
	\]
\end{lem}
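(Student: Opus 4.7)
The plan is to mirror the proof of Lemma 3.3 (the analogue for $M^{cKP}$), making the natural adjustments for the extra factor of $1/z$ in the Toda pairing and $1/z^{2}$ in the Toda metric kernel. Two observations drive the computation. First, the decomposition $\omega = \omega_{+} + \omega_{-}$ yields the equivalent form
\[
\eta^{\ast}(\omega(z)) = -(\omega)_{+}\,z\lambda'(z) + (\omega\lambda')_{+}\,z = (\omega)_{-}\,z\lambda'(z) - (\omega\lambda')_{-}\,z,
\]
so that near $\infty$ we may use the ``$-$'' representation and near each $a_{s,0}$ the ``$+$'' representation. Second, every critical point of $d\lambda$ lies in the finite plane, hence
\[
\sum_{|\lambda|<\infty}\res_{d\lambda=0} \;=\; -\res_{\infty} - \sum_{s=1}^{m}\res_{a_{s,0}},
\]
where $\res_{a_{1,0}} = \res_{z=0}$ absorbs the extra $1/z^{2}$ singularity of the metric kernel.

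For the first identity, I would substitute the appropriate form of $\eta^{\ast}(\omega)$ into
\[
\langle\eta^{\ast}(\omega),\xi\rangle_{\eta} = -\res_{\infty}\frac{\eta^{\ast}(\omega)\,\xi}{z^{2}\lambda'(z)}\,dz - \sum_{s=1}^{m}\res_{a_{s,0}}\frac{\eta^{\ast}(\omega)\,\xi}{z^{2}\lambda'(z)}\,dz
\]
and split each residue into a main part and a correction part. The main parts combine to $-\res_{\infty}\omega_{-}\xi\,\frac{dz}{z} + \sum_{s}\res_{a_{s,0}}\omega_{+}\xi\,\frac{dz}{z}$. The correction parts are residues of $(\omega\lambda')_{\mp}\,\xi/(z\lambda')$, and they vanish by a short order count: at $\infty$ one has $(\omega\lambda')_{-} = O(z^{-1})$ and $\xi/(z\lambda') = O(z^{-1})$, so the integrand is $O(z^{-2})$; at each $a_{s,0}$ (including $a_{1,0}=0$, where $\lambda$ has a pole of order $n_{1}$ and $\xi$ a pole of order at most $n_{1}$) the orders of $\xi$ and $\lambda'$ cancel so the integrand is holomorphic. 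Finally, applying the residue theorem inside each $D_{s}$ to $\omega_{+}\xi\,\frac{dz}{z}$ and outside $\mathbf{D}$ to $\omega_{-}\xi\,\frac{dz}{z}$ repackages the residue sum as $\frac{1}{2\pi\mathrm{i}}\sum_{s}\int_{\gamma_{s}}\omega\xi\,\frac{dz}{z} = \langle\omega,\xi\rangle$, as required.

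The second identity $c(\xi_{1},\xi_{2},\eta^{\ast}(\omega)) = \langle C_{\xi_{1}}(\omega),\xi_{2}\rangle_{\eta}$ is handled by an entirely parallel computation: expanding the left-hand side into residues of $\xi_{1}\xi_{2}\,\eta^{\ast}(\omega)/(z^{2}\lambda')$ at the critical points of $d\lambda$, converting to residues at $\infty$ and $\{a_{s,0}\}$ via the same identity, and substituting the appropriate $\pm$-form of $\eta^{\ast}(\omega)$ eliminates, by the same order-counting, all $(\omega\lambda')_{\pm}$ correction terms, so that both sides reduce to the single residue sum of $\omega\,\xi_{1}\xi_{2}\,dz/(z\lambda'(z))$ — which is also what $\langle C_{\xi_{1}}(\omega),\xi_{2}\rangle_{\eta}$ collapses to after the analogous manipulation.

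The main obstacle I anticipate is the pole bookkeeping at $a_{1,0}=0$, where the order-$n_{1}$ pole of $\lambda$, the $1/z^{2}$ from the metric, and the $1/z$ from the pairing all collide. Once it is verified that the tangent-vector expansion $\xi(z) = b_{0,n_{0}-1}z^{n_{0}-1} + \cdots + \sum_{j=1}^{n_{1}}b_{1,j}z^{-j} + \cdots$ has exactly the pole structure needed to make the correction terms regular at $0$ (and to cancel the extra $z$ factors in $\eta^{\ast}$ and $C_{\xi}$), the remainder of the argument is routine contour manipulation identical to the $M^{cKP}$ case.
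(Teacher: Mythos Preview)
Your treatment of the first identity is correct and is exactly the paper's (implicit) argument: at $\infty$ and at each $a_{s,0}$ the correction term $\frac{(\omega\lambda')_{\mp}\,\xi}{z\lambda'}$ is holomorphic by the pole count you give, so only the main terms $\omega_{\mp}\xi/z$ survive, and these repackage into the contour integral $\langle\omega,\xi\rangle$.

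The gap is in the second identity. Your order-counting argument does \emph{not} carry over, because the kernel now contains $\xi_{1}\xi_{2}$ rather than a single $\xi$. At $\infty$ one has $\xi_{1}\xi_{2}=O(z^{2n_{0}-2})$ while $z\lambda'=O(z^{n_{0}})$, so
\[
\frac{(\omega\lambda')_{-}\,\xi_{1}\xi_{2}}{z\lambda'}=O(z^{n_{0}-3}),
\]
which for $n_{0}\ge 2$ admits a nonzero $z^{-1}$ coefficient; likewise at each $a_{s,0}$ the product $\xi_{1}\xi_{2}$ has pole order up to $2(n_{s}+1)$ (or $2n_{1}$ at $z=0$), so $\frac{(\omega\lambda')_{+}\,\xi_{1}\xi_{2}}{z\lambda'}$ still has a genuine pole of order $\sim n_{s}+1$. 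Hence these correction residues need not vanish individually, and neither side reduces to a bare sum of $\omega\xi_{1}\xi_{2}/(z\lambda')$.

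What actually works --- and is what the paper's $M^{cKP}$ proof does --- is to keep the $(\omega\lambda')_{\pm}$ terms and notice that they occur \emph{identically} in the residue expansions of both $c(\xi_{1},\xi_{2},\eta^{\ast}(\omega))$ and $\langle C_{\xi_{1}}(\omega),\xi_{2}\rangle_{\eta}$. Concretely, using the $\pm$-forms one finds
\[
c(\xi_{1},\xi_{2},\eta^{\ast}(\omega))-\langle C_{\xi_{1}}(\omega),\xi_{2}\rangle_{\eta}
=\Bigl(-\res_{\infty}\omega_{-}\xi_{1}\xi_{2}+\sum_{s}\res_{a_{s,0}}\omega_{+}\xi_{1}\xi_{2}\Bigr)
-\Bigl(-\res_{\infty}(\omega\xi_{1})_{-}\xi_{2}+\sum_{s}\res_{a_{s,0}}(\omega\xi_{1})_{+}\xi_{2}\Bigr),
\]
all measured against $dz/z$; both brackets equal the same contour integral $\frac{1}{2\pi\mathrm{i}}\sum_{s}\int_{\gamma_{s}}\omega\xi_{1}\xi_{2}\,\frac{dz}{z}$, so the difference is zero. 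Replace your order-counting step for the second identity with this matching-of-correction-terms argument and the proof goes through.
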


\begin{proof}
	The proof is analogous to the argument presented in Lemma \ref{kdvdulmap}.
\end{proof}

Using the equality
\[
\mathcal{P}(\omega) = \eta^{\ast} \cdot \nabla_{\p_{x}} \omega,
\]
we can derive the explicit form of the Hamiltonian structure \( \mathcal{P} \) associated with the flat metric \( \langle\ ,\ \rangle_{\eta} \).

\begin{cor}
	The dispersionless Hamiltonian operator \( \mathcal{P} \) associated with the metric \( \langle\ ,\ \rangle_{\eta} \) has the form
	\begin{equation}\label{todaham}
		\mathcal{P}(\omega(z)) = \{\omega(z)_{+}, \lambda(z)\}z - \{\omega(z), \lambda(z)\}_{+}z,
	\end{equation}
	where
	\[
	\{f(z,x), g(z,x)\} = \frac{\partial f(z,x)}{\partial z} \frac{\partial g(z,x)}{\partial x} - \frac{\partial f(z,x)}{\partial x} \frac{\partial g(z,x)}{\partial z}.
	\]
\end{cor}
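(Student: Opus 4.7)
The plan is to reproduce the argument of Lemma \ref{kdvhamthm} in the trigonometric setting, taking as starting point the defining relation
\[
\langle \eta^{\ast}\cdot\nabla_{\p}\omega,\xi\rangle_{\eta}+\langle \omega,\nabla_{\p}\xi\rangle = \p\langle\omega,\xi\rangle,
\]
combined with the formula $\mathcal{P}=\eta^{\ast}\cdot\nabla_{\p_{x}}$ of Dubrovin--Novikov. The input data are the trigonometric Levi-Civita connection \eqref{todaconn} established in Lemma \ref{todaconexp} and the cotangent-to-tangent map $\eta^{\ast}$ given by \eqref{todadulmet}. I would read off the explicit form of $\eta^{\ast}\cdot\nabla_{\p}\omega$ by evaluating both sides of the identity above for an arbitrary $\xi\in T_{\lambda(z)}M^{Toda}$ and comparing.

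Concretely, I would first expand $\p\langle\omega,\xi\rangle$ directly from the pairing with measure $\frac{dz}{z}$. Then I would compute $\langle \omega,\nabla_{\p}\xi\rangle$ by substituting \eqref{todaconn}: the $\p_{1}\p_{2}\lambda$-piece pairs directly with $\omega$, while the two singular pieces of \eqref{todaconn} produce integrals of $\omega$ against $z\bigl(\xi\p\lambda/(z\lambda')\bigr)'$ at $\infty$ and at the $a_{s,0}$. The external $z$ in \eqref{todaconn} cancels the $\frac{1}{z}$ in the pairing, so the remaining integrals are formally identical to those appearing in the proof of Lemma \ref{kdvhamthm}. Splitting $\omega=\omega_{+}+\omega_{-}$ and using that $\omega_{+}$ is holomorphic on $\mathbf{D}$ while $\omega_{-}$ is holomorphic on $\mathbf{D}^{c}$ kills the cross terms (as in the rational case), and an integration by parts in each residue produces the terms $\omega'_{\pm}\xi\p\lambda/\lambda'$ at $\infty$ and at each $a_{s,0}$.

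Putting the pieces together and matching against $\langle \eta^{\ast}\cdot\nabla_{\p}\omega,\xi\rangle_{\eta}$ forces
\[
\eta^{\ast}\cdot\nabla_{\p}\omega \;=\; \omega'_{+}\p\lambda\cdot z-\p\omega_{+}\cdot\lambda' z-(\omega'\p\lambda-\p\omega\lambda')_{+}\cdot z,
\]
which, upon specialising $\p=\p_{x}$, is precisely the right-hand side of \eqref{todaham} rewritten in terms of the standard Poisson bracket $\{f,g\}=f'\p_{x}g-g'\p_{x}f$. This establishes the corollary.

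The only real obstacle is careful book-keeping of the extra factors of $z$: one must verify that the $z$ appearing in front of the derivatives in \eqref{todaconn}, the $z$ denominators in the intrinsic term $\xi\p\lambda/(z\lambda')$, and the $\frac{dz}{z}$ in the pairing all combine so that the internal contour-integration and splitting arguments reduce to those of Lemma \ref{kdvhamthm}. No new analytic ingredient is required beyond what was used in the rational case; the trigonometric measure $\frac{dz}{z}$ is simply accounted for by the overall factor of $z$ that multiplies the final Poisson brackets in \eqref{todaham}.
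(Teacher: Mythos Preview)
Your proposal is correct and follows essentially the same approach as the paper, which simply states that the proof proceeds as in Lemma \ref{kdvhamthm}. Your careful tracking of the extra factors of $z$ (from the measure $\frac{dz}{z}$, the connection \eqref{todaconn}, and the map $\eta^{\ast}$) is exactly the book-keeping needed to reduce the trigonometric computation to the rational one.
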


\begin{proof}
	The proof follows the same approach as in the proof of Lemma \ref{kdvhamthm}.
\end{proof}

\subsection{principal hierarchy for $M^{Toda}$}
To prove Theorem \ref{maincKP}, we follow a similar approach to that in Section \ref{kdvpriproof}, requiring the following lemma:
\begin{lem}
	For any $p \in \mathbb{N}$, let \( Q_{p}(\lambda)  \) be analytic functions in \( \lambda \) that satisfy
	\[
	\frac{\partial Q_{p}(\lambda)}{\partial \lambda} = Q_{p-1}(\lambda).
	\]
	Define
	\begin{equation}\label{todaFdef}
		F_{i,p} = \frac{1}{2\pi i} \int_{\gamma_{i}} Q_{p+1}(\lambda(z)) \, \frac{dz}{z},
	\end{equation}
	then
	\begin{equation}\label{Fcond2}
		\eta^{\ast} \cdot \nabla_{\partial} dF_{i,p} = C_{\partial}(dF_{i,p-1}),
	\end{equation}
	where the operators \( \eta^{\ast} \cdot \nabla_{\partial} \) and \( C_{\partial} \) are defined by \eqref{todadulmet} and \eqref{todadulpro}, respectively.
\end{lem}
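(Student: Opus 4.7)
The plan is to mirror the proof of the rational analogue given earlier in the paper, carefully tracking the extra factors of $z$ that come from the trigonometric pairing $\langle\omega,\xi\rangle=\frac{1}{2\pi\mathrm{i}}\sum_s\int_{\gamma_s}\omega\xi\,\frac{dz}{z}$ and from the Lax form of the connection in Lemma \ref{todaconexp}. First I would identify the differential of $F_{i,p}$ as a cotangent vector. Since
\[
\partial F_{i,p}=\frac{1}{2\pi\mathrm{i}}\int_{\gamma_i}Q_{p}(\lambda(z))\,\partial\lambda(z)\,\frac{dz}{z},
\]
the element $dF_{i,p}=Q_{p}(\lambda(z))\mathbf{1}_{\gamma_i}\in\mathcal{H}$ represents the differential with respect to the trigonometric pairing, where $\mathbf{1}_{\gamma_j}|_{\gamma_k}=\delta_{jk}$.

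Next I would derive an explicit formula for $\eta^{\ast}\cdot\nabla_{\partial}\omega$ in the trigonometric setting by repeating the computation of Lemma \ref{kdvhamthm} with the measure $\frac{dz}{z}$ and the connection formula \eqref{todaconn}. The result (for a general vector field $\partial$, not just $\partial_x$) should be
\[
\eta^{\ast}\cdot\nabla_{\partial}\omega=z(\omega_{+})'\partial\lambda-z\,\partial(\omega_{+})\lambda'-z\bigl(\omega'\partial\lambda-\partial\omega\,\lambda'\bigr)_{+},
\]
which is the analogue of the formula used in the rational case, multiplied throughout by $z$. The key observation is then the tautology $\{Q_{p}(\lambda),\lambda\}=Q_{p}'(\lambda)\lambda'\partial\lambda-Q_{p}'(\lambda)\partial\lambda\,\lambda'=0$, which causes the $(\omega'\partial\lambda-\partial\omega\,\lambda')_{+}$ term to vanish when $\omega=Q_{p}(\lambda)\mathbf{1}_{\gamma_i}$. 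Hence
\[
\eta^{\ast}\cdot\nabla_{\partial}dF_{i,p}=z\bigl[(Q_{p}(\lambda)\mathbf{1}_{\gamma_i})_{+}\bigr]'\partial\lambda-z\,\partial\bigl[(Q_{p}(\lambda)\mathbf{1}_{\gamma_i})_{+}\bigr]\lambda'.
\]

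To conclude, I would compute $C_{\partial}(dF_{i,p-1})$ directly from the trigonometric definition \eqref{todadulpro}. Since $dF_{i,p-1}=Q_{p-1}(\lambda)\mathbf{1}_{\gamma_i}$ and $Q_{p-1}(\lambda)\partial\lambda=\partial(Q_{p}(\lambda))$, $Q_{p-1}(\lambda)\lambda'=\partial_z(Q_{p}(\lambda))$, we obtain
\[
C_{\partial}(dF_{i,p-1})=-z\,\partial\bigl[(Q_{p}(\lambda)\mathbf{1}_{\gamma_i})_{+}\bigr]\lambda'+z\bigl[(Q_{p}(\lambda)\mathbf{1}_{\gamma_i})_{+}\bigr]'\partial\lambda,
\]
which matches the previous expression, proving \eqref{Fcond2}.

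The only nontrivial step is the derivation of the explicit form of $\eta^{\ast}\cdot\nabla_{\partial}\omega$, but this is a routine adaptation of Lemma \ref{kdvhamthm}: one writes out $\partial\langle\omega,\xi\rangle=\langle\eta^{\ast}\cdot\nabla_{\partial}\omega,\xi\rangle_{\eta}+\langle\omega,\nabla_{\partial}\xi\rangle$, uses \eqref{todaconn} together with integration by parts against $\frac{dz}{z}$, and rearranges. No new phenomena appear compared with the rational case; the extra $z$ factors simply propagate through consistently. Thus the proof is essentially a transcription of the earlier argument with the substitutions $dz\mapsto\frac{dz}{z}$ and $\{\cdot,\cdot\}\mapsto z\{\cdot,\cdot\}$.
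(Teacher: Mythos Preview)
Your proposal is correct and mirrors exactly the argument the paper intends: the paper omits the proof of this lemma, referring implicitly to the rational analogue in Section~\ref{kdvpriproof}, and your write-up is precisely that analogue with the expected $z$-factors inserted. The identification $dF_{i,p}=Q_p(\lambda)\mathbf{1}_{\gamma_i}$, the vanishing $\{Q_p(\lambda),\lambda\}=0$, and the direct computation of $C_{\partial}(dF_{i,p-1})$ via $Q_{p-1}(\lambda)\partial\lambda=\partial Q_p(\lambda)$ and $Q_{p-1}(\lambda)\lambda'=(Q_p(\lambda))'$ are all exactly as in the rational proof, so nothing further is needed.
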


\begin{proof}[Proof of Theorem \ref{mainToda}]
The proof follows a similar approach to that of Theorem \ref{maincKP}. As an instance, for the density \( \theta_{t_{i,n_{i}},p} \),
let \( M' \subset M^{Toda} \) be such that \( w_{0} = \lambda^{\frac{1}{n_{0}}} \) can be analytically continued onto \( \cup_{s=1}^{m} \gamma_{s} \), with the winding number \( 1 \) along \( \gamma_{i} \) and \( 0 \) along \( \gamma_{j} \) for \( j \neq i \). Similarly, \( w_{i} = \lambda^{\frac{1}{n_{i}}} \) can be analytically continued onto \( \gamma_{i} \) with the winding number \( -1 \). Then, on \( M' \), we have
\begin{align*}
	\theta_{t_{i,n_{i}},p} =& \res_{\infty} \frac{c_{p}}{n_{0}} \frac{\lambda^{p}}{p!} \frac{dz}{z} + \frac{1}{2\pi\mathrm{i}} \int_{\gamma_{i}} \frac{\lambda^{p}}{p!} (\log w_{0}w_{i} - \frac{c_{p}}{n_{i}}) \frac{dz}{z} + \frac{1}{2\pi\mathrm{i}} \sum_{s \neq i} \int_{\gamma_{s}} \frac{\lambda^{p}}{p!} \log w_{0} \frac{dz}{z},\\
	=& \frac{1}{2\pi\mathrm{i}} \int_{\gamma_{i}} \frac{\lambda^{p}}{p!} (\log w_{0}w_{i} - \frac{c_{p}}{n_{i}} - \frac{c_{p}}{n_{0}}) \frac{dz}{z} + \frac{1}{2\pi\mathrm{i}} \sum_{s \neq i} \int_{\gamma_{s}} \frac{\lambda^{p}}{p!} (\log w_{0} - \frac{c_{p}}{n_{0}}) \frac{dz}{z},
\end{align*}
which satisfies equality \eqref{princon1}.

 Introduce the operator \( \mathcal{E} = E + \frac{1}{n_{0}}z\frac{\partial}{\partial z} \), then
\[
\operatorname{Lie}_{\mathcal{E}}\lambda(z) = \lambda(z),
\]
and
\begin{equation}\label{eulerlem2}
	\res_{a_{i,0}} \operatorname{Lie}_{\mathcal{E}} f(\lambda(z)) \frac{dz}{z} = \operatorname{Lie}_E \res_{a_{i,0}} f(\lambda(z)) \frac{dz}{z}.
\end{equation}
Thus, we obtain
\[
\operatorname{Lie}_E \theta_{u, p}(t) = \begin{cases}
	\left(p + 1 - \frac{j}{n_{0}}\right) \theta_{u, p}(t), & u = t_{0,j}; \\
	\left(p + 1 - \frac{j}{n_{i}}\right) \theta_{u, p}(t), & u = t_{i,j},\ j \neq n_{i}; \\
	p \theta_{u, p}(t) + \sum_{s=1}^{m} \frac{1}{n_{0}} \theta_{t_{s,0},p-1} + \frac{1}{n_{i}} \theta_{t_{i,0},p-1}, & u = t_{i,n_{i}}.
\end{cases}
\]
Therefore, equality \eqref{princon2} holds.

Finally, using formula \eqref{todaham}, we deduce the Hamiltonian vector fields corresponding to the densities \( \theta_{\alpha,p} \).

This completes the proof of the theorem.
	\end{proof}
	\subsection{principal hierarchy for $M^{C-Toda}$}
Let \( M^{C-Toda} \) be the submanifold of \( M^{Toda} \) consisting of functions of the form \eqref{Ctodasup}. We will first show that \( M^{C-Toda} \) is a natural Frobenius submanifold of \( M^{Toda} \).

For any \( \lambda(z) \in M^{C-Toda} \), we have:

	\begin{align*}
		&w_{0}(\frac{1}{p})=w_{1}(p),\quad p\to 0;\\
		&w_{2}(\frac{1}{p})=-w_{2}(p),\quad p\to 1;\\
		&w_{3}(\frac{1}{p})=-w_{3}(p),\quad p\to -1;\\
		&w_{2i-2}(\frac{1}{p})=w_{2i-1}(p),\quad p\to b_{2i-1,0},\ i=3,4,\cdots,m',
	\end{align*}
	 where $b_{2i-1,0}+\frac{1}{b_{2i-1,0}}=\tilde{b}_{i,0}$. For $p\to 0$, we have
\begin{align*}
	-\tilde{\varphi}=&\log w_{0}(\frac{1}{p})-\frac{1}{2}t_{1,0}-t_{0,1}w_{0}^{-1}(\frac{1}{p})-\cdots\\
	=&\log w_{1}(p)-\frac{1}{2}t_{1,0}-t_{0,1}w_{1}^{-1}(p)-\cdots\\
	=&\log w_{1}(p)-\frac{1}{2}t_{1,0}-t_{1,1}w_{1}^{-1}(p)-\cdots.
\end{align*}
	For $p\to 1$:
\begin{align*}
	-\tilde{\varphi}=&t_{2,0}-\frac{1}{2}t_{1,0}+w_{3}^{-1}(\frac{1}{p})+\cdots\\
	=&-\log 1+t_{2,1}w_{2}^{-1}(\frac{1}{p})+\cdots\\
	=&-\log 1-t_{2,1} w_{2}^{-1}(p)+t_{2,2} w_{2}^{-2}(p)-\cdots\\
	=&-\log 1-t_{2,1} w_{2}^{-1}(p)-t_{2,2} w_{2}^{-2}(p)-\cdots.
\end{align*}
  For $p\to -1$:
  \begin{align*}
  -\tilde{\varphi}=&t_{3,0}-\frac{1}{2}t_{1,0}+w_{3}^{-1}(\frac{1}{p})+\cdots\\
  =&-\log (-1)+t_{3,1}w_{3}^{-1}(\frac{1}{p})+\cdots\\
  =&-\log (-1)-t_{3,1} w_{3}^{-1}(p)+t_{3,2} w_{3}^{-2}(p)-\cdots\\
  =&-\log (-1)-t_{3,1} w_{3}^{-1}(p)-t_{3,2} w_{3}^{-2}(p)-\cdots.
  \end{align*}
	For $p\to b_{2i-1}=\frac{1}{b_{2i-2}}$:
\begin{align*}
	-\tilde{\varphi}=&t_{2i-2,0}-\frac{1}{2}t_{1,0}+t_{2i-2,1}w_{2i-2}^{-1}(\frac{1}{p})+\cdots\\
	=&t_{2i-2,0}-\frac{1}{2}t_{1,0}+t_{2i-2,1}w_{2i-1}^{-1}(p)+\cdots\\
	=&t_{2i-1,0}-\frac{1}{2}t_{1,0}+t_{2i-1,1}w_{2i-1}^{-1}(p)+\cdots.
\end{align*}
	Thus, the flat coordinates satisfy the following constraints:
\begin{align*}
	&t_{0,j}=t_{1,j},\quad j=1,\cdots,n_{0}-1;\\
	&t_{2,2}=t_{2,4}=\cdots= t_{2,n_{2}}=0,\quad t_{2,0}-\frac{1}{2}t_{1,0}=-\log 1;\\
	&t_{3,2}=t_{3,4}=\cdots= t_{3,n_{3}}=0,\quad t_{3,0}-\frac{1}{2}t_{1,0}=-\log (-1);\\
	&t_{2i-2,j}=t_{2i-1,j},\quad i=3,\cdots,m',\ j=0,\cdots,n_{2i-2}.
\end{align*}
Here, the values of \( \log 1 \) and \( \log(-1) \) depend on the chosen branch of \( \log p \). Hence, \( M^{C-Toda} \) forms a flat submanifold of \( M^{Toda} \) with flat coordinates:
\[
\mathbf{t} = \{t_{1,j}\}_{j=0}^{n_0'} \cup \{t_{2,2j-1}\}_{j=1}^{n_1'} \cup \{t_{3,2j-1}\}_{j=1}^{n_2'} \cup \{t_{4,j}\}_{j=0}^{n_3'} \cup \cdots \cup \{t_{2m'-2,j}\}_{j=0}^{n_{m'}'}.
\]

	On the other hand, let $p_{1},\cdots,p_{r}$ be the simple critical points of $\lambda(z)\in M^{Toda}$, where $r=dim(M^{Toda})$. The critical values $u_{j}=\lambda(p_{j})$ form the canonical coordinates for $M^{Toda}$. For $\lambda(z)\in M^{C-Toda}$, we can choose $$
	p_{2j}=\frac{1}{p_{2j-1}},\quad j=1,2,\cdots,\frac{r}{2},
	$$
	which implies
	$$
	u_{2j}=u_{2j-1},\quad j=1,2,\cdots,\frac{r}{2},
	$$
	Thus, $M^{C-Toda}$ is a caustic submanifold of $M^{Toda}$. According to Corollary 3.7 in \cite{STRACHAN2004}, $M^{C-Toda}$ is a natural Frobenius submanifold of $M^{Toda}$.
	
	\begin{rem}
		Note that the above conclusion also holds when \( n_1' = 0 \) or \( n_2' = 0 \). In the case \( n_1' = 0 \), we have
		$$
		r = \dim(M^{Toda}) = 2\dim(M^{C-Toda}) - 1.
		$$
		Choose simple critical points \( p_1, \ldots, p_r \) such that
		$$
		p_1 = 1, \quad p_{2j} = \frac{1}{p_{2j+1}}, \quad j = 1, 2, \ldots. \frac{r-1}{2},
		$$
		This implies
		$$
		u_{2j} = u_{2j+1}, \quad j = 1, 2, \ldots, \frac{r-1}{2}.
		$$
	\end{rem}
	
Let us now consider the principal hierarchy for \( M^{C-Toda} \). Define \( \mathcal{H}^{odd} \) as the subspace of \( \mathcal{H} \) consisting of elements satisfying \( \omega(z^{-1}) = -\omega(z) \). For \( \lambda(z) \in M^{C-Toda} \) and \( \omega(z) \in \mathcal{H}^{odd} \), the following relations hold:
\[
\omega_{-}(z^{-1}) = \begin{cases}
	-\omega_{-}(z), & \text{if } z, z^{-1} \in D^{c}, \\
	\omega_{+}(z), & \text{if } z^{-1} \in D^{c}, z \in D,
\end{cases}
\]
and
\[
\omega_{+}(z^{-1}) = \begin{cases}
	\omega_{+}(z), & \text{if } z, z^{-1} \in D, \\
	-\omega_{-}(z), & \text{if } z^{-1} \in D, z \in D^{c}.
\end{cases}
\]
Therefore, \( \eta^{\ast}(\omega(z)) \in T_{\lambda(z)}M^{C-Toda} \), and for any \( \p \in T_{\lambda(z)}M^{C-Toda} \), it follows that:
\[
\eta^{\ast} \cdot \nabla_{\p} \omega \in T_{\lambda(z)}M^{C-Toda}.
\]
Since the differentials of the Hamiltonian densities
\[
\begin{aligned}
	&\theta_{t_{0,j},p} + \theta_{t_{1,j},p}, \quad j=1,2,\dots,n_{0}'; \\
	&\theta_{t_{2,2j-1},p}, \quad j=1,2,\dots,n_{1}'; \\
	&\theta_{t_{3,2j-1},p}, \quad j=1,2,\dots,n_{2}'; \\
	&\theta_{t_{2i-2,j},p} + \theta_{t_{2i-1,j},p}, \quad i=3,\dots,m', \ j=0,\dots,n_{i}' 
\end{aligned}
\]
belong to \( \mathcal{H}^{odd} \), the corresponding Hamiltonian vector fields can be restricted to the loop space of \( M^{C-Toda} \), thereby forming the principal hierarchy for \( M^{C-Toda} \).
	
	\subsection{principal hierarchy for $\hat{M}^{Toda}$}
Let \( \hat{M}^{Toda} \) denote the almost duality of the Frobenius manifold \( M^{Toda} \). In this subsection, we will construct the principal hierarchy for \( \hat{M}^{Toda} \). 

Suppose $I\in \textbf{D}$, where \( I \) is the set of all zeros and poles of \( \lambda(z) \) in  \( \mathbb{C} \). Define the linear map from \( \mathcal{H} \) to \( T_{\lambda(z)}M^{Toda} \) as \( g^{\ast}(\omega(z)) = C_{E}(\omega(z)) \), that is,
\[
g^{\ast}(\omega(z)) = (\lambda(z)\omega(z))_{-}\lambda'(z)z - z\lambda(z)(\lambda'(z)\omega(z))_{-} + \frac{z\lambda'(z)}{n_{0}}\frac{1}{2\pi\mathrm{i}} \sum_{s=1}^{m} \int_{\gamma_{s}} \lambda'(z)\omega(z) \, dz,
\]
then we have
\[
\langle \omega(z), \xi(z) \rangle = (g^{\ast}(\omega(z)), \xi(z))_{g},
\]
where
\[
(\p_{1}, \p_{2})_{g} = \sum \res_{d\lambda=0} \frac{\p_{1}\log\lambda(z) \p_{2}\log\lambda(z)}{(\log\lambda(z))'z^{2}} \, dz.
\]

Let \( \hat{\nabla} \) be the Levi-Civita connection associated with the intersection form \( g \), we have
\[
\hat{\nabla}_{\p_{1}}\p_{2} \cdot \lambda(z) = \p_{1}\p_{2}\lambda(z) - \frac{\p_{1}\lambda(z)\p_{2}\lambda(z)}{\lambda(z)} - z\lambda(z) \sum_{q \in I} \left( \frac{\p_{1}\lambda(z)\p_{2}\lambda(z)}{z\lambda(z)\lambda'(z)} \right)_{q,\le-1}'.
\] Furthermore, 
\[
g^{\ast} \cdot \hat{\nabla}_{\p}\omega(z) = \{\omega(z),\lambda(z)\}_{-}\lambda(z)z - \{(\omega(z)\lambda(z))_{-},\lambda(z)\}z - \frac{z\lambda'(z)}{n_{0}}\frac{1}{2\pi\mathrm{i}} \sum_{s=1}^{m} \int_{\gamma_{s}} \{\omega(z),\lambda(z)\} \, dz,
\]
where \( \{f,g\} = f'\p g - g'\p f \). Setting \( \p = \p_{x} \), we obtain the explicit form of the Hamiltonian structure $\hat{\mathcal{P}}$ for the flat metric $g$.
\begin{proof}[Proof of Theorem \ref{mainTodadul}]
	Consider functions \( F_{p} = \frac{1}{2\pi \mathrm{i}} \int_{\gamma_{j}} \tilde{Q}_{p}(\lambda) \frac{dz}{z} \) on \( M^{Toda
	} \), where \( Q_{p}(\lambda) = \frac{\partial \tilde{Q}_{p}(\lambda)}{\partial \lambda} \) satisfy the recurrence relation:
\begin{equation}\label{todadulrec}
		\lambda \frac{\partial Q_{p+1}}{\partial \lambda} + Q_{p+1} = Q_{p}.
\end{equation}
	We have \( dF_{p} = Q_{p} \mathbf{1}_{\gamma_{j}} \) and
	\begin{align*}
		g^{\ast} \cdot \hat{\nabla}_{\p} dF_{p+1} =& - \left( \left( \frac{\partial Q_{p+1}}{\partial \lambda} \lambda \lambda' \mathbf{1}_{\gamma_{j}} + Q_{p+1} \lambda' \mathbf{1}_{\gamma_{j}} \right)_{-} \p \lambda z \right) \\
		&+ \left( \left( \frac{\partial Q_{p+1}}{\partial \lambda} \lambda \p \lambda \mathbf{1}_{\gamma_{j}} + Q_{p+1} \p \lambda \mathbf{1}_{\gamma_{j}} \right)_{-} \lambda' z \right) \\
		=& \left( (Q_{p} \p \lambda \mathbf{1}_{\gamma_{j}})_{-} \lambda' z \right) - \left( (Q_{p} \lambda' \mathbf{1}_{\gamma_{j}})_{-} \p \lambda z \right) \\
		=& C_{\p}(dF_{p}).
	\end{align*}
	
	Let \( M' \) be a subspace of \( M^{Toda} \) such that for any \( \lambda(z) \in M' \), the winding number of \( \lambda(z) \) along \( \gamma_{j} \) is zero. Define \( \tilde{Q}(p) = \frac{(\log(\lambda(z)))^{p+1}}{(p+1)!} \), then \( F_{\gamma_{j},p} \) satisfies equality \eqref{todadulrec} on \( M' \). Furthermore, by the uniqueness of analytic function, for any \( \tilde{\gamma} \in [\gamma_{j}] \), \( F_{\tilde{\gamma},p} \) satisfies equality \eqref{princon1}, where \( [\gamma_{j}] \) denotes the homotopy equivalence class of \( \gamma_{j} \) in \( \mathbb{C}-I \).
	
	In particular, let \( q_{1}, \ldots, q_{k}, p_{1}, \ldots, p_{k} \) be the zeros and poles of \( \lambda(z) \) within the region surrounded by \( \tilde{\gamma} \). Additionally, let $q_{k+1}, \ldots, q_{n}, p_{k+1}, \ldots, p_{r}$ be the zeros and poles outside this region. If \( 0 \in \mathbb{C} \) is outside the region surrounded by \( \tilde{\gamma} \), we have
	\begin{align*}
		F_{\tilde{\gamma},0} =& \frac{1}{2\pi\mathrm{i}} \int_{\tilde{\gamma}} \log(\lambda(z)) \frac{dz}{z} \\
		=& \frac{1}{2\pi\mathrm{i}} \int_{\tilde{\gamma}} (\log(\lambda(z) \prod_{s=1}^{k} \frac{z-p_{s}}{z-q_{s}}) + \log(\prod_{s=1}^{k} \frac{z-q_{s}}{z-p_{s}}) )\frac{dz}{z} \\
		=& -\res_{0} \log(\prod_{s=1}^{k} \frac{z-q_{s}}{z-p_{s}}) \, \frac{dz}{z} \\
		=& \sum_{s=1}^{k} \log(p_{s}) - \sum_{s=1}^{k} \log(q_{s}).
	\end{align*}
	Otherwise
	\begin{align*}
		F_{\tilde{\gamma},0} =& \frac{1}{2\pi\mathrm{i}} \int_{\tilde{\gamma}} \log(\lambda(z)) \frac{dz}{z} \\
		=& \frac{1}{2\pi\mathrm{i}} \int_{\tilde{\gamma}} (\log(\lambda(z) \prod_{s=1}^{k} \frac{z-p_{s}}{z-q_{s}}) + \log(\prod_{s=1}^{k} \frac{z-q_{s}}{z-p_{s}}) )\frac{dz}{z} \\
		=& \res_{0} \log(\lambda(z) \prod_{s=1}^{k} \frac{z-p_{s}}{z-q_{s}}) \frac{dz}{z} \\
		=& \sum_{s=k+1}^{r} \log(p_{s}) - \sum_{s=k+1}^{n} \log(q_{s}).
	\end{align*}
	The theorem is proved.
\end{proof}
	\begin{cor}
		The Hamiltonian vector fields \(\frac{\partial}{\partial\hat{T}^{\gamma_{i},p}} = \hat{\mathcal{P}}(d F_{\gamma_{i},p+1})\) take the form:
		\[
		\frac{\partial \lambda(z)}{\partial\hat{T}^{\gamma_{i},p-1}} = \{(\frac{(\log(\lambda(z)))^{p}}{p!} \mathbf{1}_{\gamma_{i}})_{+}, \lambda(z)\},
		\]
		where
		\begin{equation}
			\{f(z,x),g(z,x)\} = z \left( \frac{\partial f(z,x)}{\partial z} \frac{\partial g(z,x)}{\partial x} - \frac{\partial f(z,x)}{\partial x} \frac{\partial g(z,x)}{\partial z} \right).
		\end{equation}
	\end{cor}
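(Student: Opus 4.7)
The plan is to invoke, with an appropriate index shift, the identity $g^{\ast}\cdot\hat{\nabla}_{\p}\,dF_{\gamma_{i},p+1} = C_{\p}(dF_{\gamma_{i},p})$ established during the proof of Theorem \ref{mainTodadul}, specialize it to $\p=\p_{x}$, and then rewrite the right-hand side as the trigonometric Poisson bracket claimed in the corollary.

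First I would recall from the cited proof that $dF_{\gamma_{i},p}=Q_{p}\mathbf{1}_{\gamma_{i}}$ with $Q_{p}=\p_{\lambda}\tilde{Q}_{p}$. For the Hamiltonian density $F_{\tilde{\gamma},p}=\frac{1}{2\pi\mathrm{i}}\int_{\tilde{\gamma}}\tilde{Q}_{p}(\lambda)\frac{dz}{z}$ with $\tilde{Q}_{p}=\frac{(\log\lambda)^{p+1}}{(p+1)!}$, one computes $Q_{p}=\frac{(\log\lambda)^{p}}{p!\,\lambda}$, so that
\[
\p_{z}\!\left(\tfrac{(\log\lambda(z))^{p}}{p!}\right)=Q_{p-1}\lambda'(z), \qquad \p_{x}\!\left(\tfrac{(\log\lambda(z))^{p}}{p!}\right)=Q_{p-1}\p_{x}\lambda(z).
\]
Using $\p/\p\hat{T}^{\gamma_{i},p-1}=\hat{\mathcal{P}}(dF_{\gamma_{i},p})=C_{\p_{x}}(Q_{p-1}\mathbf{1}_{\gamma_{i}})$ together with formula \eqref{todadulpro} gives
\[
\frac{\p\lambda(z)}{\p\hat{T}^{\gamma_{i},p-1}}=-\bigl(Q_{p-1}\p_{x}\lambda(z)\,\mathbf{1}_{\gamma_{i}}\bigr)_{+}\lambda'(z)\,z+\bigl(Q_{p-1}\lambda'(z)\,\mathbf{1}_{\gamma_{i}}\bigr)_{+}z\,\p_{x}\lambda(z).
\]

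The last step is to pull the projection $(\cdot)_{+}$ through the two derivatives. Since $(\cdot)_{+}$ is defined by a Cauchy-type integral against the kernel $\frac{1}{p-z}$, differentiation under the integral sign shows that it commutes with both $\p_{z}$ and $\p_{x}$; together with $\p_{z}\mathbf{1}_{\gamma_{i}}=0$ and $\p_{x}\mathbf{1}_{\gamma_{i}}=0$, the chain rules displayed above yield
\[
\bigl(Q_{p-1}\lambda'(z)\,\mathbf{1}_{\gamma_{i}}\bigr)_{+}=\p_{z}\!\left(\tfrac{(\log\lambda)^{p}}{p!}\mathbf{1}_{\gamma_{i}}\right)_{+}, \qquad \bigl(Q_{p-1}\p_{x}\lambda(z)\,\mathbf{1}_{\gamma_{i}}\bigr)_{+}=\p_{x}\!\left(\tfrac{(\log\lambda)^{p}}{p!}\mathbf{1}_{\gamma_{i}}\right)_{+}.
\]
Substitution identifies the preceding expression with $\{(\tfrac{(\log\lambda(z))^{p}}{p!}\mathbf{1}_{\gamma_{i}})_{+},\lambda(z)\}$ under the trigonometric Poisson bracket defined in the corollary, completing the proof.

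The only mildly delicate point is justifying that $(\cdot)_{+}$ commutes with $\p_{z}$ and $\p_{x}$ on germs in $\mathcal{H}$, which is a routine differentiation-under-the-integral argument. The structural parallel with the earlier corollary for $\hat{M}^{cKP}$ means that once the groundwork of Theorem \ref{mainTodadul} is in place, no genuine obstacle is anticipated.
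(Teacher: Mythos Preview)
Your argument is correct and is precisely the route the paper implicitly expects: the corollary is stated without proof, but the identity $g^{\ast}\cdot\hat{\nabla}_{\p}\,dF_{\gamma_{i},p+1}=C_{\p}(dF_{\gamma_{i},p})$ from the proof of Theorem~\ref{mainTodadul}, specialized to $\p=\p_{x}$ and combined with formula~\eqref{todadulpro}, is exactly the intended derivation. Your handling of the index shift, the identification $Q_{p-1}\lambda'=\p_{z}\bigl(\tfrac{(\log\lambda)^{p}}{p!}\bigr)$ and $Q_{p-1}\p_{x}\lambda=\p_{x}\bigl(\tfrac{(\log\lambda)^{p}}{p!}\bigr)$, and the commutation of $(\cdot)_{+}$ with $\p_{z}$ and $\p_{x}$ are all fine; the last point is indeed routine (integration by parts on the closed contour for $\p_{z}$, and the integrand's independence of $x$ for $\p_{x}$).
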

	\subsection{rank-1 extension}
	
By using a similar approach as in subsection \ref{extcKP}, we can construct a rank-1 extension of $M^{Toda}$ using the following lemma.
	\begin{lem}\label{todaopenlem}
		For any vector field $\p_{1},\p_{2}$ on $M^{Toda}$, we have
		$$
		\frac{\p_{1}\lambda(z)\p_{2}\lambda(z)-\p_{1}\lambda(z)\circ \p_{2}\lambda(z)}{z\lambda'}=(\frac{\p_{1}\lambda(z)\p_{2}\lambda(z)}{z\lambda'(z)})_{\infty,\ge 0}+\sum_{j=1}^{m}(\frac{\p_{1}\lambda(z)\p_{2}\lambda(z)}{z\lambda'(z)})_{\varphi_{j},\le -1}.
		$$
	\end{lem}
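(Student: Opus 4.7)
The plan is to mimic the proof of Lemma \ref{kdvopenlem} essentially line-by-line, adapting only the minor changes required by the trigonometric setup: the denominator $z\lambda'(z)$ in place of $\lambda'(z)$, and the slightly different pole structure of tangent vectors to $M^{Toda}$ (in particular the distinguished point $a_{1,0}=0$). Since both sides of the claimed identity are bilinear over smooth functions in the pair $(\partial_1,\partial_2)$, I would first reduce to the case $\partial_1=\partial_{u_i}$, $\partial_2=\partial_{u_k}$, where $\{u_\ell\}$ is the canonical coordinate system on the semisimple Frobenius manifold $M^{Toda}$. The standard properties
\[
\partial_{u_\ell}\lambda(z)\bigl|_{z=p_j}=\delta_{\ell j},\qquad \partial_{u_i}\circ\partial_{u_k}=\delta_{ik}\,\partial_{u_i},
\]
at the simple critical points $p_j$ of $\lambda$ will then be the only inputs needed.

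Next, I would show that the left-hand side, viewed as a rational function in $z$, is holomorphic at every critical point $p_j$. This is the heart of the argument: $z\lambda'(z)$ has a simple zero at $p_j$ (the critical point lies away from the poles of $\lambda$, hence in particular $p_j\neq 0$), while a short Taylor expansion of the numerator $\partial_{u_i}\lambda(z)\partial_{u_k}\lambda(z)-\delta_{ik}\partial_{u_i}\lambda(z)$ at $p_j$, using $\partial_{u_\ell}\lambda(p_j)=\delta_{\ell j}$, confirms that it vanishes to at least first order. A brief case split on whether $i$, $k$, $j$ coincide makes the cancellation transparent.

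Once holomorphy at critical points is established, the only possible singularities of the left-hand side are the poles of $\lambda$, namely $\infty$ and $\varphi_j:=a_{j,0}$ for $j=1,\ldots,m$. Consequently I can decompose
\[
\text{LHS}=(\text{LHS})_{\infty,\geq 0}+\sum_{j=1}^{m}(\text{LHS})_{\varphi_j,\leq -1}.
\]
To finish, I would verify that the subtracted term $\delta_{ik}\partial_{u_i}\lambda(z)/(z\lambda'(z))$ contributes nothing to these principal parts. A direct comparison of the maximal pole orders of a generic tangent vector in $T_{\lambda}M^{Toda}$ (degree $\leq n_0-1$ at $\infty$, order $\leq n_1$ at $0$, and order $\leq n_\ell+1$ at each $a_{\ell,0}$) against those of $z\lambda'(z)$ shows that this quotient vanishes at $\infty$ and is holomorphic at each $\varphi_j$; hence both $(\cdot)_{\infty,\geq 0}$ and $(\cdot)_{\varphi_j,\leq -1}$ of it are zero, and dropping it from the decomposition yields the stated right-hand side.

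The main obstacle I anticipate is the holomorphy-at-critical-points calculation in step two, since it requires careful tracking of vanishing orders across the subcases $i=j=k$, $i=k\neq j$, and $i\neq k$; the extra factor of $z$ in the denominator is merely a bookkeeping nuance, essentially harmless precisely because the critical points of $\lambda$ lie away from $z=0$ (where $\lambda$ itself has a pole).
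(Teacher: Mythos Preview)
Your proposal is correct and follows exactly the approach the paper intends: the paper gives no separate proof of Lemma~\ref{todaopenlem}, merely pointing back to the cKP case (Lemma~\ref{kdvopenlem}), whose proof is precisely the canonical-coordinate reduction, the vanishing at critical points via $\partial_{u_i}\lambda(p_j)=\delta_{ij}$, and the observation that the $\partial_1\lambda\circ\partial_2\lambda$ term contributes nothing to the principal parts. Your write-up in fact spells out more carefully than the paper does why the subtracted term has vanishing $(\ )_{\infty,\ge 0}$ and $(\ )_{\varphi_j,\le -1}$ parts, including the small bookkeeping adjustment at the distinguished pole $a_{1,0}=0$.
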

	
	\begin{cor}
		Let $z=e^{s}$ and define $\Omega(\textbf{t},s)$ such that
		$$
		\p_{s}\p_{\alpha}\Omega=\p_{\alpha}\lambda(z),\quad \p^{2}_{s}\Omega=z\lambda'(z),\quad \p_{\alpha}\p_{\beta}\Omega=(\frac{\p_{\alpha}\lambda(z)\p_{\beta}\lambda(z)}{z\lambda'(z)})_{\infty,\ge 0}+\sum_{j=1}^{m}(\frac{\p_{\alpha}\lambda(z)\p_{\beta}\lambda(z)}{z\lambda'(z)})_{\varphi_{j},\le -1}.
		$$
		Then $\omega=\p_{s}\Omega$ satisfies the condition of Lemma \ref{consopen}, thus defining a flat F-manifold structure on $M^{Toda}\times\mathbb{C}$, with multiplication given by the following expressions:
		\begin{align*}
			&(\p_{\alpha},0)\star(\p_{\beta},0)=(\p_{\alpha}\circ\p_{\beta},\p_{\alpha}\p_{\beta}\Omega\cdot\p_{s}),\\
			&(\p_{\alpha},0)\star(0,\p_{s})=(0,\p_{\alpha}\lambda(z)\cdot\p_{s}),\\
			&(0,\p_{s})\star(0,\p_{s})=(0,z\lambda'(z)\cdot\p_{s}).
		\end{align*}
	\end{cor}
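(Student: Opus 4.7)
The plan is to mimic the construction carried out for $M^{cKP}\times\mathbb{C}$ in subsection \ref{extcKP}, replacing Lemma \ref{kdvopenlem} by its trigonometric analog Lemma \ref{todaopenlem} and Lemma \ref{kdvconexp} by Lemma \ref{todaconexp}. The goal is to exhibit a function $\Omega(\mathbf{t},s)$ whose derivative $\omega=\partial_s\Omega$ satisfies the hypothesis of Lemma \ref{consopen}, and then to read off the multiplication on $M^{Toda}\times\mathbb{C}$ from the resulting vector potential.

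First I would check that the prescribed collection of second partial derivatives is consistent, i.e.\ that there actually exists a function $\Omega$ with the given $\partial_s\partial_\alpha\Omega$, $\partial_s^2\Omega$, and $\partial_\alpha\partial_\beta\Omega$. This amounts to verifying the symmetry relations
\[
\partial_\beta(\partial_s\partial_\alpha\Omega)=\partial_\alpha(\partial_s\partial_\beta\Omega),\qquad
\partial_s(\partial_\alpha\partial_\beta\Omega)=\partial_\alpha(\partial_s\partial_\beta\Omega),
\]
plus the analogous symmetry among the pure $t$-derivatives. The first two follow immediately from $\partial_s\partial_\alpha\Omega=\partial_\alpha\lambda(z)$ with $z=e^s$, since $\partial_\alpha\partial_\beta\lambda=\partial_\beta\partial_\alpha\lambda$ and the projection operators $(\ )_{\infty,\ge 0}$ and $(\ )_{\varphi_j,\le-1}$ commute with $\partial_s=z\partial_z$ when applied to $\partial_\alpha\lambda\,\partial_\beta\lambda/(z\lambda'(z))$ treated as a function of $z$. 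The third symmetry reduces, modulo Lemma \ref{todaopenlem}, to the symmetry of the tensor $\partial_\alpha\partial_\beta\lambda-\nabla_{\partial_\alpha}\partial_\beta\cdot\lambda$, which in turn is the statement of Lemma \ref{todaconexp} combined with $[\partial_\alpha,\partial_\beta]=0$.

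Next I would verify the key PDE of Lemma \ref{consopen}, namely
\[
\partial_\alpha\partial_\beta\omega=\partial_s\!\left(\frac{\partial_\alpha\omega\,\partial_\beta\omega-c_{\alpha\beta}^\delta\partial_\delta\omega}{\partial_s\omega}\right),
\]
with $\omega=\partial_s\Omega$, so that $\partial_s\omega=z\lambda'(z)$, $\partial_\alpha\omega=\partial_\alpha\lambda(z)$, and $\partial_\delta\omega=\partial_\delta\lambda(z)$. The right-hand side becomes
\[
\partial_s\!\left(\frac{\partial_\alpha\lambda\,\partial_\beta\lambda-\partial_\alpha\lambda\circ\partial_\beta\lambda}{z\lambda'(z)}\right),
\]
which by Lemma \ref{todaopenlem} equals $\partial_s\bigl[(\tfrac{\partial_\alpha\lambda\partial_\beta\lambda}{z\lambda'})_{\infty,\ge0}+\sum_j(\tfrac{\partial_\alpha\lambda\partial_\beta\lambda}{z\lambda'})_{\varphi_j,\le-1}\bigr]=\partial_s\partial_\alpha\partial_\beta\Omega=\partial_\alpha\partial_\beta\omega$, as required. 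Thus Lemma \ref{consopen} produces the open WDVV solution $\Omega$, and by Theorem 3 of \cite{alcolado2017extended} the corresponding vector potential $\Psi=\partial_\alpha F\eta^{\alpha\beta}\partial_\beta+\Omega\,\partial_s$ defines a flat F-manifold structure on $M^{Toda}\times\mathbb{C}$.

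Finally, to extract the multiplication I would use the flat F-manifold formula $X\star Y=[X,[Y,\Psi]]$ for flat $X,Y$, evaluate each bracket in the flat coordinates $\mathbf{t}\cup\{s\}$, and read off the three displayed products; the first component reproduces the Frobenius product on $M^{Toda}$, while the components along $\partial_s$ pick up precisely $\partial_\alpha\partial_\beta\Omega$, $\partial_\alpha\lambda(z)$, and $z\lambda'(z)$. The main obstacle I expect is the bookkeeping in the second symmetry check above (making sure the positive and negative parts of $\partial_\alpha\lambda\partial_\beta\lambda/(z\lambda'(z))$ commute correctly with $\partial_s$ and with taking a further $\partial_\gamma$), since the trigonometric factor $z$ modifies the residue pairings relative to the rational case; once this is handled the rest of the argument is a mechanical transcription of subsection \ref{extcKP}.
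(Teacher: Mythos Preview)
Your approach is correct and is precisely the paper's: in subsection~\ref{extcKP} the paper proves the $M^{cKP}$ analogue in one line, simply asserting that Lemmas~\ref{kdvopenlem} and~\ref{kdvconexp} yield the hypothesis of Lemma~\ref{consopen}, and for the present corollary it explicitly says to repeat that argument with Lemmas~\ref{todaopenlem} and~\ref{todaconexp}. You have merely unpacked what the paper leaves implicit.

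One small correction in your bookkeeping: the second of your displayed consistency relations, $\partial_s(\partial_\alpha\partial_\beta\Omega)=\partial_\alpha(\partial_s\partial_\beta\Omega)=\partial_\alpha\partial_\beta\lambda$, does \emph{not} follow from commutation of the projections with $\partial_s=z\partial_z$ alone. What it actually needs is Lemma~\ref{todaconexp} in flat coordinates (so that $\nabla_{\partial_\alpha}\partial_\beta=0$), which reads
\[
\partial_\alpha\partial_\beta\lambda
= z\Bigl(\tfrac{\partial_\alpha\lambda\,\partial_\beta\lambda}{z\lambda'}\Bigr)'_{\infty,\ge0}
+ \sum_{j} z\Bigl(\tfrac{\partial_\alpha\lambda\,\partial_\beta\lambda}{z\lambda'}\Bigr)'_{\varphi_j,\le-1}
= \partial_s(\partial_\alpha\partial_\beta\Omega).
\]
You invoke this lemma only for the ``third symmetry,'' but it is exactly here that it does the work; once placed correctly, your verification of the PDE in Lemma~\ref{consopen} (which you reduce to $\partial_s\partial_\alpha\partial_\beta\Omega=\partial_\alpha\partial_\beta\omega$) is immediate, and the obstacle you anticipate in your final paragraph disappears.
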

	
		The principal hierarchy for the flat F-manifold $M^{Toda}\times \mathbb{C}$ is given by
	$$
	\frac{\p}{\p\tilde{T}^{\bullet,p}}=\Theta_{\bullet,p}\star\tilde{\p}_{x},
	$$
	where
	$$
	\tilde{\p}_{x}=(\p_{x},\frac{\p s}{\p x}\p_{s}),
	$$
    and the vector fields $\Theta_{\bullet,p}$ are given by the following theorem.
	\begin{thm}
		The vector fields on $M^{Toda}\times\mathbb{C}$ of the form
		\begin{align*}
			&\Theta_{s,p}=(0,\frac{\lambda(z)^{p}}{p!}\p_{s})\\
			&\Theta_{t_{0,n_{0}-j},p}	=(\eta^{\ast}(d\theta_{t_{0,j},p}),(d\theta_{t_{0,j},p}|_{z=e^{s}})_{+}\p_{s}),\\
			&\Theta_{t_{i,n_{i}-j},p}=(\eta^{\ast}(d\theta_{t_{i,j},p}),-(d\theta_{t_{i,j},p}|_{z=e^{s}})_{-}\p_{s}),\quad i=1,\cdots,m,
		\end{align*}
		satisfy equations of the form \eqref{openreu1} and \eqref{openreu2}. 
	\end{thm}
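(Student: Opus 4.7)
The plan is to imitate the proof of the analogous theorem for the $M^{cKP}$ extension at the end of Section~\ref{extcKP}, carrying along the extra factors of $z = e^s$ that distinguish the trigonometric case from the rational one (the Poisson bracket \eqref{todaham} and the multiplication rule $(0,\partial_s)\star(0,\partial_s)=(0,z\lambda'(z)\partial_s)$).

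First I would verify the initial conditions asked by \eqref{openreu2}. For $\Theta_{s,0}=(0,\partial_s)$ this is immediate from $\lambda(z)^0/0!=1$. For $\Theta_{t_{i,n_i-j},0}$ the normalization $\theta_{t_{i,j},0}(t)=\eta_{t_{i,j},t^\alpha}t^\alpha$ combined with \eqref{todadulmet} forces $\eta^\ast(d\theta_{t_{i,j},0})=\partial/\partial t^{t_{i,n_i-j}}$, while the second components $\pm(d\theta_{t_{i,j},0}|_{z=e^s})_\pm\,\partial_s$ vanish since the relevant projections of the constant differentials produced by $\theta_{t_{i,j},0}$ are zero.

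For the recursion \eqref{openreu1}, I would decompose $X=(\bar X,X(s)\partial_s)$ and compare both sides componentwise. The first components match by the principal-hierarchy identity $\eta^\ast\!\cdot\!\nabla_{\bar X}\,d\theta_{t_{i,j},p+1}=C_{\bar X}(d\theta_{t_{i,j},p})$, which follows from \eqref{princon1} and yields $\nabla_{\bar X}\bar\Theta_{t_{i,n_i-j},p+1}=\bar X\circ\bar\Theta_{t_{i,n_i-j},p}$, together with the observation that $\bar\Theta_{\bullet,p+1}$ is independent of $s$. For the second components, I would expand the four summands in $X\star\Theta_{\bullet,p}$ using the multiplication rules and the explicit form of $\partial_\alpha\partial_\beta\Omega$ supplied by Lemma~\ref{todaopenlem}; substituting \eqref{todadulmet} and \eqref{todadulpro} produces cancellations parallel to those in the cKP case, and the right-hand side collapses to $(d\theta_{t_{0,j},p}|_{z=e^s}\cdot d_x\lambda(e^s))_+$ (resp.\ $-(d\theta_{t_{i,j},p}|_{z=e^s}\cdot d_x\lambda(e^s))_-$), with $d_x\lambda(e^s)=\partial_x\lambda(e^s)+e^s\lambda'(e^s)\,\partial_x s$. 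This matches the left-hand side via the chain rule $\partial_s=e^s\partial_z|_{z=e^s}$. The case $\Theta_{s,p}$ is the simplest: the first components are trivially zero and the second components agree by the same chain rule applied to $\lambda(z)^{p+1}/(p+1)!$.

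The main obstacle is the bookkeeping around $z=e^s$. Because differentiation $\partial_s$ acts as $e^s\partial_z$ on expressions evaluated at $z=e^s$, one must verify that $\partial_s$ commutes with the cotangent projections $(\cdot)_\pm$ in the way required, and that the extra factor of $z$ in the trigonometric multiplication $(0,\partial_s)\star(0,\partial_s)=(0,z\lambda'(z)\partial_s)$ is precisely the one produced by this chain-rule factor. Once this correspondence is installed, the argument is a direct translation of the one in Section~\ref{extcKP}, with every instance of $\lambda'(z)$ in the rational setting replaced by $z\lambda'(z)$ and $s$ replaced by $e^s$ in the arguments of $\lambda$ and $d\theta_{t_{i,j},p}$.
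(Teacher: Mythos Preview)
Your plan is essentially the same as the paper's (implicit) argument: the paper gives no separate proof in the Toda case and simply relies on the computation already carried out for $M^{cKP}$, with the substitutions $\lambda'(s)\rightsquigarrow z\lambda'(z)$, $s\rightsquigarrow e^s$, and $\partial_s=e^s\partial_z|_{z=e^s}$ that you correctly identify.

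One point to tighten. In verifying \eqref{openreu1} you write that the second component of $X\star\Theta_{\bullet,p}$ collapses to $(d\theta_{t_{0,j},p}|_{z=e^s}\cdot d_x\lambda(e^s))_+$ with $d_x\lambda(e^s)=\partial_x\lambda(e^s)+e^s\lambda'(e^s)\,\partial_x s$. But \eqref{openreu1} is formulated for an \emph{arbitrary} $X=(\bar X,X(s)\partial_s)$, not for $X=\tilde\partial_x$. What you should obtain, parallel to the display in the cKP proof, is
\[
(\partial_{\bar X}\lambda(e^s)\cdot d\theta_{t_{0,j},p}|_{z=e^s})_+\,\partial_s \;+\; X(s)\,(e^s\lambda'(e^s)\cdot d\theta_{t_{0,j},p}|_{z=e^s})_+\,\partial_s,
\]
which then matches the left-hand side once you use $\partial_{\bar X}(d\theta_{t_{0,j},p+1})=d\theta_{t_{0,j},p}\cdot\partial_{\bar X}\lambda$ and $\partial_s(d\theta_{t_{0,j},p+1}|_{z=e^s})=e^s\lambda'(e^s)\cdot d\theta_{t_{0,j},p}|_{z=e^s}$. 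The $d_x\lambda$ expression only enters after the specialization $X=\tilde\partial_x$, and that step is the content of Corollary~\ref{Todaopenpri}, not of the theorem itself. Apart from this conflation of the two statements, the argument is correct.
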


By applying a similar approach to the proof of Corollary \ref{cKPopenpri}, we derive the explicit form of the principal hierarchy for the flat F-manifold $M^{Toda}\times \mathbb{C}$ as follows:
		\begin{align*}
			&\frac{\p}{\p \tilde{T}^{\bullet,p}}=(0,d_{x}\lambda(z)\cdot\frac{\lambda(z)^{p}}{p!});\\
			&\frac{\p}{\p\tilde{T}^{t_{0,n_{0}-j},p}}=(\frac{\p}{\p T^{t_{0,j},p}},(d\theta_{t_{0,j},p}|_{z=e^{s}}\cdot d_{x}\lambda(z))_{+}\p_{s});\\
			&\frac{\p}{\p\tilde{T}^{t_{i,n_{i}-j},p}}=(\frac{\p}{\p T^{t_{i,j},p}},-(d\theta_{t_{i,j},p}|_{z=e^{s}}\cdot d_{x}\lambda(z))_{-}\p_{s}),\quad i=1,\cdots,m,\\
		\end{align*}
		where
		$$
		d_{x}\lambda(z)=\p_{x}\lambda(z)+e^{s}\lambda'(z)\frac{\p s}{\p x}.
		$$

\bibliography{mybib}
\bibliographystyle{unsrt}

\end{document}